\newtheorem{theorem}{Theorem}[section]
\newtheorem{lemma}[theorem]{Lemma}
\newtheorem{cor}[theorem]{Corollary}
\newtheorem{prop}[theorem]{Proposition}
\theoremstyle{definition}
\newtheorem{defn}[theorem]{Definition}
\newtheorem{example}[theorem]{Example}
\theoremstyle{remark}
\newtheorem{remark}[theorem]{Remark}
\numberwithin{equation}{section}
\DeclareMathAlphabet      {\mathbfit}{OML}{cmm}{b}{it}
\let\bm=\mathbfit
\let\text=\mbox
\renewcommand{\a}{\alpha}
\renewcommand{\b}{\beta}
\renewcommand{\d}{\delta}
\newcommand{\g}{\lambda}
\renewcommand{\o}{\omega}
\newcommand{\q}{\quad}
\newcommand{\s}{\sigma}
\newcommand{\M}{{\cal M}}
\newcommand{\ty}{\infty}
\newcommand{\e}{\varepsilon}
\newcommand{\f}{\varphi}
\newcommand{\ov}[1]{\overline{#1}}
\newcommand{\D}{\mathrm d}
\renewcommand{\O}{\Omega}
\newcommand{\pa}{\partial}
\newcommand{\I}{\mathbbm{i}}
\newcommand{\E}{\mathrm e}
\newcommand{\st}{\subset}
\newcommand{\stq}{\subseteq}
\newcommand{\udesno}[1]{\unskip\nobreak\hfil\penalty50\hskip1em\hbox{}
             \nobreak\hfil{#1\unskip\ignorespaces}
                 \parfillskip=\z@ \finalhyphendemerits=\z@\par
                 \parfillskip=0pt plus 1fil}
\newcommand{\eR}{\mathbb{R}}
\newcommand{\eN}{\mathbb{N}}
\newcommand{\Ze}{\mathbb{Z}}
\newcommand{\Qu}{\mathbb{Q}}
\newcommand{\Ce}{\mathbb{C}}
\newcommand{\re}{\mathop{\mathrm{Re}}}
\newcommand{\im}{\mathop{\mathrm{Im}}}
\newcommand{\po}{{\mathop{\mathcal P}}}
\newcommand{\res}{\operatorname{res}}
\newcommand{\sideremark}[1]{\ifvmode\leavevmode\fi\vadjust{\vbox to0pt{\vss 
      \hbox to 0pt{\hskip\hsize\hskip1em           
 \vbox{\hsize2cm\tiny\raggedright\pretolerance10000
 \noindent #1\hfill}\hss}\vbox to8pt{\vfil}\vss}}}%
\journal{ArXiv}
\begin{document}

\begin{frontmatter}


\title{Distance and tube zeta functions of 
fractals and arbitrary compact sets}%



\author[label1]{M.\ L.\ Lapidus\corref{cor1}}
\address[label1]{University of California, Department of Mathematics, 900 University Ave., Riverside, California 92521-0135,	 USA}
\ead{lapidus@math.ucr.edu}



\author[label2]{G.\ Radunovi\'{c}\corref{cor2}\fnref{cor2a}}

\ead{goran.radunovic@fer.hr}


\author[label3]{D.\ \v Zubrini\'{c}\corref{cor3}}
\ead{darko.zubrinic@fer.hr}

\address[label2,label3]{University of Zagreb, Faculty of Electrical Engineering and Computing, Department of Applied Mathematics, Unska 3, 10000 Zagreb, Croatia}

\fntext[cor2a]{Corresponding author}

\cortext[cor1]{The research of Michel L.~Lapidus was partially supported by the U.S.\ National Science
Foundation under grants ~DMS-0707524 and DMS-1107750, as well as by the Institut des Hautes \' Etudes Scientifiques (IH\' ES) where the first author was a visiting professor in the Spring of 2012 while part of this research was completed.}

\cortext[cor2,cor3]{The research of Goran Radunovi\'c and Darko \v Zubrini\'c was supported in part by the Croatian Science Foundation under the project IP-2014-09-2285 and by the Franco-Croatian 
PHC-COGITO project.}






\begin{keyword}zeta function, distance zeta function, tube zeta function, fractal set, fractal string, box dimension, complex dimensions, principal complex dimensions, Minkowski content, Minkowski measurable set, residue, Dirichlet-type integral, transcendentally quasiperiodic set, fractality and complex dimensions.
\end{keyword}

\begin{abstract}
Recently, the first author has extended the definition of the zeta function associated with fractal
strings to arbitrary bounded subsets $A$ of the $N$-dimensional Euclidean space ${\mathbb R}^N$, for any integer $N\ge1$.
It is defined by the Lebesgue integral $\zeta_A(s)=\int_{A_{\delta}}d(x,A)^{s-N}\D x$, for all $s\in\Ce$ with $\operatorname{Re}\,s$ sufficiently large, and we call it the {\em distance zeta function} of $A$. 
Here, $d(x,A)$\label{d(x,A)} denotes the 
Euclidean distance from $x$ to $A$ and $A_{\delta}$ is the $\delta$-neigh\-bor\-hood of~$A$, where $\d$ is a fixed positive real number.
We prove that the abscissa of absolute convergence of $\zeta_A$ is equal to $\overline\dim_BA$, 
 the upper box (or Minkowski) dimension of~$A$. 
Particular attention is payed to
the principal complex dimensions of $A$, defined as the set of poles of $\zeta_A$ located on
the critical line $\{\mathop{\mathrm{Re}} s=\overline\dim_BA\}$, provided $\zeta_A$ possesses a meromorphic extension to a neighborhood of the critical line. We also introduce a new, closely related zeta function, $\tilde\zeta_A(s)=\int_0^\d t^{s-N-1}|A_t|\,\D t$, called the {\em tube zeta function} of $A$. Assuming that $A$ is Minkowski measurable, we show that, under some mild conditions,
the residue of $\tilde\zeta_A$ computed at $D=\dim_BA$ (the box dimension of $A$), is equal to
 the Minkowski content of $A$. More generally, without assuming that $A$ is Minkowski measurable, we show that the residue is squeezed between the lower and upper Minkowski contents of $A$. 
 We also introduce {\em transcendentally quasiperiodic sets}, and construct a class of such sets, using generalized Cantor sets, along with Baker's theorem from
the theory of transcendental numbers. 
\end{abstract}


\end{frontmatter}

\tableofcontents

\section{Introduction}\label{intro}




 In this article, we provide a far-reaching extension of the theory of zeta functions for fractal strings,
to arbitrary fractal sets 
in Euclidean spaces of any dimension. Fractal strings have been introduced by the first author (M.\ L.\ Lapidus) in the early 1990s. The related theory of zeta functions of fractal strings and their complex dimensions, developed in the course of the last two decades of
active research, is presented in an extensive monograph
of the first author with M.\ van Frankenhuijsen \cite{lapidusfrank12}. 

The new zeta function $\zeta_A$, associated with any fractal set $A$ in $\eR^N$, has been introduced
in 2009 by the first author, and its definition can be found in Equation (\ref{z}) below. We refer to it as the {\em distance zeta function} of $A$. 
Here, by a fractal set,
we mean any  bounded set $A$ of the Euclidean space $\eR^N$, with $N\geq 1$. The reason is that, in this paper, the 
key role is played by a certain notion of fractal dimension, more specifically, by the upper
box dimension of a bounded set (also called the upper Minkowski dimension, Bouligand dimension, or limit capacity, etc.). This new class of zeta functions enables us to obtain a
nontrivial extension of the theory of {\em complex dimensions of fractal strings}, to arbitrary bounded fractal sets in Euclidean spaces of any dimension.

A systematic study of the zeta functions associated with fractal strings and fractal sprays was motivated and undertaken, in particular, in the 1990s in papers of the first author, [Lap1--3], as well as in joint papers of the first author with C.~Pomerance
[{LapPo1--2}] and with H.\ Maier \cite{LapMa2}. 
In a series of papers, as well as in two monographs with M.\ van Frankenhuijsen [{Lap-vFr1--2}], 
and in the book \cite{lapz}, it has grown into a well-established theory of fractal complex dimensions, and is still an active area of research, with applications to a variety of subjects, including spectral theory, harmonic analysis, number theory, dynamical systems, probability theory and mathematical physics. We also draw the reader's attention to 
\cite{DubSep},  
[Es1--2], 
[EsLi1--2], 
\cite{fal2},
\cite{hamlap}, 
\cite{lapidushe}, 
\cite{HerLa1}, 
\cite{Kom},  
\cite{LaLeRo}, 
\cite{LaLu}, 
\cite{lappe2}, 
\cite{lappewi1}, 
[LapRa\v Zu1--8], 
\cite{lemen}, 
\cite{MorSep}, 
[MorSepVi1--2], 
[Ol1--2],
\cite{winter}, 
[Tep1--2], 
along with the many relevant references therein.

Other, very different  approaches to a higher-dimensional theory of some special classes of fractal sets, namely, fractal sprays and self-similar tilings, were developed by the first author and E.\ Pearse in \cite{lappe2}, as well as by the first author, E.~Pearse and S.~Winter
in \cite{lappewi1} 
via fractal tube formulas and the associated scaling and tubular zeta functions. 
(See also \cite{pe2} and \cite{pewi}.)

The definitions of the tubular zeta functions introduced in \cite{lappe2}
and \cite{lappewi1} differ considerably from those studied in this article. 
The precise connection between these zeta functions and the fractal zeta functions introduced in this paper is provided in \cite{cras2}.
We point out that by using the fractal zeta functions introduced in this paper, it is possible to generalize the fractal tube formulas and a Minkowski measurability criterion obtained for fractal strings in \cite{lapidusfrank12} to arbitrary compact sets in Euclidean spaces; see [LapRa\v Zu4--5]. We also refer to \cite{mm} for a key use of these fractal tube formulas to obtain a Minkowski measurability criterion, expressed in terms of the nonexistence of nonreal (principal) complex dimensions and generalizing to any dimension its counterpart established for fractal strings in \cite[Chapter~8]{lapidusfrank12}.
\medskip

\subsection{Contents}\label{contents}
The rest of this paper is organized as follows:

In  Section \ref{ch_distance}, the distance zeta function $\zeta_A$ of a bounded set $A\st\eR^N$ is introduced in Definition \ref{z}. Then, the main result of Section \ref{ch_distance} is obtained in Theorem \ref{an}, in which it is shown (among other things) that the abscissa of (absolute) convergence of the distance zeta function $\zeta_A$ of any bounded subset $A$ of $\eR^N$ is equal to $\ov\dim_BA$, the upper box dimension (or the upper Minkowski dimension) of $A$. (All of the subsets  denoted by $A$ appearing in this paper are implicitly assumed to be nonempty.) As a useful technical tool in the study of fractal zeta functions, we introduce the notion of `equivalence' between tamed Dirichlet-type integrals (see Definition \ref{equ}). We also define the set of `principal complex dimensions' of $A$, denoted by $\dim_{PC}A$ (see Definition \ref{dimc}), as a refinement of the notion of the upper box dimension of $A$. 
Moreover, in the one-dimensional case (i.e., in the case of a bounded fractal string $\mathcal L$), we show that $\zeta_A$, the distance zeta function of $A$ (the boundary of the string $\mathcal L$), and $\zeta_{\mathcal L}$, the geometric zeta function of $\mathcal L$, contain essentially the same information. In particular, $\zeta_A$ and $\zeta_{\mathcal L}$ are equivalent in the above sense, and hence, have the same principal complex dimensions (see Subsections \ref{zeta_s} and \ref{eqzf});
they also have the same (visible) complex dimensions (with the same multiplicities) in every domain of $\Ce\setminus\{0\}$ to which one (and hence, both) of these zeta functions can be meromorphically continued.
Finally, we show that the distance zeta function has a nice and very useful scaling property; see Proposition \ref{scalingd}.

In Section \ref{residues_m}, we introduce the so-called `tube zeta function' $\tilde\zeta_A$ of the bounded set $A$ (which is closely related to the distance zeta function $\zeta_A$; see Theorem \ref{equr} and the associated functional equation \eqref{equ_tilde}), and study its properties; see, in particular, Definition \ref{zeta_tilde} in Subsection \ref{residues_m_tube}. Under suitable natural conditions, we show that the residue of the tube zeta function $\tilde\zeta_A$, computed at $D=\dim_BA$ (assuming that the box dimensions exists), always lies between the lower and upper ($D$-dimensional) Minkowski contents of $A$; see Theorem \ref{pole1mink_tilde}. In particular, if $A$ is Minkowski measurable, then the residue of $\tilde\zeta_A$ at $D$ coincides with the Minkowski content of $A$. Similar results are obtained for the distance zeta function $\zeta_A$ of the fractal set $A$; see Theorem \ref{pole1}. 
In fact, we also show that $\zeta_A$ and $\tilde\zeta_A$, the distance and tube zeta functions of $A$, contain essentially the same information.
These results are illustrated by means of several examples, including
a class of generalized Cantor sets (Examples \ref{res-cantor}, \ref{res-cantor2} and \ref{Cmae}), the $(N-1)$-dimensional sphere in $\eR^N$ (see Example~\ref{sphere}),  $a$-strings (Example \ref{a-string2}), as well as `fractal grills' introduced in Subsection \ref{dtx}; see Theorem~\ref{Axm}.  

In Section \ref{quasi0}, we introduce a class of `$n$-quasiperiodic sets' (Definition \ref{quasiperiodic}). The main result is stated in Theorem \ref{quasi1}, which can be considered as a fractal set-theoretic interpretation of Baker's theorem (Theorem \ref{baker0}) from transcendental number theory
and in which we construct a family of transcendentally $n$-quasiperiodic sets, for any integer $n\ge2$.
An important role in the construction of quasiperiodic sets is played by the class of generalized Cantor sets $C^{(m,a)}$ depending on two parameters, introduced in Definition \ref{Cma}.
Moreover, in Subsection \ref{hyperfractal}, we close the main part of this paper by connecting the present work to future extensions (notably, the construction of transcendentally $\ty$-quasiperiodic sets), the notion of hyperfractal (and even, maximally hyperfractal) set, and more broadly, the notion of fractality within the context of this new general theory of complex dimensions. In short, much as in \cite{lapidusfrank12}, we say that a bounded subset $A\st\eR^N$ is {\em fractal} if its associated zeta function (i.e., the distance or the tube zeta function, $\zeta_A$ or $\tilde\zeta_A$, of $A$ or when $N=1$, the geometric zeta function $\zeta_{\mathcal L}$, where $\mathcal L$ is the fractal string associated with $A$) has at least one nonreal complex dimension or else has a natural boundary beyond which it cannot be meromorphically continued (i.e., $A$ is ``hyperfractal'').
Observe that, unlike in the one-dimensional theory of complex dimensions developed in \cite{lapidusfrank12}, we now have at our disposal precise definitions of the fractal zeta functions of arbitrary bounded subsets of $\eR^N$ and hence, of the complex dimensions of those sets (i.e., of the poles of these fractal zeta functions); see Definition \ref{1.331/2} and the beginning of Subsection \ref{residues_m_tube}. The complex dimensions of a variety of classic and less well-known fractals will be computed in subsequent work, [LapRa\v Zu1--8]. 

The aim of Appendix A is to introduce the class of `extended Dirichlet-type integrals' (or functions), i.e., of EDTIs, which contains all of the fractal zeta functions studied in the present paper; see Definition A.1.  We study some of the key properties of EDTIs and introduce two closely related (but distinct) notions of equivalence; see Definitions A.2 and~A.6.

\bigskip

\subsection{Notation}\label{notation} 
Throughout this paper, we shall use the following notation.
By $|E|=|E|_N$,
 we denote the $N$-dimensional Lebesgue measure of a measurable 
subset $E$ of $\eR^N$.
Given $r\ge0$, the {\em lower and upper $r$-dimensional Minkowski contents} $\M^{*r}(A)$ and $\M_*^r(A)$ of a bounded subset $A$ of $\eR^N$ are defined by 
\begin{equation}\label{mink}
\M_*^r(A)=\liminf_{t\to 0^+}\frac{|A_t|}{t^{N-r}}, \q \M^{*r}(A)=\limsup_{t\to 0^+}\frac{|A_t|}{t^{N-r}}.
\end{equation}
Here,  $A_t:=\{x\in\eR^N:d(x,A)<t\}$ denotes the $t$-{\em neighborhood} (or {\em tubular neighborhood of radius} $t$) of $A$, and $d(x,A)$ is the
Euclidean distance from $x$ to $A$.
The function $t\mapsto|A_t|$, defined for $t$ positive and close to $0$, is called the {\em tube function} associated with $A$.
From our point of view, one of the basic tasks of fractal analysis is to understand the nature of the tube functions for various fractal sets.
The above definition coincides with Federer's definition (in \cite{federer}),
 up to a (positive) multiplicative constant depending only on $N$ and $r$, the value of which is not important for the purposes of this article.

The {\em upper box dimension} of $A$ is defined by
\begin{equation}\label{dim}
\ov\dim_BA=\inf\{r\ge0:\M^{*r}(A)=0\};
\end{equation}
it is easy to see that we also have
\begin{equation}\label{Mty}
\ov\dim_BA=\sup\{r\ge0:\M^{*r}(A)=+\ty\}.
\end{equation}
The lower box dimension of $A$, denoted by $\underline \dim_BA$,  is defined analogously, with $\M_*^r(A)$ instead of  $\M^{*r}(A)$ on the right-hand side of  (\ref{dim}) and (\ref{Mty}). 
Clearly, since $A$ is bounded, we always have $0\le\underline\dim_BA\le\ov\dim_BA\le N$.
If both 
$\ov\dim_BA$ and $\underline\dim_BA$
coincide, their common value is denoted by $\dim_BA$ and is called the {\em box dimension} of $A$ (or {\em Minkowski--Bouligand dimension}, or else,  {\em Minkowski dimension}\label{MinkDim}).
Various properties of the box dimension can be found, e.g., in \cite{falc}, \cite{mattila}, \cite{tricot} and \cite{lapidusfrank12}.

If there exists a nonnegative real number $D$ such that 
$$
0<\M_*^D(A)\le\M^{*D}(A)<\ty,
$$
we say that $A$ is {\em Minkowski nondegenerate}.
If $A$ is nondegenerate, it then follows that $\dim_BA$ exists and is equal to~$D$.
 If $\M_*^D(A)=\M^{*D}(A)$, their common value is denoted by $\M^D(A)$ and called the {\em Minkowski content} of $A$.
 If, in addition, 
$$
\M^D(A)\in(0,+\ty),
$$ 
then $A$ is said to be {\em Minkowski measurable}.\label{Minkowski_measurable}
The notion of Minkowski measurability seems to have been introduced by Hadwiger in \cite{hadwiger} and was later used by Federer in \cite{federer}, as well as by  Stach\'o \cite{stacho} (inspired by \cite{federer}), and in many other works, including 
\cite{BroCar}, 
\cite{Lap1} 
\cite{lapiduspom},  
\cite{fal2},
\cite{tricot},  
\cite{mink}, 
\cite{rae}, 
\cite{Kom} and
\cite{KomPeWi}. 
The notion of Minkowski nondegeneracy has been introduced in \cite{rae} (and was studied earlier in \cite{lapiduspom} and [Lap-vFr1--2] when $N=1$; see also \cite{LapPo2} when $N\ge3$). The notion of Minkowski (or box) dimension was introduced by Bouligand in~\cite{bouligand}. Throughout this paper, we will assume implicitly that the bounded set $A\st\eR^N$ is nonempty.



\medskip

We note that since $|A_t|=|(\ov A)_t|$ for every $t>0$, the values of $\M_*^r(A)$, $\M^{*r}(A)$, $\underline\dim_BA$, $\ov\dim_BA$ (as well as of $\M^D(A)$ and $\dim_BA$, when they exist) do not change when we replace the bounded set $A\st\eR^N$ by its closure $\ov A$ in $\eR^N$.
Therefore, throughout this paper, we might as well assume a priori that $A$ is an arbitrary (nonempty) compact subset of $\eR^N$.
Observe that, as is well known, this is in sharp contrast with the Hausdorff dimension (and associated Hausdorff measure $\mathcal{H}_H$); see, e.g., \cite{falc}. For example, if $A=\{1/j:j\in\eN\}$, then (since $A$ is countable), $\dim_HA=0$ and $\mathcal{H}_H(A)=0$, while $D:=\dim_BA=1/2$ and $\mathcal{M}^D(A)=2\sqrt2$; see \cite[Example 5.1]{Lap1}.


Finally, given an extended real number $\a\in\eR\cup\{\pm\ty\}$, we denote by $\{\re s>\a\}$ the open right half-plane $\{s\in\Ce:\re s>\a\}$
(which coincides with  $\Ce$ or $\emptyset$ if $\a=-\ty$ or $+\ty$, respectively). Furthermore, if $\a\in\eR$, we denote by $\{\re s=\a\}$ the vertical line $\{s\in\Ce:\re s=\a\}$. Also, we let $\I:=\sqrt{-1}$.

\section{Distance and tube zeta functions of fractal sets}\label{ch_distance}

In this section, we introduce and study a new fractal zeta function, namely, the distance zeta function attached to an arbitrary bounded subset of $\eR^N$, for any $N\ge1$; see Subection~\ref{properties_definition} and Subsection~\ref{properties_analyticity}. In Subsection~\ref{properties_zeta}, we then consider the special case when $N=1$ and compare this new fractal zeta function with the known geometric zeta function of a fractal string. Finally, in Subsection \ref{eqzf}, we introduce a suitable equivalence relation which enables us to capture some of the main features of fractal zeta functions.



\subsection{Definition of the distance zeta functions of fractal sets}\label{properties_definition} 
We study here some basic properties of the distance zeta function
$\zeta_A=\zeta_A(s)$ associated with an arbitrary bounded subset $A$ of $\eR^N$, and introduced by the first author in 2009. 

\begin{defn}\label{defn}  Let $\delta$ be any given positive number.
The {\em distance zeta function} $\zeta_A$ of a bounded subset $A$ of $\eR^N$ is defined by
\begin{equation}\label{z}
\zeta_A(s):=\int_{A_\delta}d(x,A)^{s-N}\D x.
\end{equation}
Here, the integral is taken in the sense of Lebesgue (hence, the complex-valued function $d(\,\cdot\,,A)^{s-N}$ is absolutely integrable on $A_\d$) and we assume that $s\in\Ce$ is such that $\re s$ is sufficiently large.
\end{defn}

As we shall see in Theorem \ref{an}, 
the Lebesgue integral in (\ref{z}) is well defined if $\re s$ is larger than $\ov\dim_BA$, the upper box dimension of $A$; furthermore, $\ov\dim_BA=D(\zeta_A)$, the {\em abscissa of $($absolute$)$ convergence} of $\zeta_A$. Moreover,
under the additional hypotheses of Theorem \ref{an}(c), $\ov\dim_BA$ also coincides with $D_{\rm hol}(\zeta_A)$, the {\em abscissa of holomorphic continuation} of $\zeta_A$. Here, by definition,
\begin{equation}\label{DzetaA}
D(\zeta_A):=\inf\left\{\a\in\eR:\int_{A_\d}d(x,A)^{\a-N}\D x<\ty\right\}
\end{equation}
while
\begin{equation}\label{Dhol}
D_{\rm hol}(\zeta_A):=\inf\big\{\a\in\eR:\mbox{$\zeta_A$ is holomorphic on $\{\re s>\a\}$}\,\big\}.
\end{equation}
Hence, the {\em half-plane of $($absolute$)$ convergence} of $\zeta_A$, $\Pi(\zeta_A):=\{\re s>D(\zeta_A)\}$ (resp., the {\em half-plane of holomorphic continuation} of $\zeta_A$, $\mathcal{H}(\zeta_A):=\{\re s>D_{\rm hol}(\zeta_A)\}$)
is the largest open half-plane of the form $\{\re s>\a\}$, for some $\a\in\eR\cup\{\pm\ty\}$, on which the Lebesgue integral $\int_{A_\d}d(x,A)^{s-N}\D x$ is convergent or, equivalently, absolutely convergent (resp., to which $\zeta_A$ can be holomorphically continued). It will follow from our results that $D(\zeta_A)\in[0,N]$ while $D_{\rm hol}(\zeta_A)\in[-\ty,D(\zeta_A)]$, and that both $D(\zeta_A)$ and $D_{\rm hol}(\zeta_A)$ are independent of the choice of $\d>0$; 
see Proposition \ref{O} along with Definition \ref{equ}.

Again, the same comment can be made about $D(\tilde\zeta_A)$ and $D_{\rm hol}(\tilde\zeta_A)$, given exactly as in \eqref{DzetaA} and \eqref{Dhol}, respectively, except for $\zeta_A$ replaced by $\tilde\zeta_A$ (the tube zeta function of $A$, see Definition \ref{zeta_tilde}). Actually, if $\ov\dim_BA<N$, then $D(\zeta_A)=D(\tilde\zeta_A)$ and $D_{\rm hol}(\zeta_A)=D_{\rm hol}(\tilde\zeta_A)$; see Corollary \ref{equr_c}.

Given any meromorphic function $f$, the {\em abscissa of holomorphic continuation} of $f$, denoted by $D_{\rm hol}(f)$, can be defined in exactly the same way as $D_{\rm hol}(\zeta_A)$, except with $\zeta_A$ replaced by $f$ in the counterpart of \eqref{Dhol}. The same comment is not true for $D(f)$, which may not make sense unless $f$ is given by a Dirichlet-type integral (DTI); see  Subsection \ref{eqzf} and Appendix A below.

As will be shown in Proposition \ref{O}, the dependence of $\zeta_A$ on the choice of $\delta$ is inessential, since the difference of two distance zeta functions corresponding to the same set $A$ and different values of $\d$ 
can be identified with an entire function. Note that without loss of generality (in fact, simply by replacing $A$ by its closure), we could assume that $A$ is an arbitrary (nonempty) compact subset of ${\mathbb R}^N$. Similar comments could be made about the tube zeta functions introduced in Definition \ref{zeta_tilde} below.


\subsection{Analyticity of the distance zeta functions}\label{properties_analyticity} The main result of this section is stated in Theorem~\ref{an}. It shows that the zeta function $\zeta_A$ is analytic (i.e., holomorphic) in the half-plane $\{\re s>\ov\dim_BA\}$, and that (under the mild hypotheses of part $(c)$ of Theorem \ref{an}) the lower bound is optimal.
In other words, the {\em abscissa of absolute convergence} $D(\zeta_A)$ of the Dirichlet-type integral defined by the right-hand side of (\ref{z}) is always equal to the upper box dimension of~$A$ and, under the additional hypotheses of Theorem \ref{an}(c), it also coincides with the abscissa of holomorphic continuation $D_{\rm hol}(\zeta_A)$.

In order to prove Theorem \ref{an}, we shall need a result due to Harvey\label{harvey} and Polking\label{polking} (see
\cite[p.\ 42]{acta}), 
obtained in order to study the singularities of the solutions of certain linear partial differential equations, and
which we now formulate in a different, but equivalent way: 
\begin{equation}\label{int}
\mbox{If\q $\gamma\in(-\ty,N-\ov\dim_BA)$,\q then\q $\int_{A_\delta}d(x,A)^{-\gamma}\D x<\ty$,}
\end{equation}
where $\delta$ is an arbitrary positive number. 
This result and its various extensions is discussed in \cite[Sections 3 and 4]{rae}.
For the sake of completeness, we provide an extension of (\ref{int}), which we shall need later on. We omit the proofs of the following two lemmas.
They can be obtained by using, e.g., the identity $\int_{\eR^N}f(x)^\a\D x=\a\int_0^{+\ty} t^{\a-1}|\{f>t\}|\,\D t$, where $f:\eR^N\to[0,+\ty]$ is a Lebesgue measurable function and $\a\in(0,+\ty)$ (see \cite[p.\ 198]{folland}), and by using the definition of the upper box dimension $\ov\dim_BA$ given in \eqref{dim} and \eqref{Mty} above.

\begin{lemma}\label{identity0}
Let $A$ be a bounded subset of $\eR^N$, $\d>0$ and $\gamma\in(-\ty,N-\ov\dim_BA)$. Then
\begin{equation}\label{identity}
\int_{A_\delta}d(x,A)^{-\gamma}\,\D x=\delta^{-\gamma}|A_\delta|+\gamma\int_0^\delta t^{-\gamma-1}|A_t|\,\D t.
\end{equation}
Furthermore, both of the integrals appearing in~\eqref{identity} are finite; hence, they are convergent Lebesgue integrals.
\end{lemma}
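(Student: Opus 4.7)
My plan is to derive the identity by an elementary Tonelli argument, with the finiteness statement coming for free from the Harvey--Polking estimate~\eqref{int} cited just above.

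The starting point is the pointwise identity
\begin{equation*}
a^{-\gamma}=\delta^{-\gamma}+\gamma\int_a^{\delta} t^{-\gamma-1}\,\D t\qquad(0<a<\delta,\ \gamma\in\eR),
\end{equation*}
obtained via the fundamental theorem of calculus from the antiderivative $t\mapsto -t^{-\gamma}/\gamma$ of $t^{-\gamma-1}$ (the case $\gamma=0$ being trivial). I would specialize it to $a=d(x,A)$ for $x\in A_\delta\setminus\ov A$ and integrate the result over $A_\delta$. Since $t^{-\gamma-1}$ is positive on $(0,\delta)$, Tonelli's theorem applies to the resulting double integral, and exchanging the order of integration yields
\begin{equation*}
\int_{A_\delta}d(x,A)^{-\gamma}\,\D x
=\delta^{-\gamma}|A_\delta|+\gamma\int_0^{\delta}t^{-\gamma-1}\,\bigl|\{x\in A_\delta:d(x,A)<t\}\bigr|\,\D t.
\end{equation*}
Because $A_t\st A_\delta$ for every $t\in(0,\delta)$, the inner measure equals $|A_t|$, which is exactly the asserted equality~\eqref{identity}.

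Finiteness is then automatic: the Harvey--Polking estimate~\eqref{int} guarantees that the left-hand side of~\eqref{identity} is finite under the hypothesis $\gamma<N-\ov\dim_BA$, and $\delta^{-\gamma}|A_\delta|<\ty$ because $A$ is bounded, so the remaining term on the right-hand side must be finite as well. The only delicate point, and the one I regard as the main obstacle to flag, is the reduction to $A_\delta\setminus\ov A$ before invoking Tonelli: for $\gamma\le 0$ the integrand is bounded on $\ov A$ and the issue evaporates, but for $\gamma>0$ the hypothesis $\gamma<N-\ov\dim_BA$ forces $\ov\dim_BA<N$, and one has to observe that $|\ov A|=0$. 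This follows because $|\ov A|>0$ would imply $\M^{*r}(A)=+\ty$ for every $r<N$ (as $|A_t|\ge|\ov A|$ for all $t>0$), hence $\ov\dim_BA=N$, a contradiction. Everything else is a textbook Tonelli swap.
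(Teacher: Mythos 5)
Your proof is correct, and it follows the route the paper gestures at but does not spell out: the lemma's proof is omitted in the paper, with only a pointer to the layer-cake identity $\int_{\eR^N}f(x)^\a\,\D x=\a\int_0^{+\ty}t^{\a-1}|\{f>t\}|\,\D t$ (Folland), which is itself a Tonelli statement. You apply Tonelli directly, starting from the pointwise FTC identity $a^{-\gamma}=\delta^{-\gamma}+\gamma\int_a^\delta t^{-\gamma-1}\,\D t$; this is the same idea, but it has the small advantage of handling all signs of $\gamma$ uniformly, whereas the cited identity as stated assumes $\a>0$ and would force a change of variables plus a case split on $\sgn\gamma$ to cover the present situation. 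Your observation that $\gamma>0$ together with $\gamma<N-\ov\dim_BA$ forces $\ov\dim_BA<N$ and hence $|\ov A|=0$ (because $|A_t|\ge|\ov A|$ would otherwise make $\M^{*r}(A)=+\ty$ for every $r<N$) is exactly the point that needs care before integrating a potentially singular integrand over $A_\delta$, and you resolve it correctly. The deduction of finiteness of the last integral from the identity together with the Harvey--Polking estimate~\eqref{int} is also fine; for $\gamma\le0$ it is even immediate since $t^{-\gamma-1}$ is integrable on $(0,\delta)$ and $|A_t|\le|A_\delta|$, so the only case where the identity is genuinely used for the finiteness claim is $\gamma>0$.
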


\begin{lemma}\label{optimal} Let $A$ be a bounded subset of $\eR^N$, $\delta>0 $ and $\gamma>N-\ov\dim_BA$. Then $\int_{A_\delta}d(x,A)^{-\gamma}\D x=+\ty$.
\end{lemma}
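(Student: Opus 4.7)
The plan is to bypass the layer-cake identity \eqref{identity} of Lemma \ref{identity0} (which in any case is only available for $\gamma<N-\ov\dim_BA$) and to establish the divergence directly from the geometric meaning of the tubular neighborhoods $A_t$. The key observation is elementary: for every $t\in(0,\delta]$ and every $x\in A_t$ one has $d(x,A)<t$, hence $d(x,A)^{-\gamma}>t^{-\gamma}$ (note that $\gamma>N-\ov\dim_BA\ge0$, so in particular $\gamma>0$). Since $A_t\stq A_\delta$, this yields
\[
\int_{A_\delta}d(x,A)^{-\gamma}\,\D x\;\ge\;\int_{A_t}d(x,A)^{-\gamma}\,\D x\;\ge\;t^{-\gamma}\,|A_t|.
\]
(If $|A|>0$ the left-hand side is trivially $+\ty$, so I may assume $|A|=0$, in which case $d(\,\cdot\,,A)>0$ almost everywhere on $A_\delta$ and the bound is unambiguous.) Taking the supremum over $t\in(0,\delta]$ reduces the lemma to showing
\[
\limsup_{t\to0^+}\frac{|A_t|}{t^\gamma}=+\ty.
\]

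This limsup is where the hypothesis $\gamma>N-\ov\dim_BA$ enters, and the natural split is $\gamma\le N$ versus $\gamma>N$. In the main case $\gamma\le N$, set $r:=N-\gamma\in[0,\ov\dim_BA)$; the limsup in question is then precisely $\M^{*r}(A)$ by \eqref{mink}, so it suffices to prove that $\M^{*r}(A)=+\ty$ whenever $r<\ov\dim_BA$. I would obtain this from a short monotonicity argument based on the identity $|A_t|/t^{N-r}=(|A_t|/t^{N-r'})\,t^{r'-r}$: if $\M^{*r'}(A)<\ty$ for some $r'$, then $\M^{*r}(A)=0$ for every $r>r'$, and combined with the supremum characterization \eqref{Mty} of $\ov\dim_BA$ this forces $\M^{*r}(A)=+\ty$ for every $r<\ov\dim_BA$. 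In the remaining case $\gamma>N$, the conclusion is automatic: fixing any $a\in A$ (the set is nonempty by convention), the inclusion $B(a,t)\st A_t$ yields $|A_t|\ge c_N t^N$, where $c_N$ is the volume of the unit ball in $\eR^N$, so that $|A_t|/t^\gamma\ge c_N t^{N-\gamma}\to+\ty$ as $t\to0^+$.

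Putting the two steps together gives $\int_{A_\delta}d(x,A)^{-\gamma}\,\D x\ge\sup_{t\in(0,\delta]}|A_t|/t^\gamma=+\ty$, which is the desired conclusion. The main---and only mildly subtle---obstacle is the monotonicity step that converts the supremum characterization \eqref{Mty} of $\ov\dim_BA$ into the pointwise statement $\M^{*r}(A)=+\ty$ for every $r<\ov\dim_BA$; once that is granted, the rest reduces to a one-line volume estimate on the tube $A_t$. Together with the Harvey--Polking bound \eqref{int} (or with Lemma \ref{identity0}), the present lemma is exactly what is needed to pin down $\ov\dim_BA$ as the abscissa of convergence $D(\zeta_A)$ in Theorem \ref{an}.
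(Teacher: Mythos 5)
Your proof is correct and takes a genuinely different --- and more elementary --- route than the one the paper hints at. The authors suggest obtaining the lemma via the layer-cake identity $\int_{\eR^N}f^\a\D x=\a\int_0^{+\ty} t^{\a-1}|\{f>t\}|\,\D t$, which converts the spatial integral into the tube-volume integral $\gamma\int_0^\delta t^{-\gamma-1}|A_t|\,\D t$ (as in Lemma~\ref{identity0}) and then exploits divergence of the latter. Your argument instead uses only the trivial pointwise bound $d(x,A)<t$ on $A_t$ to obtain $\int_{A_\delta}d(x,A)^{-\gamma}\,\D x\ge t^{-\gamma}|A_t|$ for every $t\in(0,\delta]$, and reads off divergence by letting $t\to 0^+$. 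Both routes bottom out in the same elementary fact --- that $\M^{*r}(A)=+\ty$ for every $r<\ov\dim_BA$, which you correctly deduce from the monotone jump of $r\mapsto\M^{*r}(A)$ together with the supremum characterization \eqref{Mty} --- but yours avoids the distributional reformulation entirely, at no extra cost. The split at $\gamma=N$ is also handled cleanly: the ball inclusion $B(a,t)\st A_t$ disposes of the regime $\gamma>N$ where $r:=N-\gamma$ would be negative.

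Two small corrections, neither affecting the substance. First, the monotonicity identity should read $|A_t|/t^{N-r}=\bigl(|A_t|/t^{N-r'}\bigr)\,t^{r-r'}$, not $t^{r'-r}$; with $r>r'$ this extra factor $t^{r-r'}\to 0$ as $t\to 0^+$, which is exactly what forces $\M^{*r}(A)=0$ once $\M^{*r'}(A)<\ty$. Second, the disclaimer about $|A|=0$ should refer to $|\ov A|=0$: it is $\ov A$, not $A$, on which $d(\,\cdot\,,A)$ vanishes, and $|A|=0$ does not imply $|\ov A|=0$. The point is moot, though, since when $|\ov A|>0$ the integrand is $+\ty$ on $\ov A\st A_\delta$, a set of positive measure, and the conclusion again holds trivially; your chain of inequalities is valid in all cases and the aside can simply be omitted.
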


\begin{remark}\label{rm2.5}
If $\gamma:=N-\ov\dim_BA$, then the conclusion of Lemma~\ref{optimal} does not hold, in general. Indeed, a class of counterexamples is provided in \cite[Theorem 4.3]{rae}. 
\end{remark}

\medskip



\medskip

In the sequel, we shall usually say more briefly that $D(\zeta_A)$ is the {\em abscissa of convergence} of $\zeta_A$, meaning the abscissa of Lebesgue (i.e., absolute) convergence of $\zeta_A$; see \eqref{DzetaA} and the comment following it. 

\begin{theorem}\label{an} Let $A$ be an arbitrary bounded subset of $\eR^N$ and let $\delta>0$. Then$:$

\bigskip

$(a)$ The zeta function $\zeta_A$ defined by \eqref{z} is holomorphic in the 
half-plane $\{\re s>\ov\dim_BA\}$,
and for all complex numbers $s$ in that region, we have
\begin{equation}\label{zeta'}
\zeta_A'(s)=\int_{A_\delta}d(x,A)^{s-N}\log d(x,A)\,\D x.
\end{equation}

\bigskip

$(b)$  We have 
\begin{equation}
\overline\dim_BA=D(\zeta_A), 
\end{equation}
where $D(\zeta_A)$ is the abscissa of Lebesgue $($i.e., absolute$)$ convergence of $\zeta_A$. 
Furthermore, in light of part $(a)$, we always have $D_{\rm hol}(\zeta_A)\le D(\zeta_A)$.


\bigskip
$(c)$ If the box $($or Minkowski$)$ dimension $D:=\dim_BA$ exists, $D<N$, and $\M_*^D(A)>0$, then 
 $\zeta_A(s)\to+\ty$ as $s\to D^+$, $s\in\eR$. In particular, in this case, we also have that 
\begin{equation}\label{dimDDhol}
\dim_BA=D(\zeta_A)=D_{\rm hol}(\zeta_A).
\end{equation}
\end{theorem}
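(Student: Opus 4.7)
For part $(a)$, the plan is to invoke the standard theorem on holomorphy of integrals depending on a complex parameter. I would fix an auxiliary $\sigma_0 > \ov\dim_B A$ and establish holomorphy on the half-plane $\{\re s > \sigma_0\}$; letting $\sigma_0$ decrease to $\ov\dim_B A$ then delivers the desired region. For every $x \in A_\delta$, the map $s \mapsto d(x,A)^{s-N}$ is entire, so what is needed is a local $L^1$-domination on compact subsets $K$ of $\{\re s > \sigma_0\}$. Splitting $A_\delta$ into the part where $d(x,A) \le 1$ and its complement (which has finite Lebesgue measure and on which the integrand is bounded in terms of $K$ and $\delta$), one obtains a dominating function of the form $C_K\,d(x,A)^{\sigma_0 - N} + C_K'$; the logarithmic factor relevant for \eqref{zeta'} is absorbed by shrinking $\sigma_0$ slightly and using $|\log d(x,A)| = O(d(x,A)^{-\varepsilon})$ as $d(x,A) \to 0^+$. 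The Harvey--Polking estimate \eqref{int}, applied with $\gamma = N - \sigma_0 < N - \ov\dim_B A$, makes this dominating function integrable on $A_\delta$, and the standard differentiation-under-the-integral-sign theorem yields both the holomorphy of $\zeta_A$ and the formula \eqref{zeta'}. Part $(b)$ is then immediate: $(a)$ gives $D(\zeta_A) \le \ov\dim_B A$, while for real $s < \ov\dim_B A$, Lemma \ref{optimal} applied with $\gamma = N - s > N - \ov\dim_B A$ forces divergence, so $D(\zeta_A) \ge \ov\dim_B A$; the inequality $D_{\rm hol}(\zeta_A) \le D(\zeta_A)$ is clear from the definitions.

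For part $(c)$, I would use Lemma \ref{identity0} with $\gamma = N - s$ to rewrite, for real $s > D = \dim_B A$,
\begin{equation*}
\zeta_A(s) = \delta^{s-N}|A_\delta| + (N-s)\int_0^\delta t^{s-N-1}|A_t|\,\D t.
\end{equation*}
Since $\M_*^D(A) > 0$, there exist $c > 0$ and $t_0 \in (0,\delta]$ with $|A_t| \ge c\,t^{N-D}$ for all $t \in (0,t_0)$. Substituting yields, for real $s > D$,
\begin{equation*}
\int_0^{t_0} t^{s-N-1}|A_t|\,\D t \ge c\int_0^{t_0} t^{s-D-1}\,\D t = \frac{c\,t_0^{\,s-D}}{s-D} \longrightarrow +\ty \quad \mbox{as } s \to D^+.
\end{equation*}
The remaining pieces---the boundary term $\delta^{s-N}|A_\delta|$ and the tail integral over $(t_0,\delta)$---stay bounded near $s = D$, and the prefactor $N - s$ remains strictly positive since $D < N$. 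Consequently $\zeta_A(s) \to +\ty$ as $s \to D^+$ along $\eR$. This precludes any holomorphic extension of $\zeta_A$ to a neighborhood of $s = D$, so $D_{\rm hol}(\zeta_A) \ge D$; combined with the bound $D_{\rm hol}(\zeta_A) \le D(\zeta_A) = D$ from $(b)$, the string of equalities \eqref{dimDDhol} follows.

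The heavy lifting in $(a)$ and $(b)$ is already packaged in the Harvey--Polking estimate \eqref{int} together with its companion Lemma \ref{optimal}, so modulo those results the remaining work is routine. The step I expect to require the most care is the $L^1$-domination in $(a)$: the dominating function has to be uniform on compact subsets of $\{\re s > \ov\dim_B A\}$, and the logarithmic factor that appears upon differentiation under the integral sign must be controlled near the set $A$ without eroding the Harvey--Polking threshold, typically through the slight shrinkage of $\sigma_0$ described above.
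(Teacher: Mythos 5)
Your proposal is correct. Parts $(b)$ and $(c)$ follow essentially the same route as the paper: part $(b)$ is sandwiched between Lemma~\ref{optimal} and the Harvey--Polking estimate~\eqref{int}, and part $(c)$ divides $\zeta_A(s)$ through the identity of Lemma~\ref{identity0}, then lower-bounds $|A_t|$ by $c\,t^{N-D}$ to force divergence of the tube integral as $s\to D^+$; the paper writes the argument in the variable $\gamma:=N-s$, but the substance is identical, and your localized lower bound on $(0,t_0)$ with a bounded tail is an equally valid way of packaging it. Where you diverge is in part $(a)$: the paper proves holomorphy by estimating the difference quotient directly, expanding $f(h):=(d^h-1)/h-\log d$ in a power series, bounding $|f(h)|\le\tfrac12|h|(\log d)^2 d^{-|h|}$, and then controlling the residual term $R(h)$ via~\eqref{int}; you instead invoke the general theorem on holomorphy of parameter-dependent integrals with an $L^1$-dominating function built from $d(x,A)^{\sigma_0-N}$ and a slight shrinkage of $\sigma_0$ to absorb the logarithm. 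That alternative is explicitly acknowledged by the authors in the remark immediately following the theorem (where the essential ingredient is the tameness of $\zeta_A$, i.e., $d(\cdot,A)\le\delta$ on $A_\delta$), so it is a legitimate and arguably cleaner substitute: the general-theorem route requires a bit of boilerplate about compact exhaustion of the half-plane, while the paper's direct estimate is more self-contained and at the same time produces the formula~\eqref{zeta'} for $\zeta_A'$ as a by-product of the same calculation.
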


\begin{proof}

\bigskip 
$(a)$ Denoting the right-hand side of (\ref{zeta'}) by $I(s)$,
and choosing any $s\in\Ce$ such that $\re s>\ov\dim_BA$, it suffices to show that
\begin{eqnarray}\label{der}
 R(h)&:=&\frac{\zeta_A(s+h)-\zeta_A(s)}h-I(s)\\
 &=&\int_{A_\delta}\left(
\frac{d(x,A)^h-1}h-\log d(x,A)
\right)\,d(x,A)^{s-N}\D x\nonumber
\end{eqnarray}
converges to zero as $h\to0$ in $\Ce$, with $h\ne0$. 

Let $d:=d(x,A)\in(0,\delta)$. Defining
\begin{equation}\label{fhdef}
f(h):=\frac{d^h-1}h-\log d=\frac1h(\E^{(\log d)h}-1)-\log d,
\end{equation}
and using the MacLaurin series $\E^z=\sum_{j\ge0}\frac{z^j}{j!}$, we obtain that
\begin{equation}\label{fh}
f(h)=h(\log d)^2\sum_{k=0}^\ty \frac1{(k+2)(k+1)}\cdot\frac{(\log d)^kh^k}{k!}.
\end{equation} 
Furthermore, assuming without loss of generality that $0<\delta\le1$, and hence $\log d\le0$, we have
\begin{eqnarray}
|f(h)|&\le&\frac12|h|\,(\log d)^2\sum_{k=0}^\ty\frac{(|\log d|\,|h|)^k}{k!}\nonumber\\
&=&\frac12|h|\,(\log d)^2\E^{-(\log d)|h|}=\frac12|h|\,(\log d)^2d^{-|h|}.\nonumber
\end{eqnarray}
Therefore,
\begin{equation}
|R(h)|\le\frac12|h|\int_{A_\delta}|\log d(x,A)|^2d(x,A)^{\re s-N-|h|}\D x.
\end{equation}
Let $\e>0$ be a sufficiently small number, to be specified below. Taking $h\in\Ce$ such that $|h|<\e$, since $\delta\le1$
and hence $d(x,A)\le1$ for all $x\in A_\d$,
we have
$$
|R(h)|\le\frac12|h|\int_{A_\delta}|\log d(x,A)|^2d(x,A)^\e d(x,A)^{\re s-N-2\e}\D x.
$$
Since there exists a positive constant $C=C(\delta,\e)$ such that $|\log d|^2d^\e\le C$ for all $d\in(0,\delta)$,
we see that
\begin{equation}\label{R}
|R(h)|\le \frac12C|h|\int_{A_\delta}d(x,A)^{\re s-N-2\e}\D x.
\end{equation}
Letting $\gamma:=2\e+N-\re s$, we see that the integrability condition $\gamma<N-\ov\dim_BA$ stated in (\ref{int})
is equivalent to $\re s>\ov\dim_BA+2\e$. Observe that this latter inequality holds for all positive $\e$ small enough, due to the assumption $\re s>\ov\dim_BA$. Hence, $R(h)\to0$ as $h\to0$ in $\Ce$, with $h\ne0$. This proves part $(a)$. 
\bigskip

$(b)$ Lemma~\ref{optimal} implies that for any real number $\alpha<D=\ov\dim_BA$, we have $\int_{A_\delta}d(x,A)^{\alpha-N}\, \D x=+\infty$. On the other hand, in light of estimate (\ref{int}), we know that $\zeta_A(\alpha)=\allowbreak \int_{A_\delta}d(x,A)^{\alpha-N}\,\D x<\infty$ for any $\alpha>D$.
We therefore deduce from the definition \eqref{DzetaA} of $D(\zeta_A)$ that $D(\zeta_A)=\ov\dim_BA$. This completes the proof of part $(b)$.

\bigskip

$(c)$ Condition $\M_*^D(A)>0$ implies that for any fixed $\delta>0$ 
there exists $C>0$
such that for all $t\in(0,\delta)$, we have $|A_t|\ge Ct^{N-D}$. Using (\ref{int}) and Lemma \ref{identity0}, we see that for any
$\gamma\in(0,N-D)$, 
\begin{eqnarray}
\ty&>&I(\gamma):=\int_{A_\delta}d(x,A)^{-\gamma}\D x=\delta^{-\gamma}|A_\delta|+\gamma\int_0^\delta t^{-\gamma-1}|A_t|\,dt\nonumber\\
&\ge&\gamma C\int_0^\delta t^{N-D-\gamma-1}\D t= \gamma C\frac{\delta^{N-D-\gamma}}{N-D-\gamma}.\nonumber
\end{eqnarray}
Therefore, if $\gamma\to N-D$ from the left, then $I(\gamma)\to+\ty$. Equivalently, if $s\in\eR$
is such that $s\to D^+$, then $\zeta_A(s)\to+\ty$. Hence, $\zeta_A$ has a singularity at $s=D$. Since, in light of part $(a)$,
we know that $\zeta_A$ is holomorphic for $\re s>D$, we deduce that $\{\re s>D\}$ is the maximal right half-plane to which $\zeta_A$
can be holomorphically continued; i.e., $\mathcal{H}(\zeta_A)=\{\re s>D\}$ and so $D_{\rm hol}(\zeta_A)=D$. Since, in light of part $(b)$ (and because $\dim_BA$ exists, according to the assumptions of part $(c)$),
$D:=\dim_BA=D(\zeta_A)$, we conclude that \eqref{dimDDhol} holds and hence, the proof of part $(c)$ is complete. This concludes the proof of the theorem.
\end{proof}

\begin{remark}
An alternative proof of part $(a)$ of Theorem \ref{an} can be given by using a well-known theorem concerning the holomorphicity of functions defined by integrals on $A_\d$ depending holomorphically on a parameter. In applying this theorem (see [LapRa\v Zu1] and the text of Definition \ref{abscissa_f} below) one needs to use the (obvious) fact according to which the function $x\mapsto d(x,A)$ is bounded from above (by $\d$); in other words, $\zeta_A$ (as defined by \eqref{z}) is a tamed DTI (in the sense of Definition \ref{abscissa_f} below).
\end{remark}

Next, we comment on some of the hypotheses and conclusions of Theorem~\ref{an}.

\begin{remark}
$(i)$ The condition $\M_*^D(A)>0$ in the hypotheses of Theorem
~\ref{an}(c) cannot be omitted.
Indeed, for $N=1$, there exists a class of subsets $A\st[0,1]$ such that $D=\dim_BA$ exists and $\M_*^D(A)=0$, while $\zeta_A(D)=\int_{A_\delta}d(x,A)^{D-N}\D x<\ty$; see \cite[Theorem 4.3]{rae}.

This class of bounded subsets of $\eR$ can be easily extended to $\eR^N$ for any $N\ge2$ by letting $B:=A\times[0,1]^{N-1}\st[0,1]^N$ and using the results of Subsection~\ref{dtx}.

\smallskip

$(ii)$ The inequality $D_{\rm hol}(\zeta_A)\le D(\zeta_A)$ is sharp. Indeed, there exist compact subsets of $\eR^N$ such that $D_{\rm hol}(\zeta_A)=D(\zeta_A)$.
For example, $A=C\times[0,1]^{N-1}$, where $C$ is the ternary Cantor set or, more generally, $C=\pa\O$ is the boundary of any (nontrivial) bounded fractal string $\O\st\eR$. 
(In that case, we have $D_{\rm hol}(\zeta_A)=D(\zeta_A)=\ov\dim_BA=N-1+\dim_B C$.)
This follows from Theorem \ref{1.2.31} in Subsection \ref{zeta_s} below and the comment following it.

\smallskip

$(iii)$ The assumptions of part $(c)$ of Theorem \ref{an} are satisfied by most fractals of interest to us. (One notable exception is the boundary $A$ of the Mandelbrot set (viewed as as a subset of $\eR^2\simeq\Ce$), for which $\dim_HA=2$ (and consequently, $\dim_BA=2$
 since $\dim_HA\le\dim_BA$), according to Shishikura's well-known theorem \cite{shishikura}.) We note that, on the other hand, there exists a bounded subset of $\eR^N$ not satisfying the hypotheses of part $(c)$ of Theorem \ref{an} and such that $D_{\rm hol}(\zeta_A)<D(\zeta_A)$. Indeed, an easy computation shows that, for example, for $N=1$ and $A=[0,1]$, we have that $D_{\rm hol}(\zeta_A)=0$ and $D(\zeta_A)=1$.
At present, however, we do not know whether there exist nontrivial subsets $A$ of $\eR$ (or, more generally, of $\eR^N$) for which $D_{\rm hol}(\zeta_A)<D(\zeta_A)$.
\end{remark}

\subsection{Zeta functions of fractal strings and of associated fractal sets}\label{zeta_s}\label{properties_zeta}
In Example \ref{L} below, we show that Definition \ref{defn} provides a natural extension of the zeta function associated with a (bounded) {\em fractal string} $\mathcal L=(\ell_j)_{j\ge1}$, where $(\ell_j)_{j\geq 1}$ is a nonincreasing sequence of positive numbers such that $\sum_{j=1}^\infty \ell_j<\ty$:
\begin{equation}\label{string}
\zeta_{\mathcal L}(s)=\sum_{j=1}^\infty \ell_j^s,
\end{equation}
for all $s\in\Ce$ with $\re s$ sufficiently large. Note that the sequence $(\ell_j)_{j\ge1}$ of positive numbers is assumed to be infinite.

The study of zeta functions of fractal strings arose naturally in the early 1990s in joint work of the first author\label{lapidus2} with Carl Pomerance 
[{LapPo1--2}] 
and with Helmut Maier\label{maier} \cite{LapMa2} 
 while investigating direct and inverse spectral problems associated with the vibrations of a fractal string. Such a zeta function, $\zeta_{\mathcal L}$, called the {\em geometric zeta function}\label{geometric_zf} of $\mathcal L$, has since then been studied in a number of references, including the monograph \cite{lapidusfrank12}..
(See the broader list of references given in the introduction.)

Recall that, geometrically, a (bounded) {\em fractal string}\label{fr_str} is a bounded open set $\Omega\subseteq\eR$. It can be uniquely written as a disjoint union of open intervals $I_j$ 
($\Omega=\cup_{j=1}^\infty I_j$) with lengths $\ell_j$ (i.e., $\ell_j=|I_j|$ for all $j\ge1$). Without loss of generality, one may assume that $(\ell_j)_{j\geq 1}$ is 
written in nonincreasing order and that $\ell_j\to 0$ as $j\to\infty$: $\ell_1\geq \ell_2\geq\cdots$. In order to avoid trivial special cases, we will assume implicitly throughout this paper that $\mathcal L$ is nontrivial; i.e., that $\mathcal L$ consists of an infinite sequence of lengths (or `scales') and hence, that $\O$ does not consist of a finite union of bounded open intervals. If $\mathcal L$ is trivial, then we must replace $D_{\rm hol}(\zeta_{\mathcal L})$ by $\max\{D_{\rm hol}(\zeta_{\mathcal{L}}),0\}$ in \eqref{2.201/2} of Theorem \ref{2.181/2} (since then, $D_{\rm hol}(\zeta_{\mathcal L})=-\ty$ and $D(\zeta_{\mathcal L})=\d_{\pa\O}\ge0$). 
From the point of view of fractal string theory, 
one may identify a fractal string with the sequence $\mathcal L$ of its lengths $($or {\em scales}$)$: $\mathcal L=(\ell_j)_{j\geq 1}$. The bounded open set $\O$ is then called a {\em geometric realization of $\mathcal{L}$}. Note that $|\O|=\sum_{j=1}^\ty \ell_j<\ty$, where $|\O|=|\O|_1$ denotes the $1$-dimensional Lebesgue measure (or length) of $\O$.

We now recall a basic property of $\zeta_{\mathcal L}$, first observed in \cite{Lap2}, using a key result of Besicovich\label{besicovich} and Taylor\label{taylor} \cite{BesTay}. (For a direct proof, see \cite[Theorem 1.10]{lapidusfrank12}.)

\begin{theorem}\label{2.181/2}
If $\mathcal L$ is a nontrivial bounded fractal string $($i.e., $\mathcal L=(\ell_j)_{j\geq 1}$ is an infinite sequence$)$, then the abscissa of convergence $D(\zeta_{\mathcal L})$ of $\zeta_{\mathcal L}$ coincides with the $($inner$)$ Minkowski dimension $\delta_{\partial\Omega}$ of $\partial\mathcal L=\partial\Omega:$
\begin{equation}\label{2.201/2}
D(\zeta_{\mathcal L})=D_{\rm hol}(\zeta_{\mathcal L})=\delta_{\partial\Omega}.
\end{equation}
\end{theorem}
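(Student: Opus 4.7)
The plan is to relate $\zeta_{\mathcal L}$ directly to the inner tube function
$V(t):=|\{x\in\O:d(x,\pa\O)<t\}|$ via a Mellin-type integral identity, and then invoke Landau's classical theorem on Dirichlet series with nonnegative coefficients to upgrade the resulting equality of convergence abscissas into an equality of holomorphy abscissas. The key starting observation is the explicit formula for $V$: since $\O=\bigsqcup_{j\ge 1}I_j$ with $|I_j|=\ell_j$, and since for $x\in I_j$ the distance $d(x,\pa\O)$ is just the distance to the endpoints of $I_j$, one finds
$$V(t)=\sum_{j\ge 1}\min(\ell_j,2t),\qquad t>0.$$

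Next, I would compute $\int_0^\ty\min(\ell_j,2t)\,t^{s-2}\,\D t$ for each $j\ge1$ by splitting at $t=\ell_j/2$; a short elementary calculation shows that for $\re s\in(0,1)$ this integral equals $\frac{2^{1-s}}{s(1-s)}\,\ell_j^s$, and summing over $j$ (Tonelli, the integrand being nonnegative) produces the identity
$$\zeta_{\mathcal L}(s)=\frac{s(1-s)}{2^{1-s}}\int_0^\ty V(t)\,t^{s-2}\,\D t,\qquad \re s\in(0,1).$$
From this identity I would extract $D(\zeta_{\mathcal L})=\d_{\pa\O}$ by direct comparison of growth rates. For $\sigma\in(0,1)$ with $\sigma>\d_{\pa\O}$, the bound $V(t)\le C_\sigma\,t^{1-\sigma}$ near $0$ makes the Mellin integral converge at $0$ (its convergence at infinity is automatic from $V(t)\le|\O|<\ty$ and $\sigma<1$), whence $\zeta_{\mathcal L}(\sigma)<\ty$; conversely, for $\sigma<\d_{\pa\O}$ the defining $\limsup$ condition combined with the monotonicity of $V$ drives the integral to $+\ty$. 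Since both $D(\zeta_{\mathcal L})$ and $\d_{\pa\O}$ lie a priori in $[0,1]$ (the former because $\sum_j\ell_j=|\O|<\ty$), this comparison on $(0,1)$ is enough.

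Finally, the equality $D_{\rm hol}(\zeta_{\mathcal L})=D(\zeta_{\mathcal L})$ I would deduce from Landau's theorem for generalized Dirichlet series with nonnegative coefficients: writing $\zeta_{\mathcal L}(s)=\sum_j\E^{-s\log(1/\ell_j)}$ exhibits $\zeta_{\mathcal L}$ as such a series, so its abscissa of convergence is automatically a singularity on the real axis, ruling out any holomorphic continuation past that point. The main obstacle I anticipate is the converse half of the abscissa comparison, $\sigma<\d_{\pa\O}\Rightarrow\zeta_{\mathcal L}(\sigma)=+\ty$: this requires turning a pointwise $\limsup$ on $V(t)$ into divergence of the Mellin integral, and I would handle it by exploiting the monotonicity $V(t)\ge V(t_n)$ for $t\ge t_n$ along any sequence $t_n\to 0^+$ with $V(t_n)\,t_n^{\sigma-1}$ blowing up; this provides enough mass near $0$ to force $\int_0^\d V(t)\,t^{\sigma-2}\,\D t$ to diverge.
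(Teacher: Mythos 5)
Your proposal is essentially correct, and it supplies a self-contained argument where the paper itself does not: the paper explicitly defers the hard equality $D(\zeta_{\mathcal L})=\d_{\pa\O}$ to Besicovitch--Taylor \cite{BesTay} and to \cite[Theorem~1.10]{lapidusfrank12}, and it only sketches the easier $D(\zeta_{\mathcal L})=D_{\rm hol}(\zeta_{\mathcal L})$, noting (via \cite[Section~VI.2.3]{serre}) that a Dirichlet series with nonnegative coefficients blows up as $s\to D^+$ along the reals; your closing Landau paragraph reproduces this sketch exactly. Your Mellin identity $\zeta_{\mathcal L}(s)=\frac{s(1-s)}{2^{1-s}}\int_0^\ty V(t)\,t^{s-2}\,\D t$ holds on the strip $0<\re s<1$ (elementary termwise integration of $\min(\ell_j,2t)$ plus Tonelli), and comparing abscissas on $(0,1)$ is indeed enough once one records, as you do, that both $D(\zeta_{\mathcal L})$ and $\d_{\pa\O}$ lie in $[0,1]$ a priori. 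The divergence half for $\sigma<\d_{\pa\O}$ works as written: $\M^{*\sigma}(\pa\O,\O)=+\ty$ supplies a sequence $t_n\to0^+$ with $V(t_n)\,t_n^{\sigma-1}\to+\ty$, and monotonicity of $V$ over $[t_n,2t_n]$ forces the integral to diverge. The convergence half for $\sigma>\d_{\pa\O}$ is stated too loosely, however: the bound $V(t)\le C_\sigma\,t^{1-\sigma}$ gives only $V(t)\,t^{\sigma-2}\le C_\sigma\,t^{-1}$, which is \emph{not} integrable near the origin. You need to pick an intermediate exponent $\sigma'\in(\d_{\pa\O},\sigma)$, use $\M^{*\sigma'}(\pa\O,\O)=0$ to obtain $V(t)\le C\,t^{1-\sigma'}$ for small $t$, and then the exponent $\sigma-\sigma'-1>-1$ makes the integral converge. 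This is a routine fix; with it in place your whole plan goes through, and it is, in spirit, the direct proof to which the paper defers.
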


Recall that, by definition,
\begin{equation}\label{Ddef}
D(\zeta_{\mathcal L}):=\inf\Big\{\alpha\in\eR\ :\ \sum_{j=1}^{\infty}\ell_j^{\alpha}<\infty\Big\},
\end{equation}
while $\delta_{\partial\Omega}$ is then defined in terms of the volume (i.e., length) of the inner epsilon (or tubular) neighborhoods of $\pa\Omega$, namely, $(\pa\Omega)_{\e}\cap\Omega=\{x\in\Omega\ :\ d(x,\partial\Omega)<\e\}$; see \cite[Chapter 1]{lapidusfrank12}.

In order to establish the equality $D(\zeta_{\mathcal L})=D_{\rm hol}(\zeta_{\mathcal L})$ from Theorem \ref{2.181/2}, one first notes that 
$\zeta_{\mathcal L}$ is holomorphic for $\re s>D(\zeta_{\mathcal L})$ and that $\{\re s>D(\zeta_{\mathcal L})\}$ is the largest open right half-plane having this property; i.e., $D(\zeta_{\mathcal L})=D_{\rm hol}(\zeta_{\mathcal{L}})$. The latter property follows from the fact that (because $\zeta_{\mathcal L}(s)$ is initially given in \eqref{string} by a Dirichlet series with positive coefficients), $\zeta_{\mathcal L}(s)\to+\ty$ as $s\to D^+$, $s\in\eR$, where $D:=D(\zeta_{\mathcal L})=\d_{\pa\O}$; see, e.g., \cite[Section VI.2.3]{serre}.
The proof of the equality $D(\zeta_{\mathcal L})=\d_{\pa\O}$ requires significantly more work; see the aforementioned references.

Note that, more precisely, $\ov\dim_BA_{\mathcal L}=\d_{\pa\O}$ is equal to $\ov\dim_B(\pa\O,\O)$, the Minkowski dimension of $\pa\O$ relative to $\O$
(also called the inner Minkowski dimension of $\pa\O$, or, equivalently, of $\mathcal L$) which is defined in terms of the volume (i.e., length) of the inner tubular neighborhoods of $\O$. More specifically, $\d_{\pa\O}$ is given by \eqref{dim} or \eqref{Mty}, except for $|A_t|$ replaced by $|A_t\cap\O|_1$, with $A:=\pa\O$, in the counterpart of the second equality of \eqref{mink}.

\medskip


In fractal string theory, one is particularly interested in the meromorphic continuation of $\zeta_{\mathcal  L}$ to a suitable region (when it exists), along with its poles, which are called the {\em complex dimensions}\label{c_dim} of $\mathcal L$.
In particular, in \cite{lapidusfrank12}, explicit formulas are obtained that are  applicable to various counting functions associated with the geometry and the spectra of fractal strings,
as well as to $|(\pa\O)_t\cap\O|_1$, now defined as the volume of the inner tubular neighborhood of $\pa\O$ (i.e., of $\mathcal L$).
These explicit formulas are expressed in terms of the complex dimensions (i.e., the poles of $\zeta_{\mathcal L}$) and the associated residues. Furthermore, they enable one to obtain a very precise understanding of the oscillations underlying the geometry and spectra of fractal strings (as well as of more general fractal-like objects).

From the perspective of the theory developed in the present work, a convenient choice for the set $A_{\mathcal L}$ corresponding to the fractal string $\mathcal L=(\ell_j)_{j\ge1}$ is  
\begin{equation}\label{A_L}
A_{\mathcal L}:=\{a_k : k\geq 1\},\q \mbox{where\q$a_k:=\sum_{j\ge k}\ell_j$\q for each\q $k\geq 1$.}
\end{equation}
As follows easily from Theorem \ref{2.181/2} and the definition of $A_{\mathcal L}$ (see Equations \eqref{cantor_string} below) and \eqref{simeq}, the function $\zeta_{\mathcal L}$ in (\ref{string}) is holomorphic for all $s\in\Ce$ with $\re s>\ov\dim_BA_{\mathcal L}$.
Moreover, this bound 
is optimal. In other words, $\overline{\dim}_BA$ coincides both with the abscissa of holomorphic continuation $D_{\rm hol}(\zeta_{\mathcal L})$ and the abscissa of (absolute) convergence $D(\zeta_{\mathcal L})$ of $\mathcal L$.
Furthermore, $\zeta_{\mathcal L}(s)\to+\ty$ as $s\in\eR$ converges to $\ov\dim_BA_{\mathcal L}$ from the right; compare with Theorem \ref{an} above.
In light of Theorem $\ref{2.181/2}$, Theorem \ref{an}(b), Equation $(\ref{2.201/2})$ and Equations \eqref{cantor_string}--\eqref{simeq},
we then have the following equalities:
\begin{equation}\label{eq}
\ov\dim_BA_{\mathcal L}=D(\zeta_{A_{\mathcal L}})=D_{\rm hol}(\zeta_{A_{\mathcal L}})=D(\zeta_{\mathcal L})=D_{\rm hol}(\zeta_{\mathcal L})=\d_{\pa\O}.
\end{equation}

The following example shows that the study of the geometric zeta function $\zeta_{\mathcal L}$ of any (bounded) fractal string $\mathcal L$ can be reduced to the study of the distance zeta
 function $\zeta_{A_{\mathcal L}}$ of the associated bounded set $A_{\mathcal L}$ on the real line. (See also Remark \ref{entirely} below for a more general statement.)

\medskip

\begin{example}\label{L} Let $(I_k)_{k\ge1}$ be a sequence of bounded intervals, $I_k=(a_{k+1},a_k)$, $k\ge1$, where the $a_k$'s are defined by \eqref{A_L}, and let $s$ be a complex variable.
Using (\ref{z}), we see that the distance zeta function of $A=A_{\mathcal L}$ for $\re s>D(\zeta_{\mathcal L})$ is given by 
\begin{equation}\label{zetal}
\zeta_A(s)=2\int_0^\delta x^{s-1}\D x+\sum_{k=1}^\infty\int_{I_k}d(x,\pa I_k)^{s-1}\D x\\
=2s^{-1}\delta^s+\sum_{k=1}^\ty J_k(s),
\end{equation}
where the first term in this last expression corresponds to the boundary points of the interval $(0,a_1)$.
Assuming that $\delta\ge \ell_1/2$, we have that for all $k\ge1$,
\begin{equation}\label{zetalk}
J_k(s)=s^{-1}2^{1-s}\ell_k^s.
\end{equation}
Note that we assume that $s\in\Ce$ is such that $\re s>D(\zeta_{\mathcal L})$, so that the series $\sum_{k=1}^\ty J_k(s)$ appearing in \eqref{zetal} is convergent.
 In light of (\ref{string})--(\ref{A_L}) and \eqref{zetal}, we then obtain the following relation:
\begin{equation}\label{cantor_string}
\zeta_A(s)=s^{-1}2^{1-s}\zeta_{\mathcal L}(s)+2s^{-1}\delta^{s}.
\end{equation}
The case when $0<\delta<\ell_1/2$ yields an analogous relation:
\begin{equation}\label{simeq}
\zeta_A(s)=u(s)\zeta_{\mathcal L}(s)+v(s),
\end{equation}
where again $u(s):=s^{-1}2^{1-s}$, with a simple pole at $s=0$. 
Note that here, $u(s)$ and $v(s)=v(s,\delta)$ are holomorphic functions in the right half-plane $\{\re s>0\}$. 
Hence, by the principle of analytic continuation and
since $\zeta_{\mathcal L}$ is holomorphic for $\re s>\ov\dim_BA$, the same relation still holds for the meromorphic extensions of $\zeta_A$ and of $\zeta_{\mathcal L}$ (when they exist, see Theorem \ref{1.2.31}) within the right half-plane $\{\re s>0\}$. 
\end{example}

The following result is in accordance with Theorem~\ref{2.181/2}.

\begin{theorem}\label{1.2.31}
Let $\mathcal{L}=(\ell_j)_{j\ge1}$ be a $($nontrivial$)$ fractal string such that $\sum_{j\ge1}\ell_j<\ty$, and 
let $A_{\mathcal L}=\big\{a_k=\sum_{j\ge k}\ell_j:k\ge1\big\}$. Then 
\begin{equation}\label{DAL}
D(\zeta_{A_{\mathcal L}})=D_{\rm hol}(\zeta_{A_{\mathcal L}})=D(\zeta_{\mathcal L})=D_{\rm hol}(\zeta_{\mathcal L})=\ov\dim_BA_{\mathcal L}.
\end{equation}
Furthermore, given $c\ge0$, the sets of poles of the meromorphic extensions of $\zeta_{A_{\mathcal L}}$ and $\zeta_{\mathcal L}$
$($if one, and therefore both, of the extensions exist$)$
to the open right half-plane $\{\re s>c\}$ coincide. Moreover, the poles of $\zeta_{A_{\mathcal{L}}}$ and $\zeta_{\mathcal{L}}$ $($in such a half-plane$)$ have the same multiplicities. 

More generally, given any subdomain $U$ of $\Ce\setminus\{0\}$ containing the critical line $\{\re s=D(\zeta_{\mathcal L})\}$, $\zeta_{A_{\mathcal L}}$ has a meromorphic continuation to $U$ if and only $\zeta_{\mathcal L}$ does, and in that case, $\zeta_{A_{\mathcal L}}$ and $\zeta_{\mathcal L}$ have the same visible poles in $U$ and with the same multiplicities.
\end{theorem}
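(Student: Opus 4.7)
The plan is to base everything on the functional equation \eqref{simeq}, $\zeta_{A_{\mathcal L}}(s) = u(s)\zeta_{\mathcal L}(s) + v(s)$, combined with Theorem \ref{an}(b) and Theorem \ref{2.181/2}. The first chain of equalities \eqref{DAL} is essentially a restatement of \eqref{eq}: Theorem \ref{an}(b) applied to $A_{\mathcal L}$ gives $D(\zeta_{A_{\mathcal L}}) = \ov\dim_B A_{\mathcal L}$; Theorem \ref{2.181/2} gives $D(\zeta_{\mathcal L}) = D_{\rm hol}(\zeta_{\mathcal L}) = \d_{\pa\O}$; and $\ov\dim_B A_{\mathcal L} = \d_{\pa\O}$ since $\pa\O = A_{\mathcal L}\cup\{0\}$ differs from $A_{\mathcal L}$ by a single point, which does not affect the upper box dimension. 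For the remaining equality $D_{\rm hol}(\zeta_{A_{\mathcal L}}) = D(\zeta_{A_{\mathcal L}})$, I would use that $\zeta_{\mathcal L}(s) \to +\ty$ as $s \to D^+$, $s \in \eR$ (standard for a Dirichlet series with positive coefficients), combined with \eqref{simeq}: since $u(s) > 0$ and $v(s)$ is real and finite for real $s$ close to $D > 0$, one obtains $\zeta_{A_{\mathcal L}}(s) \to +\ty$ as $s \to D^+$ as well, precluding holomorphic extension past $D$ and forcing $D_{\rm hol}(\zeta_{A_{\mathcal L}}) \geq D$; the reverse inequality always holds by Theorem \ref{an}(a).

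The key structural ingredient for the pole assertions is that $u(s) = s^{-1}\, 2^{1-s}$ and $v(s)$ both extend meromorphically to $\Ce$ with at most a simple pole at $s = 0$, and that $u$ is nowhere zero on $\Ce \setminus \{0\}$ (because $2^{1-s}$ never vanishes). For $u$ this is immediate; for $v$, when $\d \geq \ell_1/2$ the explicit formula $v(s) = 2 s^{-1} \d^s$ from \eqref{cantor_string} makes this plain. When $0 < \d < \ell_1/2$, I would revisit the computation \eqref{zetal}--\eqref{zetalk}: the finitely many intervals $I_k$ with $\ell_k > 2\d$ now contribute boundary-truncated integrals of the form $2 s^{-1} \d^s$ in place of $s^{-1} 2^{1-s} \ell_k^s$, and assembling these finitely many elementary corrections (together with the boundary contribution near $0$) presents $v$ as a finite $\Ce$-linear combination of terms $s^{-1} c^s$ with $c > 0$, each entire on $\Ce \setminus \{0\}$.

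Once this is in hand, the remaining assertions drop out of \eqref{simeq} on any subdomain $U \subseteq \Ce \setminus \{0\}$. A meromorphic extension of $\zeta_{\mathcal L}$ to $U$ yields one for $\zeta_{A_{\mathcal L}} = u \zeta_{\mathcal L} + v$; conversely, $\zeta_{\mathcal L} = (\zeta_{A_{\mathcal L}} - v)/u$ produces a meromorphic extension of $\zeta_{\mathcal L}$ to $U$, well-defined because $u$ is nonvanishing on $U$. For any $s_0 \in U$, the order of the pole of $\zeta_{A_{\mathcal L}}$ at $s_0$ equals that of $u \zeta_{\mathcal L}$ at $s_0$ (since $v$ is holomorphic there), which in turn equals that of $\zeta_{\mathcal L}$ at $s_0$ (since $u$ is holomorphic and nonzero there). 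Specializing to $U = \{\re s > c\}$ with $c \geq 0$ yields the intermediate half-plane statement, while a general domain $U$ containing the critical line delivers the final claim. The only mild obstacle in the whole argument is the bookkeeping needed to verify the meromorphic structure of $v$ on $\Ce \setminus \{0\}$ in the regime $0 < \d < \ell_1/2$, which is a routine matter of isolating finitely many elementary boundary/truncation terms.
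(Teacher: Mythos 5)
Your proposal is correct and follows essentially the same route as the paper: the paper's proof cites Theorem \ref{2.181/2}, parts $(a)$ and $(b)$ of Theorem \ref{an}, and the identities \eqref{cantor_string} and \eqref{simeq} from Example \ref{L}, which is precisely the strategy you lay out. You supply useful detail the paper leaves implicit, notably the verification that for $0<\delta<\ell_1/2$ the correction term $v$ is a finite sum of terms $c\,s^{-1}a^s$ and hence holomorphic on $\Ce\setminus\{0\}$, and the fact that $u(s)=s^{-1}2^{1-s}$ is nowhere zero off the origin, which is what makes the multiplicities of the poles transfer.
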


\begin{proof}
The first claim follows from Theorem~\ref{2.181/2} combined with parts $(a)$ and $(b)$ of Theorem~\ref{an}. The second and the third claims are an immediate consequence of the identities~\eqref{cantor_string} and (\ref{simeq}) in Example \ref{L}.
\end{proof}

\begin{remark}\label{entirely} An entirely similar proof shows that, in Example \ref{L} and Theorem \ref{1.2.31}, we can replace $A_{\mathcal L}$ with $A:=\pa\O$, where the bounded open set $\O\st\eR$
is any geometric realization of the (nontrivial) fractal string $\mathcal L$, provided $\ov\dim_BA:=\d_{\pa\O}$, as defined in the comment following \eqref{Ddef}.
Hence, with the notation used in \eqref{Ddef}, we also have the following counterpart of \eqref{eq} in this more general situation:
\begin{equation}
D(\zeta_{\mathcal L})=D_{\rm hol}(\zeta_{\mathcal L})=D(\zeta_{\pa\O})=D_{\rm hol}(\zeta_{\pa\O})=\d_{\pa\O}:=\ov\dim_B(\pa\O,\O).
\end{equation}
\end{remark}

Actually, a direct computation shows that, in that case, the relation between $\zeta_{\mathcal L}$ and $\zeta_{\pa\O,\O}$ (the distance zeta function of the fractal string $\mathcal L$, viewed as a relative fractal drum, in the sense of \cite{memoir}, is even more straightforward:
\begin{equation}
\zeta_{\pa\O,\O}(s)=\frac{2^{1-s}}{s}\zeta_{\mathcal L}(s),
\end{equation}
for every $s\in\Ce$ such that $\re s>\d_{\pa\O}$ and, more generally, in every domain of $\Ce$ to which one (and hence both) of these two fractal zeta functions can be meromorphically continued.

\subsection{Equivalent zeta functions}\label{eqzf}
In this subsection, we shall introduce an equivalence relation $\sim$ on the set of zeta functions
(see Definition~\ref{equ}). Let us illustrate its purpose in the case 
of the distance zeta function $\zeta_A$ of a given nonincreasing infinite sequence $A=(a_k)_{k\ge1}$, converging to zero in $\eR$.
As we saw in Example \ref{L}, it makes sense to identify it with its simpler form $\zeta_{\mathcal L}$, where $\mathcal L=(\ell_j)_{j\ge1}$
is the associated bounded fractal string, defined by 
$\ell_j=a_j-a_{j+1}$. This is done by removing 
the inessential functions $u(s)$ and $v(s)$ appearing in Equation~(\ref{simeq}) above. Therefore, $\zeta_A\sim\zeta_{\mathcal L}$.

Throughout this subsection (and Appendix A in which this topic is further developed), we will assume that $E$ is a locally compact, Hausdorff topological (and metrizable) space and that $\mu$ is a {\em local} (roughly speaking, locally bounded) positive or complex measure (in the sense of \cite{dolfr}, \cite{johlap}, or \cite[Chapter 4]{lapidusfrank12}). In short, a {\em local measure} is a $[0,+\ty]$-valued or $\Ce$-valued set-function on $\mathcal{B}:=\mathcal{B}(E)$ (the Borel $\s$-algebra of $E$), whose restriction to $\mathcal{B}(K)$, 
where $K$ is an arbitrary compact subset of $E$,
is a bounded positive measure or is a complex (and hence, bounded) measure, respectively.
The {\em total variation measure} of $\mu$ (see, e.g., [Coh] or [Ru]) is denoted by $|\mu|$; it is a (local) positive measure and, if $\mu$ is itself positive, 
then $|\mu|=\mu$. We refer to [Coh, Fol, Ru] for the theory of standard positive or complex measures.

We assume that the $\mu$-measurable function $\f:E\to\eR\cup\{+\ty\}$ appearing in Definition \ref{abscissa_f} just below is {\em tamed}, in the following sense: there exists a positive constant $C=C(\f)$ such that
\begin{equation}\label{E1}
|\mu|(\{\f>C\})=0;
\end{equation}
i.e.,  $\f$ is essentially bounded from above with respect to $|\mu|$. We then say that $f$, defined by \eqref{Efi} below, is a {\em tamed} DTI.

\begin{defn}\label{abscissa_f}
Given a {\em tamed Dirichlet-type integral} (tamed DTI, in short) function $f=f(s)$ of the form 
\begin{equation}\label{Efi}
f(s):=\int_E\f(x)^s\,\D\mu(x),
\end{equation}
where $\mu$ is a suitable (positive or complex) local (i.e., locally bounded) measure on a given (measurable) space $E$ [i.e., $\mu:\mathcal{B}\to[0,+\ty]$ or $\mu:\mathcal{B}\to\Ce$],
and $\f:E\to\eR\cup\{+\ty\}$ is a $\mu$-measurable function such that $\f\ge0$ $\mu$-a.e.\ on $E$,
we define the {\em abscissa of $($absolute$)$ convergence} $D(f)\in\eR\cup\{\pm\ty\}$ 
by
\begin{equation}\label{D(f)}
\begin{aligned}
 D(f)&:=\inf\left\{\alpha\in\eR:\int_E\f(x)^\a\D|\mu|(x)<\ty\right\}\\
&\phantom{:}=\inf\big\{\alpha\in\eR:\mbox{$\f(x)^s$ is Lebesgue integrable for $\re s>\a$}\big\}.
\end{aligned}
\end{equation}
It follows that the {\em half-plane of $($absolute$)$ convergence of $f$}, namely, $\Pi(f):=\{\re s>D(f)\}$, is the {\em maximal} open
right half-plane (of the form $\{\re s>\a\}$, for some $\a\in\eR\cup\{\pm\ty\}$) on which the function $x\mapsto\f(x)^s$ is absolutely (i.e., Lebesgue)
integrable. (Note that $D(f)$ is well defined for any tamed Dirichlet-type integral $f$.)

In \eqref{D(f)}, by definition, $\inf\emptyset:=+\infty$ and $\inf\eR=-\ty$. Using a classic theorem about the holomorphicity of integrals depending analytically on a parameter, one can show that $f$ is holomorphic on $\{\re s>D(f)\}$. Hence, it follows that $D_{\rm hol}(f)\le D(f)$. Here, $D_{\rm hol}(f)\in\eR\cup\{\pm\ty\}$, the {\em abscissa of holomorphic continuation of $f$}, is defined exactly as $D_{\rm hol}(\zeta_A)$ in \eqref{Dhol}, except for $\zeta_A$ replaced by $f$. 

In \eqref{D(f)}, the integral is taken with respect to $|\mu|$, the total variation measure of $\mu$; recall that if $\mu$
is positive, then $|\mu|=\mu$. Note that we may clearly replace $\f(x)^s$ by $\f(x)^{\re s}$ in the second equality of~\eqref{D(f)},
since for a measurable function, Lebesgue integrability is equivalent to absolute integrability.
\end{defn}

\begin{remark}
There are many examples for which $D_{\rm hol}(f)=D(f)$ (see, e.g., Equation \eqref{DAL} in Theorem \ref{1.2.31} or Equation~\eqref{dimDDhol} in Theorem~\ref{an}) and other examples for which $D_{\rm hol}(f)<D(f)$ (this is so for Dirichlet $L$-functions with a nontrivial primitive character, in which case $D_{\rm hol}(f)=-\ty$ but $D(f)=1$; see, e.g., \cite[Section VI.3]{serre}).
This is the case, for instance, if $f(s):=\sum_{n=1}^\ty(-1)^{n-1}/n^s$.
\end{remark}

\begin{remark}
All of the fractal zeta functions encountered in this work, namely, the distance and tube zeta functions (see Subsection \ref{properties_definition}  above and Subsection \ref{residues_m_tube} below), their counterparts for relative fractal drums, the geometric zeta function of (possibly generalized) fractal strings
 (\cite[Chapters 1 and 4]{lapidusfrank12}), as well as the spectral zeta functions of (relative) fractal drums (see \cite{Lap3,brezish}) are tamed DTIs; i.e., they are Dirichlet-type integrals (in the sense of \eqref{Efi}, and for a suitable choice of set $E$, function $\f$ and measure $\mu$) satisfy condition \eqref{E1}. This justifies, in particular, the use of the expression ``abscissa of (absolute) convergence'' and ``half-plane of (absolute) convergence'' for all of these fractal zeta functions, including the tube and distance zeta functions which are key objects in the present paper. 

For example, for the distance zeta function $\zeta_A$ (as in Definition \ref{defn} above), we can choose $E:=A_\d$ (or else, $E:=A_\d\setminus\ov A$), $\f(x):=d(x,A)$ for $x\in E$ and $\mu(\D x):=d(x,A)^{-N}\D x$, while for the tube zeta function (as in Definition \ref{zeta_tilde} below), we  can choose $E:=(0,\d)$, $\f(t):=t$ for $t\in E$ and $\mu(\D x):=t^{-N-1}|A_t|\D t=t^{-N}|A_t|\,(\D t/t)$. In both cases, it is easy to check that the tameness condition \eqref{E1} is satisfied, with $C:=\d$.

In closing, we note that the class of tamed Dirichlet-type integrals also contains all arithmetic zeta functions (that is, all zeta functions occurring in number theory); see, e.g., [ParSh1--2, Pos, Ser, Tit, Lap-vFr2, Lap4].
\end{remark}



Recall from part (b) of Theorem \ref{an} that we have the following result, which is very useful for the computation of the upper box dimension of fractal sets.

\begin{cor}\label{an1}
Let $A$ be any bounded subset of $\eR^N$. Then
\begin{equation}
\ov\dim_BA=D(\zeta_A).
\end{equation}
Hence, we have $0\le D(\zeta_A)\le N$.
\end{cor}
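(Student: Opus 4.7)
The statement is essentially a restatement and packaging of part (b) of Theorem~\ref{an}, so the plan is short. First I would simply invoke Theorem~\ref{an}(b), which already establishes the equality $\ov\dim_B A=D(\zeta_A)$ for any bounded subset $A$ of $\eR^N$. No new argument is needed for this equality; the corollary is really a convenient restatement emphasizing that $D(\zeta_A)$ provides a tool for computing $\ov\dim_B A$.

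For the second assertion, the bounds $0\le D(\zeta_A)\le N$, I would argue via the corresponding (well-known and elementary) bounds on the upper box dimension. Since $A$ is a bounded (nonempty) subset of $\eR^N$, the comment immediately following \eqref{Mty} in the notation subsection gives $0\le\underline\dim_B A\le\ov\dim_B A\le N$. Combining this with the just-cited equality $\ov\dim_B A=D(\zeta_A)$ yields $0\le D(\zeta_A)\le N$, as desired.

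The only step worth emphasizing (and the one a reader might want spelled out) is that the inequality $\ov\dim_B A\le N$ is automatic from the definitions \eqref{dim}--\eqref{Mty}: for any $r>N$, the trivial bound $|A_t|\le|B_R(0)|\le C_N R^N$ (where $A$ is contained in a ball of radius $R$, and $t$ is bounded by $\d$) forces $\M^{*r}(A)=\limsup_{t\to 0^+}|A_t|/t^{N-r}=0$ since $t^{N-r}\to\ty$. Hence by \eqref{dim}, $\ov\dim_B A\le N$. The lower bound $\ov\dim_B A\ge 0$ is built into the definition \eqref{dim}, where the infimum is taken over $r\ge 0$.

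No obstacle is expected; the corollary is a direct repackaging of Theorem~\ref{an}(b) together with a standard elementary bound. I would therefore keep the proof to a single line, essentially ``This follows immediately from Theorem~\ref{an}(b), together with the elementary fact that $0\le\ov\dim_B A\le N$ whenever $A$ is a bounded subset of $\eR^N$.''
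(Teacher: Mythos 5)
Your proposal matches the paper's approach exactly: the corollary is introduced in the text as a recollection of Theorem~\ref{an}(b), and the bounds $0\le\ov\dim_B A\le N$ for bounded $A\st\eR^N$ are stated in the Notation subsection. Your additional spelling-out of why $\ov\dim_B A\le N$ holds is correct and consistent with the definitions \eqref{dim}--\eqref{Mty}.
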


Following \cite[Sections 1.2.1 and 5.1]{lapidusfrank12}, assume that the set $A$ has the property that $\zeta_A$ can be extended to a meromorphic function defined on $G\stq\Ce$, 
where $G$ is an open and connected neighborhood of the {\em window}\label{window} $\bm W$ defined by
$$
\bm{W}:=\{s\in\Ce: \re s\ge S(\im s)\}.
$$
Here, the function $S:\eR\to(-\ty,D(\zeta_A)]$, called the {\em screen},\label{screen} is assumed to be Lipschitz continuous.
Note that the closed set $\bm W$ contains the {\em
critical line} (of convergence) $\{\re s=D(\zeta_A)\}$.\label{cr_line_w} In other words, we assume that $A$ is such that its distance zeta function can be extended meromorphically  to an open domain $G$ 
containing the closed right half-plane $\{\re s\ge D(\zeta_A)\}$. (Following the usual conventions, we still denote by $\zeta_A$ the meromorphic continuation of $\zeta_A$ to $G$, which is necessarily unique due to the principle of analytic continuation. Furthermore, as in \cite{lapidusfrank12}, we assume that the {\em screen} 
\begin{equation}
\bm{S}:=\pa\bm{W}=\{S(\tau)+\I\tau:\tau\in\eR\}
\end{equation} 
does not contain any poles of $\zeta_A$.) 
A set $A$ satisfying this property 
is said to be {\em admissible}\label{admissible}. (There exist nonadmissible fractal sets; see Subsection \ref{hyperfractal}.)
The notion of admissibility used here is weaker than the one used in \cite{cras2} and \cite{mm} because we do not establish fractal tube formulas in this paper.

We will also need to consider {\em the set of poles of $\zeta_A$ located on the critical line} $\{\re s=D(\zeta_A)\}$, where $D(\zeta_A)$ is assumed to be a real number (see Definition~\ref{dimc}):
\begin{equation}\label{po}
\po_c(\zeta_A)=\{\omega\in \bm{W}:\mbox{$\omega$ is a pole of $\zeta_A$ and $\re \omega=D(\zeta_A)$}\}.
\end{equation}
It is a subset of {\em the set of all poles of} $\zeta_A$ in $\bm W$, that we denote by $\po(\zeta_A)$ or $\po(\zeta_A,\bm{W})$ (see Definition \ref{1.331/2}). 

\begin{remark}\label{-ty}  We assume in the definition of $\po_c(\zeta_A)$ that $D(\zeta_A)\in\eR$, which is the case for example if $A$ is bounded, according to Corollary~\ref{an1}. Note that clearly (and in contrast to $\po(\zeta_A)=\po(\zeta_A,\bm{W})$, to be introduced in Definition~\ref{1.331/2}), $\po_c(\zeta_A)$ is independent of the choice of the window $\bm W$.
\end{remark}

The following definition
is a slight modification of the notion of complex dimension for fractal strings.

\begin{defn}\label{dimc}
Let $A$ be an admissible subset of $\eR^N$ such that $D(\zeta_A)\in\eR$.
Then, the {\em set of principal complex dimensions} of $A$, denoted by $\dim_{PC} A$,
is defined as the set of poles of $\zeta_A$ which are located on the critical line $\{\re s=D(\zeta_A)\}$:
\begin{equation}
\dim_{PC} A:=\po_c(\zeta_A),
\end{equation}
where $\po_c(\zeta_A)$ is given by (\ref{po}).
\end{defn}

As we see, in Definition \ref{dimc}, if $A\subset\eR^N$ is bounded, the singularities of $\zeta_A$ we are interested in are located on the vertical line $\{\re s=\ov\dim_BA\}$. 


\begin{defn}\label{1.331/2}
Let $A$ be an admissible subset of $\eR^N$. Then, the {\em set of visible complex dimensions} of $A$ {\em with respect to a given window $\bm W$} (often called, in short, the {\em set of complex dimensions of $A$ relative to $\bm W$}, or simply the {\em set of $($visible$)$ complex dimensions} of $A$ if no ambiguity may arise or if $\bm{W}=\Ce$), is defined as the set of all the poles of $\zeta_A$ which are located in the window $\bm W$:
\begin{equation}\label{1.401/2}
\mathcal{P}(\zeta_A)=\{\omega\in \bm{W} : \omega\textrm{ is a pole of } \zeta_A\}.
\end{equation}
Instead of $\mathcal{P}(\zeta_A)$, we can also write $\mathcal{P}(\zeta_A,\bm{W})$, in order to stress that this set depends on $\bm W$ as well. Furthermore, all the sets of complex dimensions appearing in this paper are interpreted as multisets, i.e., with the multiplicities of the poles taken into account
\end{defn}

Next, we would like to extend the class of zeta functions to which a slight modification of Definition \ref{dimc} and Definition \ref{1.331/2} can be applied. 
Given a meromorphic function $f$ on a domain $G\subseteq\Ce$ containing the vertical line $\{\re s=D(f)\}$ (as in Remark \ref{-ty} above, we assume here that $D(f)\in\eR$), 
and which (for all $s\in \Ce$ with $\re s$ sufficiently large) is given by a convergent Dirichlet-type integral of the form \eqref{Efi} and satisfying condition \eqref{E1},
so that $D(f)<\ty$ is well defined by \eqref{D(f)}),
we define the set 
$\po_c(f)$ in much the same way as in~(\ref{po}):
\begin{equation}\label{pof}
\po_c(f)=\{\omega\in G:\mbox{$\omega$ is a pole of $f$ and $\re \omega=D(f)$}\}.
\end{equation}
It is a subset of the set $\po(f)$ of all the poles of $f$ belonging to $G$. In other words,
\begin{equation}\label{1.411/2}
\mathcal{P}(f)=\{\omega\in G:\omega\textrm{ is a pole of } f\}.
\end{equation}

\begin{remark}\label{1.333/4}
If $f=\zeta_A$, where $A$ is an admissible set for a given window $\bm W$, then (with $G:=\mathring{\bm{W}}$, the interior of the window) $\po_c(f)=\po_c(\zeta_A)$, the set of principal complex dimensions of $A$, while $\po(f,\mathring{\bm{W}})=\po(f)=\po(\zeta_A)=\po(\zeta_A,\bm{W})$, the set of (visible) complex dimensions of $A$ (relative to $\bm W$). This follows from the fact that since $A$ is admissible, $\zeta_A$ does not have any poles along the screen $\bm S$.
\end{remark}

\begin{remark}\label{1.334/5}
Observe that $\po_c(f)$ is independent of the choice of the domain $G$ containing the vertical line $\{\re s=D(f)\}$. Moreover, since as was noted earlier, the function $f$ is holomorphic for $\re s>D(f)$, there are no poles of $f$ located in the open half-plane $\{\re s>D(f)\}$; this is why we could equivalently require that the domain $G\subseteq\Ce$ contains the closed half-plane $\{\re s\geq D(f)\}$ in order to define $\po_c(f)$ and $\po(f)$. 

Finally, we note that since $\po(f)$ is the set of poles of a meromorphic function, it is a discrete subset of $\Ce$; in particular, it is at most countable. Since $\po_c(f)\stq\po(f)$, the same is true for $\po_c(f)$.
(An entirely analogous comment can be made about $\po_c(\zeta_A)$ and $\po(\zeta_A)$ in Definition \ref{dimc} and Definition \ref{1.331/2}, respectively.)
\end{remark}

We next define the equivalence of a given distance zeta function $f$ to a suitable meromorphic function $g$ (of a preferably simpler form), a notion which will be useful to us in the sequel.  Note that the relation $\sim$ introduced in Definition \ref{equ} is clearly an equivalence relation on the set of all tamed DTIs.

\begin{defn}\label{equ}
Let $f$ and $g$ be tamed Dirichlet-type integrals, as in Definition \ref{abscissa_f}, both admitting a (necessarily unique) meromorphic extension to an open connected subset $U$ of $\Ce$ which contains
the closed right half-plane $\{\re s\ge D(f)\}$. (As follows from the complete definition, this closed half-plane is actually the closure of the common half-plane of convergence of $f$ and $g$, given by $\Pi:=\Pi(f)=\Pi(g)$.) Then, the function $f$ is said to be {\em equivalent} to $g$, and we write $f\sim g$,
if $D(f)=D(g)$ (and this common value is a real number) and furthermore, the sets of poles of $f$ and $g$, located on the common critical line $\{\re s=D(f)\}$, coincide. Here, the multiplicities of the poles should be taken into account. In other words, we view the set of principal poles ${\mathcal{P}}_c(f)$ of $f$ as a multiset.
More succinctly,
\begin{equation}\label{equ2}
f\sim g\quad\overset{\mbox{\tiny def.}}\Longleftrightarrow\quad D(f)=D(g)\,\,(\in\eR)\q \mathrm{and}\q \po_c(f)=\po_c(g).
\end{equation}
\end{defn}

\medskip

If a tamed Dirichlet-type integral $f$ is given (for example, a distance zeta function $\zeta_A$ corresponding to a given fractal set $A$), the aim is to find an equivalent meromorphic function $g$, defined by a simpler expression.
Satisfactory results can already be obtained with functions $g$ of the form $g(s)=u(s)f(s)+v(s)$, for a suitable choice of the holomorphic functions $u$ and~$v$, with $u$ nowhere vanishing in the given domain, as we have seen in Example \ref{L}.

We refer to Definition A.2 in Appendix A to this paper for an extension of Definition \ref{equ} to the broader class of extended Dirichlet-type integrals (extended DTIs, for short), as introduced in Definition A.1. 

We also refer to Definition A.6 (and the comments surrounding it) at the end of Appendix A for a closely related, but somewhat different (and perhaps more practical) definition, allowing the meromorphic function $g$ not to be a DTI (or more generally, an EDTI of type I, in the terminology of Appendix A). These new definitions (Definitions A.2 and A.6) can be applied to (essentially) all the examples of interest in this paper and in our general theory. Towards the end of Appendix A, the interested reader can find a large class of functions $g$ giving the ``leading behavior'' of fractal zeta functions $f$. 
(See Theorem A.3 in Appendix A, along with its consequences.)

In the following proposition, we consider the dependence of the distance zeta function $\zeta_A$ on $\d>0$.
For this reason, we denote $\zeta_A$ by $\zeta_A(\,\cdot\,,A_{\d})$.

\begin{prop}\label{O}
Let $A$ be a bounded subset of $\eR^N$.
Then, for any two positive real numbers $\d_1$ and $\d_2$, we have
$\zeta_A(\,\cdot\,,A_{\d_1})\sim\zeta_A(\,\cdot\,,A_{\d_2})$, in the sense of Definition \ref{equ}.
\end{prop}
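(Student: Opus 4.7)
Without loss of generality, assume $0 < \d_1 < \d_2$. The plan is to show that the difference $h(s) := \zeta_A(s, A_{\d_2}) - \zeta_A(s, A_{\d_1})$ is an \emph{entire} function of $s \in \Ce$, which will immediately imply the claim via Definition \ref{equ}.

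First, I would write the difference explicitly. Since $A_{\d_1} \subseteq A_{\d_2}$, we have
$$
h(s) \;=\; \int_{A_{\d_2}\setminus A_{\d_1}} d(x,A)^{s-N}\, \D x,
$$
valid initially in the common half-plane of convergence $\{\re s > \ov\dim_B A\}$. The key observation is that on the integration domain $A_{\d_2}\setminus A_{\d_1}$, the Euclidean distance to $A$ is bounded away from zero: $d(x,A) \geq \d_1 > 0$. Moreover, $d(x,A) < \d_2$, so for any $s \in \Ce$ with $|\re s| \leq M$, the integrand satisfies
$$
|d(x,A)^{s-N}| \;=\; d(x,A)^{\re s - N} \;\leq\; \max\!\left(\d_1^{M-N},\, \d_1^{-M-N}\d_2^{M}\right),
$$
a bound which is locally uniform in $s$. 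Since $A_{\d_2}\setminus A_{\d_1}$ has finite $N$-dimensional Lebesgue measure ($A$ being bounded), the integral defining $h(s)$ converges absolutely for every $s \in \Ce$ and depends locally uniformly and holomorphically on $s$. Applying the standard theorem on holomorphicity of parameter integrals (or Morera's theorem combined with Fubini), one concludes that $h$ is entire. By the principle of analytic continuation, the identity $\zeta_A(s, A_{\d_2}) = \zeta_A(s, A_{\d_1}) + h(s)$ extends to any domain to which either side admits a meromorphic continuation.

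Next, I would combine this with Theorem \ref{an}(b), which yields
$$
D(\zeta_A(\,\cdot\,, A_{\d_1})) \;=\; D(\zeta_A(\,\cdot\,, A_{\d_2})) \;=\; \ov\dim_B A \;\in\; [0,N] \subset \eR.
$$
Because $h$ is entire, it has no poles; hence $\zeta_A(\,\cdot\,, A_{\d_1})$ and $\zeta_A(\,\cdot\,, A_{\d_2})$ share the same poles (with multiplicities) everywhere they are both meromorphically defined, and in particular on the common critical line $\{\re s = \ov\dim_B A\}$. Therefore $\po_c(\zeta_A(\,\cdot\,, A_{\d_1})) = \po_c(\zeta_A(\,\cdot\,, A_{\d_2}))$ as multisets, and Definition \ref{equ} gives $\zeta_A(\,\cdot\,, A_{\d_1}) \sim \zeta_A(\,\cdot\,, A_{\d_2})$.

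The only subtlety — and really the whole content of the proof — is verifying the entire character of $h(s)$; this is routine once one exploits that $d(\,\cdot\,,A)$ is bounded both above (by $\d_2$) and away from zero (by $\d_1$) on the annular region $A_{\d_2}\setminus A_{\d_1}$. No estimates on the behaviour of $|A_t|$ as $t \to 0^+$ are needed, because the singular behaviour of the integrand near $A$ has been excised.
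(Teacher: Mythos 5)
Your proof is correct and follows essentially the same route as the paper: both write the difference of the two distance zeta functions as the integral of $d(x,A)^{s-N}$ over the annular region $A_{\d_2}\setminus A_{\d_1}$, use the bounds $\d_1\le d(x,A)<\d_2$ there to conclude that this difference is entire, and then invoke the definition of $\sim$. Your version is merely more detailed in spelling out the local-uniform bounds and the appeal to Theorem~\ref{an}(b), which the paper leaves implicit.
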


\begin{proof} We assume without loss of generality that $\d_1<\d_2$, since for $\d_1=\d_2$ there is nothing to prove. 
For $\re s>\ov\dim_BA$,
the difference of the functions $\zeta_A(s,A_{\d_2})$ and $\zeta_A(s,A_{\d_1})$
 is equal to
\begin{equation}\label{int12}
\int_{A_{\d_2}\setminus A_{\d_1}}d(x,A)^{s-N}\D x.
\end{equation}
Note that $\d_1\le d(x,A)<\d_2$ for every $x\in A_{\d_2}\setminus A_{\d_1}$.
Hence, 
 the integral given by (\ref{int12}) is an entire function of the variable~$s$.
\end{proof}

The following result deals with the scaling property of the distance zeta function. 

\begin{prop}[Scaling property of distance zeta functions]\label{scalingd}
For any bounded subset $A$ of $\eR^N$, $\d>0$ and $\g>0$, we have $D(\zeta_{\g A}(\,\cdot\,,\g(A_\d)))=D(\zeta_A(\,\cdot\,,A_\d))=\ov\dim_BA$ and
\begin{equation}\label{zetalA}
\zeta_{\g A}(s,\g(A_\d))=\g^s\zeta_A(s,A_\d),
\end{equation}
for all $s\in\Ce$ with $\re s>\ov\dim_BA$. Furthermore, if $\o\in\Ce$ is a simple pole of the meromorphic extension of $\zeta_A(s,A_\d)$ to some open connected neighborhood of the critical line $\{\re s=\ov\dim_BA\}$ $($we use the same notation for the meromophically extended function$)$, then
\begin{equation}\label{reslA}
\res(\zeta_{\g A}(\,\cdot\,,\g(A_\d)),\o)=\g^{\o}\res(\zeta_A,\o).
\end{equation}
\end{prop}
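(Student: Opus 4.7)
The proof has three ingredients, all quite direct. The plan is to establish the functional equation \eqref{zetalA} first by a change of variables, then derive the dimension equality from Theorem \ref{an}(b), and finally compute the residue by taking a limit.

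First, I would record two elementary scaling identities: for every $x\in\eR^N$, $d(\g x,\g A)=\g\, d(x,A)$, and consequently $(\g A)_{\g \d}=\g(A_\d)$. These follow immediately from the definition of Euclidean distance and of the $t$-neighborhood. Next, in the Lebesgue integral defining $\zeta_{\g A}(s,\g(A_\d))=\int_{\g(A_\d)}d(y,\g A)^{s-N}\D y$, I would perform the change of variables $y=\g x$, which has Jacobian $\g^N$, obtaining
\begin{equation*}
\zeta_{\g A}(s,\g(A_\d))=\int_{A_\d}(\g\, d(x,A))^{s-N}\g^N\D x=\g^s\int_{A_\d}d(x,A)^{s-N}\D x=\g^s\zeta_A(s,A_\d),
\end{equation*}
valid for all $s\in\Ce$ with $\re s$ large enough so that both integrals converge absolutely.

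For the equality of abscissae of convergence, I would invoke Theorem \ref{an}(b), which gives $D(\zeta_A(\,\cdot\,,A_\d))=\ov\dim_BA$ and $D(\zeta_{\g A}(\,\cdot\,,\g(A_\d)))=\ov\dim_B(\g A)$. Since the upper box dimension is clearly invariant under scaling of the set (as is immediate from \eqref{mink}--\eqref{dim}, using that $|(\g A)_{t}|=\g^N|A_{t/\g}|$), both abscissae equal $\ov\dim_BA$. In particular, the identity \eqref{zetalA} holds on the common half-plane of absolute convergence $\{\re s>\ov\dim_BA\}$.

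Finally, since $s\mapsto\g^s$ is entire and nowhere vanishing, the principle of analytic continuation shows that the functional equation \eqref{zetalA} persists as an identity between meromorphic functions on any open connected neighborhood of the critical line $\{\re s=\ov\dim_BA\}$ to which $\zeta_A(\,\cdot\,,A_\d)$ admits a meromorphic extension. If $\o$ is a simple pole of $\zeta_A(\,\cdot\,,A_\d)$ in such a neighborhood, then it is also a simple pole of $\zeta_{\g A}(\,\cdot\,,\g(A_\d))$, and
\begin{equation*}
\res(\zeta_{\g A}(\,\cdot\,,\g(A_\d)),\o)=\lim_{s\to\o}(s-\o)\g^s\zeta_A(s,A_\d)=\g^\o\res(\zeta_A,\o),
\end{equation*}
which is \eqref{reslA}. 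No step here is a genuine obstacle; the only subtlety is remembering to invoke the principle of analytic continuation in order to propagate \eqref{zetalA} from the half-plane of convergence to a neighborhood of the critical line where the residue computation takes place.
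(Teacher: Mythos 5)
Your proof is correct and follows essentially the same approach as the paper, which simply notes the identity $\g(A_\d)=(\g A)_{\g\d}$, leaves the change-of-variables computation for \eqref{zetalA} to the reader, and then computes the residue exactly as you do by writing $\res(\zeta_{\g A}(\,\cdot\,,\g(A_\d)),\o)=\lim_{s\to\o}(s-\o)\g^s\zeta_A(s,A_\d)$. The only difference is that you spell out the Jacobian substitution, the scale-invariance of $\ov\dim_B$, and the appeal to analytic continuation, all of which the paper treats as implicit.
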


\begin{proof}  
Equation \eqref{zetalA} follows easily by noting that $\g(A_\d)=(\g A)_{\g\d}$; we leave the details to the interested reader. 
To prove Equation \eqref{reslA}, note that 
by using \eqref{zetalA}, we obtain that
$$
\begin{aligned}
\res(\zeta_{\g A}(\,\cdot\,,\g(A_\d)),\o)&=\lim_{s\to \o}(s-\o)\zeta_{\g A}(s,\g A)\\
&=\lim_{s\to \o}(s-\o)\g^s\zeta_A(s,A)=\g^{\o}\res(\zeta_A,\o),
\end{aligned}
$$
which concludes the proof of the proposition.
\end{proof}

This scaling result is useful, in particular, in the study of fractal sprays and self-similar sets in Euclidean spaces; see [LapRa\v Zu3,5].

\section{Residues of zeta functions and Minkowski contents}\label{residues_m}

In this section, we show that the residue of any suitable meromorphic extension of the distance zeta function $\zeta_A$ of a fractal set $A$ in $\eR^N$ is closely
related to the Minkowski content of the set; see Theorems~\ref{pole1} and~\ref{pole1mink_tilde}. 
Therefore, the distance zeta functions, as well as the tube zeta functions that we introduce below (see Definition \ref{zeta_tilde}), can be considered as a useful tool in the study of the geometric properties of fractals. 

\subsection{Distance zeta functions of fractal sets and their residues}\label{residues_m_distance}
Here we use the notation $\zeta_A(s,A_\delta)$ for the distance zeta function
instead of $\zeta_A(s)$, in order to 
stress the dependence of the zeta function on~$\delta$. We start with an identity or functional equation, which will motivate us to introduce a new 
class of zeta functions, described by~(\ref{zeta_tilde}).

\begin{theorem}\label{equr}
Let $A$ be a bounded subset of $\eR^N$, and let $\delta$ be a fixed positive number. Then, for all $s\in\Ce$ such that $\re s>\ov\dim_BA$, 
the following identity holds$:$
\begin{equation}\label{equality}
\int_{A_\delta}d(x,A)^{s-N}\D x=\delta^{s-N}|A_\delta|+(N-s)\int_0^\delta t^{s-N-1}|A_t|\,\D t.
\end{equation}
Furthermore, the function  $\tilde\zeta_A(s):=\int_0^\delta t^{s-N-1}|A_t|\,\D t$ is absolutely convergent $($and hence, holomorphic$)$ on $\{\re s>\ov\dim_BA\}$. The function $\tilde\zeta_A$, which we have just introduced, is called the tube zeta function of $A$ $($see Definition \ref{zeta_tilde}$)$ and will be studied in Subsection \ref{residues_m_tube}.
\end{theorem}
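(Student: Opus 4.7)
The plan is to reduce identity \eqref{equality} to the real-variable identity of Lemma \ref{identity0} and then extend to complex $s$ by analytic continuation. Throughout, set $\Pi := \{s \in \Ce : \re s > \ov\dim_B A\}$, an open connected half-plane containing the real ray $(\ov\dim_B A, +\ty)$.

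First I would establish the second assertion of the theorem, namely the absolute convergence (and ensuing holomorphy) of the tube integral $\tilde\zeta_A(s) := \int_0^\delta t^{s-N-1}|A_t|\,\D t$ on $\Pi$. For any $s \in \Pi$, the integrand has modulus $|t^{s-N-1}|\cdot|A_t| = t^{\re s - N - 1}|A_t| = t^{-\gamma - 1}|A_t|$ with $\gamma := N - \re s < N - \ov\dim_B A$. The second claim in Lemma \ref{identity0} directly guarantees that $\int_0^\delta t^{-\gamma - 1}|A_t|\,\D t < \ty$, so $\tilde\zeta_A(s)$ converges absolutely. Moreover, on each vertical strip $\{\ov\dim_B A + \eta_1 \le \re s \le \ov\dim_B A + \eta_2\}$ with $0 < \eta_1 < \eta_2$, the integrand admits the uniform dominant $t^{\eta_1 - 1 - (\ov\dim_B A)}|A_t|$ (up to harmless multiplicative constants bounded in $s$), which is integrable by the same lemma. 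The standard theorem on the holomorphy of integrals depending analytically on a parameter then yields that $\tilde\zeta_A$ is holomorphic on all of $\Pi$.

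Next, for any real $s_0 \in (\ov\dim_B A, +\ty)$, I would apply Lemma \ref{identity0} with the choice $\gamma := N - s_0$, which satisfies $\gamma < N - \ov\dim_B A$. The resulting real-variable identity is precisely \eqref{equality} evaluated at $s = s_0$, so \eqref{equality} holds pointwise on the real ray.

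Finally, I would invoke the identity principle. By Theorem \ref{an}(a), the left-hand side of \eqref{equality}, viewed as a function of $s$, is holomorphic on the connected open set $\Pi$. The right-hand side is also holomorphic on $\Pi$: the first term $\delta^{s-N}|A_\delta|$ is entire, and the second term $(N-s)\tilde\zeta_A(s)$ is holomorphic by the previous step. Since the two holomorphic functions agree on the real ray $(\ov\dim_B A, +\ty) \subset \Pi$, which has accumulation points in $\Pi$, they coincide throughout $\Pi$, which establishes \eqref{equality} for every $s$ with $\re s > \ov\dim_B A$. The only nontrivial point in the entire argument is the holomorphy of $\tilde\zeta_A$, and even that reduces at once to the real-variable integrability statement already furnished by Lemma \ref{identity0}; the rest is a mechanical application of the identity principle.
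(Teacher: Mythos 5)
Your proposal is correct and follows essentially the same route as the paper: establish the identity for real $s$ via Lemma \ref{identity0} with $\gamma = N - s$, obtain holomorphy of the left side from Theorem \ref{an}(a), obtain absolute convergence (hence holomorphy) of $\tilde\zeta_A$, and conclude by the identity principle. The only cosmetic differences are that the paper proves the absolute convergence of $\tilde\zeta_A$ by a direct estimate using $\M^{*(\ov D+\e)}(A)=0$ rather than by citing the finiteness clause of Lemma \ref{identity0}, and that the exponent in your claimed uniform dominant $t^{\eta_1 - 1 - \ov\dim_B A}|A_t|$ should read $t^{\ov\dim_B A + \eta_1 - N - 1}|A_t|$ (and only dominates on the region $t\le 1$), a minor slip that does not affect the argument.
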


\begin{proof}
Equality (\ref{equality}) holds for all real numbers $s\in(\ov D,+\ty)$, where $\ov D:=\ov\dim_BA$. Indeed, it follows immedately from Lemma~\ref{identity0}, if we take $\gamma:=N-s$ (note that then $\gamma<N-\ov D$).

Let us denote the left-hand side of (\ref{equality})
by $f(s)$, and the right-hand side by $g(s)$. Since $f(s)=g(s)$ on the subset $(\ov D,+\ty)\st\Ce$, to prove the theorem, it suffices to show that $f(s)$ and $g(s)$
are both holomorphic in the region $\{\re s>\ov D\}$. Indeed, the fact that (\ref{equality}) then holds for all $s\in\Ce$ with $\re s>\overline{D}$  follows from the principle of analytic continuation; see, e.g.,  \cite[Corollary~3.8]{conway}. The holomorphicity of $f(s)$ in that region is precisely the content of Theorem~\ref{an}$(a)$.

In order to prove the holomorphicity of $g(s)$ on $\{\re s>\ov D\}$, it suffices to show that $\tilde\zeta_A(s)$ is absolutely convergent on $\{\re s>\ov\dim_BA\}$. Note that $\tilde\zeta_A(s)$ is the Dirichlet-type integral, $\tilde\zeta_A(s)=\int_E\f(t)^s\D \mu(x)$, where $E:=(0,\delta)$, $\f(t):=t$, $d\mu(x):=t^{-N-1}|A_t|\,dt$, and the latter measure is positive. Therefore, it 
suffices to show that for any $s\in\Ce$ such that $\re s>\ov D$, the Dirichlet-type integral $\tilde\zeta_A(s)$ is well defined. To see this, let $\e>0$ be small enough, so that $\re s>\ov D+\e$. Since $\M^{*(\ov D+\e)}(A)=0$, there exists $C_\delta>0$ such that $|A_t|\le 
C_\delta t^{N-\ov D-\e}$ for all $t\in(0,\delta]$. Then
\begin{equation}\nonumber
\begin{aligned}
|\tilde\zeta_A(s)|&\le\int_0^\delta t^{\re s-N-1}|A_t|\,\D t\\
&\le C_\delta\int_0^\delta t^{\re s-\ov D-\e-1}\D t=C_\delta\frac{\delta^{\re s-\ov D-\e}}{\re s-\ov D-\e}<\ty,
\end{aligned}
\end{equation}
which concludes the proof of the theorem.
\end{proof}

\begin{cor}\label{equr_c}
If $\ov\dim_BA<N$, then
\begin{equation}
D(\zeta_A)=D(\tilde\zeta_A)\q\mbox{and}\q D_{\rm hol}(\zeta_A)=D_{\rm hol}(\tilde\zeta_A).
\end{equation}
\end{cor}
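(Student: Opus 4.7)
The plan is to exploit the functional equation from Theorem~\ref{equr}, namely
\[
\zeta_A(s) = \delta^{s-N}|A_\delta| + (N-s)\tilde\zeta_A(s),
\]
valid on $\{\re s > \ov\dim_B A\}$, together with the hypothesis $\ov\dim_B A < N$, which ensures that $s=N$ lies strictly inside this half-plane of convergence.

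For the equality $D(\zeta_A)=D(\tilde\zeta_A)$, one inequality is already contained in the proof of Theorem~\ref{equr}, which gives $D(\tilde\zeta_A)\le\ov\dim_B A=D(\zeta_A)$. For the reverse, I would first observe that the Tonelli computation underlying Lemma~\ref{identity0} actually produces the displayed equation as an identity in $[0,+\ty]$ for every real $\alpha<N$, with no finiteness hypothesis on either side (it is simply the layer-cake decomposition of a nonnegative integrand). Taking any real $\alpha<\ov\dim_B A$ (still less than $N$ by hypothesis), Lemma~\ref{optimal} forces $\int_{A_\delta}d(x,A)^{\alpha-N}\D x=+\ty$, and since $\delta^{\alpha-N}|A_\delta|$ is finite and $N-\alpha>0$, the identity gives $\tilde\zeta_A(\alpha)=+\ty$. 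Hence $D(\tilde\zeta_A)\ge\ov\dim_B A$, and the desired equality follows.

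For $D_{\rm hol}(\zeta_A)=D_{\rm hol}(\tilde\zeta_A)$, one inequality is again immediate: any holomorphic extension of $\tilde\zeta_A$ to a half-plane $\{\re s>\beta\}$ makes the right-hand side of the functional equation holomorphic there, so $D_{\rm hol}(\zeta_A)\le D_{\rm hol}(\tilde\zeta_A)$. For the converse I would rewrite the equation as
\[
\tilde\zeta_A(s)=\frac{\zeta_A(s)-\delta^{s-N}|A_\delta|}{N-s},
\]
so that a holomorphic extension of $\zeta_A$ to some half-plane $\{\re s>\alpha\}$ (with $\alpha$ arbitrarily close to $D_{\rm hol}(\zeta_A)\le\ov\dim_B A<N$) yields a meromorphic extension of $\tilde\zeta_A$ to the same half-plane with at worst a simple pole at $s=N$.

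The main subtlety, and precisely where the hypothesis $\ov\dim_B A<N$ is essential, is the removability of this potential pole at $s=N$. I would argue that since $N>\ov\dim_B A=D(\zeta_A)$, the value $\zeta_A(N)=\int_{A_\delta}\D x=|A_\delta|$ is read off directly from the defining integral~\eqref{z}, so the numerator $\zeta_A(s)-\delta^{s-N}|A_\delta|$ vanishes at $s=N$. By the principle of analytic continuation this vanishing persists on any holomorphic extension containing $s=N$, canceling the simple factor $(N-s)^{-1}$ and promoting the meromorphic extension of $\tilde\zeta_A$ to a holomorphic one on $\{\re s>\alpha\}$. This yields $D_{\rm hol}(\tilde\zeta_A)\le D_{\rm hol}(\zeta_A)$ and completes the proof.
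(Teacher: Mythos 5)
Your proof is correct and uses the same central tool as the paper, namely the functional equation \eqref{equality} of Theorem~\ref{equr}, which the paper invokes in a single sentence (``follows at once''); your argument carefully supplies the details that phrase conceals, in particular that the Tonelli identity underlying Lemma~\ref{identity0} holds as an identity in $[0,+\infty]$ for every real $\alpha<N$, so divergence of $\zeta_A(\alpha)$ propagates to $\tilde\zeta_A(\alpha)$. One minor remark: for removability of the potential pole at $s=N$ you verify directly that $\zeta_A(N)=|A_\delta|$ cancels the boundary term; the paper's intended route (stated after Definition~\ref{zeta_tilde}) is instead to observe that $\tilde\zeta_A$ is already holomorphic at $s=N$ by the second part of Theorem~\ref{equr}, so the quotient extension must agree with it there and thus cannot have a pole --- the two arguments are equivalent and both hinge on $\ov\dim_BA<N$ in the same place.
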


\begin{proof}
This follows at once from Equation \eqref{equality} of Theorem \ref{equr} and from the definition of $D(f)$ and $D_{\rm hol}(f)$, for $f=\zeta_A$ or $f=\tilde\zeta_A$.
\end{proof}

\medskip

The following theorem is, in particular, a higher-dimensional generalization of \cite[Theorem~1.17]{lapidusfrank12} and yields more information than the latter result, when $N=1$. (The problem of constructing meromorphic extensions
of fractal zeta functions is studied in \cite{mezf}.)

\begin{theorem}\label{pole1}
Assume that the bounded set $A\st\eR^N$ is Minkowski nondegenerate $($that is, $0<\M_*^D(A)\le\M^{*D}(A)<\ty$, and, in particular, $\dim_BA=D$$)$, and $D<N$. If, in addition, $\zeta_A(\,\cdot\,,A_\delta)$ can be extended meromorphically to a neighborhood of $s= D$,
then $D$ is necessarily a simple pole of $\zeta_A(\,\cdot\,,A_\delta)$, and 
the value of the residue of $\zeta_A(\,\cdot\,,A_\d)$ at $D$, $\res(\zeta_A(\,\cdot\,,A_\delta), D)$, does not depend on $\delta>0$. Furthermore,
\begin{equation}\label{res}
(N-D)\M_*^D(A)\le\res(\zeta_A(\,\cdot\,,A_\delta),D)\le(N-D)\M^{*D}(A),
\end{equation}
and
in particular, if $A$ is Minkowski measurable, then 
\begin{equation}\label{pole1minkg1=}
\res(\zeta_A(\,\cdot\,,A_\delta), D)=(N-D)\M^D(A).
\end{equation}
\end{theorem}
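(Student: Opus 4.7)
The plan is to reduce the claim to an analogous statement about the tube zeta function $\tilde\zeta_A$ via the functional equation \eqref{equality} of Theorem~\ref{equr}, which rearranges to
\begin{equation}
\zeta_A(s,A_\delta)=\delta^{s-N}|A_\delta|+(N-s)\tilde\zeta_A(s).
\end{equation}
Since the first summand is entire in $s$, any meromorphic extension of $\zeta_A(\,\cdot\,,A_\delta)$ to a neighborhood of $D$ induces one for $\tilde\zeta_A$, and
\begin{equation}
\res(\zeta_A(\,\cdot\,,A_\delta),D)=(N-D)\res(\tilde\zeta_A,D).
\end{equation}
It therefore suffices to show that $D$ is a simple pole of $\tilde\zeta_A$ with $\mathcal{M}_*^D(A)\le\res(\tilde\zeta_A,D)\le\mathcal{M}^{*D}(A)$.

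Next, fix $\epsilon>0$. By the definitions in \eqref{mink}, there exists $t_0\in(0,\delta)$ such that
\begin{equation}
(\mathcal{M}_*^D(A)-\epsilon)\,t^{N-D}\le|A_t|\le(\mathcal{M}^{*D}(A)+\epsilon)\,t^{N-D}\quad\text{for all }t\in(0,t_0).
\end{equation}
Split $\tilde\zeta_A(s)=I_1(s)+I_2(s)$, where $I_1(s):=\int_0^{t_0}t^{s-N-1}|A_t|\,\D t$ and $I_2(s):=\int_{t_0}^\delta t^{s-N-1}|A_t|\,\D t$. Since $|A_t|$ is bounded on $[t_0,\delta]$, the integral $I_2$ defines an entire function of $s$. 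For real $s>D$, the sandwich bounds together with the identity $\int_0^{t_0}t^{s-D-1}\,\D t=t_0^{s-D}/(s-D)$ give
\begin{equation}
(\mathcal{M}_*^D(A)-\epsilon)\frac{t_0^{s-D}}{s-D}\le I_1(s)\le(\mathcal{M}^{*D}(A)+\epsilon)\frac{t_0^{s-D}}{s-D}.
\end{equation}

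Multiplying by $(s-D)>0$ and letting $s\to D^+$ along the real axis yields
\begin{equation}
\mathcal{M}_*^D(A)-\epsilon\le\liminf_{s\to D^+}(s-D)\tilde\zeta_A(s)\le\limsup_{s\to D^+}(s-D)\tilde\zeta_A(s)\le\mathcal{M}^{*D}(A)+\epsilon,
\end{equation}
and since $\epsilon>0$ is arbitrary, the quantity $(s-D)\tilde\zeta_A(s)$ admits a finite, strictly positive limit $L$ as $s\to D^+$, with $\mathcal{M}_*^D(A)\le L\le\mathcal{M}^{*D}(A)$. Writing the Laurent expansion of the meromorphic extension of $\tilde\zeta_A$ at $D$ and examining $(s-D)\tilde\zeta_A(s)$, one sees that a pole of order $\ge 2$ would force this real limit to be infinite while a removable singularity would force it to be zero; thus $D$ is necessarily a simple pole and $L=\res(\tilde\zeta_A,D)$, establishing \eqref{res}. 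Independence of the residue on $\delta$ follows from Proposition~\ref{O}, and \eqref{pole1minkg1=} is immediate, since Minkowski measurability forces $\mathcal{M}_*^D(A)=\mathcal{M}^{*D}(A)=\mathcal{M}^D(A)$.

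The main obstacle will be a clean simultaneous extraction of both the pole order and the residue bounds from the one-sided real-axis limit: the key point is that the two-sided sandwich, combined with the a priori existence of a meromorphic extension, is exactly what upgrades ``finite positive limit along the real axis'' to ``simple pole with residue in the prescribed interval''; without the meromorphic extension hypothesis one could not rule out essential-type behavior, and without the strict positivity of $\mathcal{M}_*^D(A)$ one could not rule out a removable singularity.
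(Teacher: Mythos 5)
Your proof is correct and arrives at the result by a mildly different but very natural route. The paper's proof works directly with $\zeta_A$: it first invokes Theorem~\ref{an}(c) to conclude that $D$ is a singularity, then bounds $\zeta_A(s,A_\delta)$ for real $s\in(D,N)$ by $C_\delta(N-D)\delta^{s-D}/(s-D)$ where $C_\delta:=\sup_{t\in(0,\delta]}|A_t|/t^{N-D}$, from which it reads off the simple-pole claim and (after letting $\delta\to 0^+$, since the residue is $\delta$-independent) the upper bound in~\eqref{res}; the lower bound is then said to follow analogously using $\inf$ in place of $\sup$. You instead pass first to $\tilde\zeta_A$ via the functional equation, fix $\epsilon>0$, and use the $\epsilon$-definition of the upper and lower Minkowski contents to get a two-sided bound on $|A_t|$ for $t<t_0$, splitting the integral at $t_0$ and noting the tail is entire. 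This gives the two residue bounds simultaneously and avoids the $\delta\to 0^+$ limit. In effect you prove what is Theorem~\ref{pole1mink_tilde} in the paper and then transfer it back to $\zeta_A$ by the functional equation, while the paper proves Theorem~\ref{pole1} directly and then derives the tube version from it; the content is the same, and both rely on the identity~\eqref{equality}, the monotonicity/sandwich of $|A_t|/t^{N-D}$, and the meromorphicity hypothesis to upgrade the one-sided real-axis limit into a statement about the Laurent expansion. One small point you gloss over, which is worth stating: the step ``pole of order $\ge 2$ forces the real limit to be infinite'' uses that $\tilde\zeta_A$ is real (indeed positive) on $(D,N)\cap\eR$, so the leading Laurent coefficient is real and $(s-D)\tilde\zeta_A(s)\to\pm\infty$ rather than merely oscillating; since your sandwich already restricts to real $s$ this is implicit, but a referee might want it spelled out.
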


\begin{proof} Since $\M_*^D(A)>0$, using Theorem~\ref{an}(c) we conclude that $s=D$ is a pole of $\zeta_A=\zeta_A(\,\cdot\,,A_\d)$. Therefore, it suffices to show that the order of the pole at $s=D$ is not larger than $1$.
Let us take any fixed $\delta>0$, and let 
\begin{equation}\label{Cdelta}
C_\delta:=\sup_{t\in(0,\delta]}\frac{|A_t|}{t^{N-D}}.
\end{equation} 
Note that $C_\delta<\ty$ because $\M^{*D}(A)<\ty$. Then, in light of \eqref{equality}, for all $s\in\eR$ with $D<s<N$, we have
\begin{equation}\label{res0}
\begin{aligned}
\zeta_A(s,A_\delta)&=\int_{A_\delta}d(x,A)^{s-N}\D x=\delta^{s-N}|A_\delta|+(N-s)\int_0^\delta t^{s-N-1}|A_t|\,\D t\\
&\le C_\delta\delta^{s-D}+C_\delta(N-s)\frac{\delta^{s-D}}{s-D}=C_\delta(N-D)\delta^{s-D}\frac1{s-D}.
\end{aligned}
\end{equation}
Therefore, $0<\zeta_A(s,A_\delta)\leq C_1(s-D)^{-1}$ for all $s\in(D,N)$. This shows that $s=D$ is a pole of $\zeta_A(s,A_\delta)$ which is at most of order $1$, and the first claim is established. Namely, $D$ is a simple pole of $\zeta_A(s,A_{\delta})$.

The fact that the residue of $\zeta_A(s,A_\delta)$ at $s=D$ is independent of the value of $\delta>0$ follows immediately from Proposition \ref{O}.
In order to prove the second inequality in (\ref{res}), is suffices to multiply (\ref{res0}) by $s-D$, with $s$ real, and take the limit as $s\to D^+$ along the real axis:
\begin{equation}\label{res-delta}
\res(\zeta_A(\,\cdot\,,A_\delta),D)\le(N-D)\lim_{s\to D^+}C_\delta\delta^{s-D}=(N-D)C_\delta.
\end{equation} 
Since the residue of $\zeta_A(s,A_\delta)$ at $D$ does not depend on $\delta$,
 (\ref{res}) follows from (\ref{res-delta}) by recalling the definition of $C_\delta$ given in (\ref{Cdelta}) and passing to the limit as $\delta\to0^+$
 (note that the function $\d\mapsto C_\d$ is nondecreasing and that $C_\d\to\M^{*D}(A)$ as $\delta\to0^+$) on the right-hand side of (\ref{res-delta}). The first inequality in (\ref{res}) is proved analogously by replacing the supremum by an infimum in the definition of $C_\d$ given in \eqref{Cdelta}.
\end{proof}

\begin{example}[Residues of the zeta function of the generalized Cantor set]\label{res-cantor} 
Let $A=C^{(a)}$ be the generalized Cantor set 
 defined by the parameter $a\in(0,1/2)$. Recall that $C^{(a)}$ is obtained by deleting the middle interval of length $1-2a$ from the interval $[0,1]$,
and then continuing in the usual way, scaling by the factor $a$ at each step. 
(For $a=1/3$, we obtain the middle third Cantor set, which is studied in detail in \cite{lapiduspom} and, from the point of view of geometric zeta functions and the associated complex dimensions, in \cite{lapidusfrank12}.) 
By a direct computation, we obtain the corresponding zeta function:
\begin{equation}\label{cantor_z}
\zeta_A(s,A_\delta):=\frac{2^{1-s}(1-2a)^s}{s(1-2a^s)}+2\delta^ss^{-1}.
\end{equation}
Its residue computed at $D=D(a):=\dim_BA=\log_{1/a}2$ is given by
\begin{equation}\label{cantor_res}
\res(\zeta_A(\,\cdot\,,A_\delta),D)=\frac{2}{\log 2}\left(\frac12-a\right)^{D}.
\end{equation}
On the other hand, the values of the lower and upper $D$-dimensional Minkowski contents are respectively equal to (see~\cite[Equations (3.12) and (3.13) for $m=2$]{mink}):
\begin{equation}\label{cantorM}
\M_*^D(A)=\frac 1D\left(\frac{2D}{1-D}\right)^{1-D},\quad \M^{*D}(A)=2(1-a)\left(\frac12-a\right)^{D-1},
\end{equation} 
and thus $\M_*^D(A)<\M^{*D}(A)$ (see also Remark \ref{<} below). It follows that $C^{(a)}$ is not Minkowski measurable.
Therefore, for any generalized Cantor set $A=C^{(a)}$, with $a\in(0,1/2)$, we have that 
\begin{equation}\label{cantorM1}
(1-D)\M_*^D(A)<\res(\zeta_A(\,\cdot\,,A_\delta),D)<(1-D)\M^{*D}(A).
\end{equation} 
 This is in agreement with (\ref{res}) in Theorem~\ref{pole1}. In particular, since the functions $(0,1/2)\ni a\mapsto \M_*^D(A)$ and $a\mapsto \M^{*D}(A)$ are bounded, and
$D=\log_{1/a}2\to 1^-$ as $a\to1/2^-$, we have that for any positive $\delta$,
\begin{equation}
\lim_{a\to1/2^-}\res(\zeta_A(\,\cdot\,,A_\delta),D)=0.\nonumber
\end{equation}


The residues of $\zeta_A(s,A_\delta)$ at the poles $s_k:=D+k\mathbf{p}{\I}$, $k\in\Ze$, on the critical line $\{\re s=D\}$, expressed in terms of the residue at $D$ and the `oscillatory period' (see \cite{lapidusfrank12}) $\mathbf{p}:=2\pi/\log(1/a)$, are the following:
\begin{equation}
\res(\zeta_A(\,\cdot\,,A_\delta),s_k)=\frac{D2^{-k\mathbf{p}{\I}}(1-2a)^{k\mathbf{p}{\I}}}
{s_ka^{k\mathbf{p}{\I}}}\res(\zeta_A(\,\cdot\,,A_\delta),D),\quad k\in\Ze.
\end{equation}
\end{example}

\begin{remark}\label{<}
As we have already noted, the two inequalities in (\ref{cantorM1}) are in agreement with (\ref{res}) in Theorem~\ref{pole1}.
In \cite{mezf}, we prove that the strict inequalities in (\ref{res}) are not just a coincidence: indeed, they hold
for a large class of Minkowski nonmeasurable sets in Euclidean spaces. An analogous remark applies to 
the inequalities (\ref{zeta_tilde_M}) in Theorem \ref{pole1mink_tilde} below, dealing with tube zeta functions.
\end{remark}

\subsection{Tube zeta functions of fractal sets and their residues}\label{residues_m_tube}
Going back to Theorem~\ref{equr}, we see that it is natural to introduce a new fractal zeta function of bounded subsets $A$ of $\eR^N$.

\begin{defn}\label{zeta_tilde}  Let $\delta$ be a fixed positive number, and let $A$ be a bounded subset of $\eR^N$. Then, the {\em tube zeta function} of $A$, denoted by $\tilde\zeta_A$, is defined by
\begin{equation}\label{tildz}
\tilde\zeta_A(s)=\int_0^\delta t^{s-N-1}|A_t|\,\D t,
\end{equation}
for all $s\in\Ce$ with $\re s$ sufficiently large. As we know from Theorem \ref{equr}, the tube zeta function is (absolutely) convergent
(and hence, holomorphic) on the open right half-plane $\{\re s>\ov\dim_BA\}$.
\end{defn} 

We call $\tilde\zeta_A$ the tube zeta function of $A$ since its definition involves the tube function $(0,\delta)\ni t\mapsto |A_t|$. Relation (\ref{equality}) can be written as follows (with $\zeta_A(s)=\zeta_A(s,A_\d)$, as before, and $\tilde\zeta_A(s)=\tilde\zeta_A(s,A_\d)$, for emphasis):
\begin{equation}\label{equ_tilde}
\zeta_A(s,A_\delta)=\delta^{s-N}|A_\delta|+(N-s)\tilde\zeta_A(s,A_\d),
\end{equation}
for any $\delta>0$ and for all $s\in\Ce$ such that $\re s>\ov\dim_BA$.


From the {\em functional equation} \eqref{equ_tilde} relating $\zeta_A$ and $\tilde\zeta_A$, it would seem that $\tilde\zeta_A$ has a singularity at $s=N$.
However, from the second part of Theorem \ref{equr} we see that for $\ov\dim_BA<N$, the value $s=N$ is regular (i.e., holomorphic) for $\tilde\zeta_A$.
It then follows from \eqref{equ_tilde} that the two fractal zeta functions $\zeta_A$ and $\tilde\zeta_A$ contain essentially the same information.

In particular, still assuming that $\ov\dim_BA<N$, $\tilde\zeta_A$ has a meromorphic continuation to a given domain $U\stq\Ce$ (containing the critical line $\{\re s=\ov\dim_BA\}$) if and only if $\zeta_A$ does, and in that case (according to the principle of analytic continuation), the unique meromorphic continuations to $U$ of $\zeta_A$ and $\tilde\zeta_A$ are still related by the functional equation \eqref{equ_tilde}.
Also in that case, the residues (or, more generally, the principal parts) of $\zeta_A$ and $\tilde\zeta_A$ of a given simple (resp., multiple) pole of $s=\o\in U$ are related in a very simple manner; see, e.g., Equation \eqref{1.3.18} below in the case of the simple pole $s=\ov\dim_BA$. Furthermore, $\mathcal{P}(\zeta_A)=\mathcal{P}(\tilde\zeta_A)$ and (assuming that $U$ contains the critical line $\{\re s=\ov\dim_BA\}$), $\mathcal{P}_c(\zeta_A)=\mathcal{P}_c(\tilde\zeta_A)$.

Moreover, we have that $D(\tilde\zeta_A)=D(\zeta_A)$, $D_{\rm hol}(\tilde\zeta_A)=D_{\rm hol}(\zeta_A)$ and $D_{\rm mer}(\tilde\zeta_A)=D_{\rm mer}(\zeta_A)$.
(Here, $D_{\rm mer}(f)$, the {\em abscissa of meromorphic continuation} of a given meromorphic function $f$, is defined exactly as $D_{\rm hol}(f)$ in Equation \eqref{Dhol} and the surrounding text, except for ``holomorphic'' replaced by ``meromorphic''; and similarly for the half-plane of meromorphic continuation of $f$.)
Also, we have $\Pi(\tilde\zeta_A)=\Pi(\zeta_A)$ and $\mathcal{H}(\tilde\zeta_A)=\mathcal{H}(\zeta_A)$; similarly, the half-planes of meromorphic continuation of $\tilde\zeta_A$ and $\zeta_A$ coincide.

Still in light of \eqref{equ_tilde}, it follows from Theorem \ref{equr} that $\tilde\zeta_A$ is holomorphic on $\{\re s>\ov\dim_BA\}$ 
and that (provided $\ov\dim_BA<N$), the lower bound $\ov\dim_BA$ is optimal from the point of view of the convergence of the Lebesgue integral defining $\zeta_A$
in \eqref{tildz}; i.e., $D(\tilde\zeta_A)\,(=D(\zeta_A))=\ov\dim_BA$. More generally, the exact analog of Theorem \ref{an} holds for $\tilde\zeta_A$ (instead of $\zeta_A$), except for the fact that in the counterpart of part $(c)$ of Theorem  \ref{an} we no longer need to assume that $D<N$ (where $D:=\dim_BA$).

Assuming that there exists a meromorphic extension of $\zeta_A(s,A_\delta)$
to an open connected neighborhood of $\ov D:=\ov\dim_BA$, and $\ov D$ is a simple pole, $\ov D<N$, then it easily follows from (\ref{equ_tilde}) that
\begin{equation}\label{1.3.18}
\res(\tilde\zeta_A,\ov D)=\frac1{N-\ov D}\res(\zeta_A(\,\cdot\,,A_\delta),\ov D).
\end{equation}
Indeed, 
\begin{eqnarray}
\res(\zeta_A(\,\cdot\,,A_\delta),\ov D)&=&\lim_{s\to \ov D}(s-\ov D)[\delta^{s-N}|A_\delta|+(N-s)\tilde\zeta_A(s)]\nonumber\\
&=&(N-\ov D)\lim_{s\to\ov D}(s-\ov D)\tilde\zeta_A(s)\nonumber\\
&=&(N-\ov D)\res(\tilde\zeta_A,\ov D).\nonumber
\end{eqnarray}
Hence, the following result, in the case when $D<N$, is an immediate consequence of Theorem~\ref{pole1} and relation (\ref{equality}) (or, equivalently, \eqref{equ_tilde}), while in the case when $D=N$, it can be shown directly.

\begin{theorem}\label{pole1mink_tilde}
Assume that $A$ is a bounded subset of $\eR^N$ such that  $D:=\dim_BA$ exists, 
 $0<\M_*^D(A)\le\M^{*D}(A)<\ty$,
and there exists a meromorphic extension of $\tilde\zeta_A$ to an open neighborhood of $D$.
Then $D$ is a simple pole, and for any positive $\delta$, the value of $\res(\tilde{\zeta}_A,D)$ is independent of $\delta$.
Furthermore, we have
\begin{equation}\label{zeta_tilde_M}
\M_*^D(A)\le\res(\tilde\zeta_A, D)\le \M^{*D}(A),
\end{equation}
and, in particular, if $A$ is Minkowski measurable, then 
\begin{equation}\label{zeta_tilde_Mm}
\res(\tilde\zeta_A, D)=\M^D(A).
\end{equation}
\end{theorem}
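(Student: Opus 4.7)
The plan is to derive this theorem almost entirely as a corollary of Theorem \ref{pole1} combined with the functional equation \eqref{equ_tilde} of Theorem \ref{equr}, handling the cases $D<N$ and $D=N$ separately.

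For the case $D<N$, I would start from the functional equation
\[
\zeta_A(s,A_\delta)=\delta^{s-N}|A_\delta|+(N-s)\,\tilde\zeta_A(s),
\]
valid wherever the meromorphic extensions exist. Since $s\mapsto\delta^{s-N}|A_\delta|$ is entire and the factor $(N-s)$ is nonvanishing at $s=D<N$, the hypothesis that $\tilde\zeta_A$ extends meromorphically to a neighborhood of $D$ is equivalent to the same property for $\zeta_A$, and the pole orders at $s=D$ agree. Theorem \ref{pole1} then gives simplicity of the pole, independence from $\delta$ of $\operatorname{res}(\zeta_A(\,\cdot\,,A_\delta),D)$, and the sandwich estimates
\[
(N-D)\mathcal{M}_*^D(A)\le\operatorname{res}(\zeta_A(\,\cdot\,,A_\delta),D)\le(N-D)\mathcal{M}^{*D}(A).
\]
Taking residues on both sides of the functional equation yields $\operatorname{res}(\zeta_A,D)=(N-D)\operatorname{res}(\tilde\zeta_A,D)$; dividing through by $N-D>0$ gives \eqref{zeta_tilde_M} and, in the Minkowski measurable subcase, the equality \eqref{zeta_tilde_Mm}. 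Independence of $\delta$ transfers automatically from $\zeta_A$ to $\tilde\zeta_A$ via the same identity.

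For the case $D=N$, I would argue directly. Since $t\mapsto A_t$ is increasing in $t$, the tube function $|A_t|$ is nondecreasing as $t\to0^+$ and converges to $|\overline A|$ by continuity of the Lebesgue measure (applied to $\bigcap_{t>0}A_t=\overline A$). Consequently $\mathcal{M}_*^N(A)=\mathcal{M}^{*N}(A)=|\overline A|$, so both inequalities in the hypothesis collapse to an equality and $A$ is automatically Minkowski measurable with $\mathcal{M}^N(A)=|\overline A|$. Writing $|A_t|=|\overline A|+\varphi(t)$ with $\varphi(t)\to0$ as $t\to0^+$ (and $\varphi$ bounded on $(0,\delta]$), I would split
\[
\tilde\zeta_A(s)=|\overline A|\int_0^\delta t^{s-N-1}\,\mathrm dt+\int_0^\delta t^{s-N-1}\varphi(t)\,\mathrm dt=\frac{|\overline A|\delta^{s-N}}{s-N}+R(s),
\]
initially for $\re s>N$, and then evaluate $\lim_{s\to N^+}(s-N)\tilde\zeta_A(s)=|\overline A|$ provided $(s-N)R(s)\to0$; since $R$ extends meromorphically to a neighborhood of $N$ (as a consequence of the assumed extension of $\tilde\zeta_A$ and the explicit form of the first term), and since $\varphi(t)\to0$ forces $R$ to be bounded near $s=N$ by a standard splitting of the integral at a small threshold combined with a Hardy--Littlewood Tauberian style estimate, this limit indeed vanishes. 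This yields simplicity of the pole, independence from $\delta$, and $\operatorname{res}(\tilde\zeta_A,N)=|\overline A|=\mathcal{M}^N(A)$.

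The only genuinely delicate step is the $D=N$ case: more specifically, verifying that the remainder $R(s)$ contributes no residue at $s=N$. Everything else is a direct translation via the functional equation. I would therefore spend the bulk of the exposition on the $D<N$ case, where the argument is clean and mechanical, and give a short but careful treatment of $D=N$ to justify the boundary behavior of the tube function near $t=0$.
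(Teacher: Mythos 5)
Your proof for $D<N$ is exactly the paper's: the paper states that the theorem ``in the case when $D<N$, is an immediate consequence of Theorem~\ref{pole1} and relation~\eqref{equality} (or, equivalently,~\eqref{equ_tilde}), while in the case when $D=N$, it can be shown directly,'' and your chain — transfer pole data through the functional equation, apply Theorem~\ref{pole1}, take residues using the already-derived relation $\res(\zeta_A,D)=(N-D)\res(\tilde\zeta_A,D)$, divide by $N-D>0$ — is precisely that argument. Your $D=N$ treatment, which the paper leaves unspoken, is also sound: the decomposition $|A_t|=|\overline A|+\varphi(t)$ with $\varphi(t)\to0$ gives $(s-N)R(s)\to0$ as $s\to N^+$ along the reals by splitting the integral at a small threshold (the ``Tauberian'' label is unnecessary — it is a direct estimate), and combined with the assumed meromorphic extension this rules out a pole of $R$ at $N$, yielding a simple pole of $\tilde\zeta_A$ there with residue $|\overline A|=\M^N(A)$; the only phrasing to tighten is your claim that $\varphi(t)\to0$ ``forces $R$ to be bounded near $s=N$'' — what the estimate actually shows is that $(s-N)R(s)\to0$, and boundedness of $R$ near $N$ is then a consequence of meromorphy, not a separate input.
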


In the following example, we compute the complex dimensions of the unit $(N-1)$-dimensi\-o\-nal sphere in $\eR^N$, using the tube zeta function of the sphere.

\begin{example}\label{sphere} Let $A:=\pa B_1(0)$ be the unit $(N-1)$-dimensional sphere in $\eR^N$ centered at the origin. We would like to compute its complex dimensions. To this end, we first compute the corresponding tube zeta function $\tilde\zeta_A$. Let us fix any $\d\in(0,1)$. Since $|A_t|=\o_N(1+t)^N-\o_N(1-t)^N$, where $t\in(0,1)$ and $\o_N$ is the $N$-dimensional Lebesgue measure of the unit ball in $\eR^N$, we have that for any fixed $\d\in(0,1)$,
$$
\begin{aligned}
\tilde\zeta_A(s)&=\int_0^\d t^{s-N-1}|A_t|\,\D t=\o_N\int_0^\d t^{s-N-1}((1+t)^N-(1-t)^N)\,\D t\\
&=\o_N\int_0^\d t^{s-N-1}\Bigg(\sum_{k=0}^N\binom Nk\big(1-(-1)^k\big) t^k\Bigg)\,\D t\\
&=\o_N\sum_{k=1}^N\big(1-(-1)^k\big) \binom Nk\frac{\d^{s-N+k}}{s-(N-k)},
\end{aligned}
$$
for all $s\in\Ce$ with $\re s>N-1$.
The last expression can be meromorphically extended to the whole complex plane, and we still denote it by $\tilde\zeta_A(s)$. Therefore,
we have
\begin{equation}\label{SN-1}
\tilde\zeta_A(s)=\o_N\sum_{k=0}^N\big(1-(-1)^k\big)\binom Nk\frac{\d^{s-N+k}}{s-(N-k)},
\end{equation}
for all $s\in\Ce$. It follows that
\begin{equation}\label{dimBAN-1}
\begin{gathered}
\dim_BA=D(\tilde\zeta_A)=D(\zeta_A)=N-1,\\
\po_c(\tilde\zeta_A)=\po_c(\zeta_A)=\{N-1\},
\end{gathered}
\end{equation}
as expected. (Note that $\dim_BA=N-1<N$, so that $\po_c(\tilde\zeta_A)=\po_c(\zeta_A)$ and $\po(\tilde\zeta_A)=\po(\zeta_A)$.)
Moreover, still in light of \eqref{SN-1}, the set of complex dimensions of $A$ is given by (with $\lfloor x\rfloor$ denoting the integer part of $x\in\eR$)
\begin{equation}\label{dimSN-1}
\begin{aligned}
\po(\tilde\zeta_A)=\po(\zeta_A)&=\Big\{N-(2j+1):j=0,1,2,\dots,\Big\lfloor\frac{N-1}2\Big\rfloor\Big\}\\
&=\Big\{N-1,N-3,\dots,N-\Big(2\Big\lfloor\frac{N-1}2\Big\rfloor+1\Big)\Big\}.
\end{aligned}
\end{equation}
For odd $N$, the last number in this set is equal to $0$, while for even $N$, it is equal to $1$.
Furthermore, the residue of the tube zeta function $\tilde\zeta_A$ at any of its poles $N-k\in\po(\tilde\zeta_A)$ is given by
$\res(\tilde\zeta_A,N-k)=2\o_N\binom Nk$;
that is,
\begin{equation}\label{resSN-1b}
\res(\tilde\zeta_A,d)=2\o_N\binom Nd,\q\mbox{for all\q $d\in\po(\tilde\zeta_A)$}.
\end{equation}
Note that in the case when $d=D:=N-1$, we obtain
\begin{equation}\label{resBR0}
\res(\tilde\zeta_A,D)=2N\o_N=\M^D(A),
\end{equation}
where the last equality is easily obtained from the definition of the Minkowski content, as follows:
$$
\M^D(A)=\lim_{t\to0^+}\frac{|A_t|}{t^{N-D}}=\lim_{t\to0^+}\frac{\o_N(1+t)^N-\o_N(1-t)^N}{t}=2N\o_N.
$$
In other words, $A$ is Minkowski measurable and
\begin{equation}\label{MHD}
\M^D(A)=2\, \mathcal{H}^D(A),
\end{equation}
where $\mathcal{H}^D$ denotes the $D$-dimensional Hausdorff measure. (Equation \eqref{MHD} is a special case of a much more general result proved by Federer in \cite[Theorem 3.2.39]{federer}.)
Equation \eqref{resBR0} is in agreement with Equation \eqref{zeta_tilde_Mm} in Theorem \ref{pole1mink_tilde}.
\end{example}

\subsection{Residues of tube zeta functions of generalized Cantor sets and $a$-strings}\label{residues_m_zeta}
We provide here two simple examples illustrating some of the main results of this section.

\begin{example}[Generalized Cantor sets, Example~\ref{res-cantor} continued]\label{res-cantor2}
As an illustration of inequality (\ref{zeta_tilde_M}), we consider generalized Cantors sets, $A=C^{(a)}$, $a\in(0,1/2)$. We obtain
\begin{equation}\label{cantorM2}
\M_*^D(A)<\res(\tilde\zeta_A(\,\cdot\,,A_\delta),D)<\M^{*D}(A),
\end{equation} 
where the values of the lower and upper Minkowski contents, $\M_*^D(A)$ and $\M^{*D}(A)$, are given by \eqref{cantorM} and $D=D(a)=\log_{1/a}2$.
It is worth observing that $C^{(a)}$ becomes almost like a Minkowski measurable set for $a$ close to $1/2$, since
both $\M^{*D}(A)$ and $\M_*^D(A)$ tend to the common limit $1$ as $a\to1/2^-$. 

On the other hand, in the limit where $a\to0^+$, $C^{(a)}$ remains Minkowski nonmeasurable since
\begin{equation}
\lim_{a\to0^+}\M^{*D}(A)=4,\quad \lim_{a\to0^+}\M_*^D(A)=2.
\end{equation}
\end{example}

\begin{example}[$a$-strings]\label{a-string2}
Given $a>0$, the associated {\em $a$-string} is defined by $\mathcal L=(\ell_j)_{j\ge1}$,
where $\ell_j=j^{-a}-(j+1)^{-a}$. Let $A=A_{\mathcal L}=\{j^{-a}:j\in\eN\}$ be the associated set; see Example \ref{L} and the discussion preceding it. This set is Minkowski measurable,
\begin{equation}\label{a-string}
\M^D(A)=\frac{2^{1-D}}{D(1-D)}a^D,\quad D=D(a)=\frac1{1+a}.
\end{equation} 
This fractal string has been introduced in \cite[Example 5.1]{Lap1}.
Due to (\ref{pole1minkg1=}) and (\ref{zeta_tilde_Mm}), we know that
\begin{equation}\label{a-string-mink}
\res(\zeta_A(\,\cdot\,,A_\delta),D)=(1-D)\M^D(A),\quad \res(\tilde\zeta_A,D)=\M^D(A).
\end{equation}
\end{example}

\subsection{Distance and tube zeta functions of fractal grills}\label{dtx}

It is of interest to understand the behavior of the distance and tube zeta functions with respect to the Cartesian products of sets.
In this subsection, we restrict our attention to Cartesian products of the form $A\times[0,1]^k\st\eR^{N+k}$, which we call {\em fractal grills}. Here, $A$ is a bounded subset of $\eR^N$ and $k$ is any positive integer.

Since the set $A$ can be naturally identified with $A\times\{0\}\st\eR^{N+1}$, it will be convenient to introduce the following notation for all $s\in\Ce$ with $\re s$ sufficiently large:
\begin{equation}\label{[N]}
\zeta_A^{[N]}(s):=\int_{A_\d}d(x,A)^{s-N}\,\D x,\q
\tilde\zeta_A^{[N]}(s):=\int_0^\d t^{s-N-1}|A_t|_N\D t,
\end{equation} 
where the index $[N]$ indicates that we view $A$ as a subset of $\eR^N$ and $|A_t|_N$ is the $N$-dimensional Lebesgue measure of the $t$-neighborhood of $A$ in $\eR^N$. Hence, $\tilde\zeta_A^{[N+1]}(s)=\int_0^\d t^{s-N-2}|(A\times\{0\})_t|_{N+1}\D t$. Note that, by writing $|(A\times\{0\})_t|_{N+1}$, we  interpret $(A\times\{0\})_t$ as the $t$-neighborhood of $A\times\{0\}$ in $\eR^{N+1}$. 
Furthermore, observe that, in \eqref{[N]}, $\zeta_A^{[N]}$ and $\tilde\zeta_A^{[N]}$, are, respectively, the usual distance and tube zeta functions of $A$ (viewed as a bounded subset of $\eR^N$) whereas, for example, $\tilde\zeta_A^{[N+1]}$ is the tube zeta function of $A\times\{0\}$, but now viewed instead as a subset of $\eR^{N+1}$. Moreover, in \eqref{1} and \eqref{2} of Lemma \ref{cartesian} just below, $\zeta_{A\times[0,1]}^{[N+1]}$ and $\tilde\zeta_{A\times[0,1]}^{[N+1]}$ stand, respectively, for the usual distance and tube zeta functions of $A\times[0,1]$ (naturally viewed as a subset of $\eR^{N+1}$).

In the sequel, if $\Sigma$ is a given set of complex numbers and $\kappa\in\Ce$ a fixed complex number, we let $\Sigma+\kappa:=\{s+\kappa:s\in \Sigma\}$.
We shall also need the following definition.

\begin{defn}\label{simeq1}
Assume that $f(s)$ and $g(s)$ are two tamed Dirichlet-type integrals (DTIs, in short) which are (absolutely) convergent on an open right half-plane $\{\re s>\a\}$, for some $\a\in\eR$. Let their difference $h(s):=f(s)-g(s)$ be a tamed DTI
such that
$D(h)<D(g)$. (Or, equivalently, that there exists a real number $\b$, with $\b<D(g)$, such that the integral defining $h$ is absolutely convergent (and hence, holomorphic) on $\{\re s>\b\}$.) Then we say that $f$ and $g$ are {\em weakly equivalent} and write $f\simeq g$. 
\end{defn}

\begin{remark}
It can be checked that if $f$ and $g$ are tamed DTIs, then $f-g$ (or, more generally, any linear combination of $f$ and $g$) is a tamed DTI (as is required in Definition \ref{simeq1} just above) provided both the DTIs $f$ and $g$ are based on the same underlying pair $(E,\f)$ in the notation of Definition \ref{abscissa_f}. 
Therefore, $D(h)$ and $\Pi(h)$ are well defined in that case.
This situation arises, for example, for the tube zeta function discussed in the present subsection. We then have $E:=(0,\d)$ and $\f(t):=t$ for all $t\in E$.
\end{remark}

Note that in Definition \ref{simeq1}, we do not assume that $g$ possesses a meromorphic continuation to a neighborhood of any point on its critical line
$\{\re s=D(g)\}$. 
Case $(c)$ of Lemma \ref{simeq2} below provides a simple and useful condition for the implication $f\simeq g$ $\implies$ $f\sim g$ to hold, where the equivalence $\sim$ is described in Definition \ref{equ} above.

\begin{lemma}\label{simeq2}
Assume that $f$ and $g$ are two tamed Dirichlet-type integrals such that $f\simeq g$.
Then, the following properties hold$:$
\medskip

$(a)$ We have $D(f)=D(g)$. 
\smallskip

$(b)$ The relation $\simeq$ is reflexive and symmetric.
\smallskip

$(c)$ If there exists a connected open set $U\stq\{\re s>D(f-g)\}$ containing the critical line $\{\re s = D(g)\}$ and such that
$g$ can be meromorphically continued to $U$, then $f$ has the same property and $\po_c(f)=\po_c(g)$.
In particular, $f\sim g$ in the sense of Definition \ref{equ}.
\end{lemma}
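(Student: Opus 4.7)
My plan is to base the proof on the identity $f = g + h$, where $h := f - g$ is by hypothesis a tamed DTI based on the same underlying pair $(E,\f)$ as $f$ and $g$, satisfying $D(h) < D(g)$, and hence holomorphic on the half-plane $\{\re s > D(h)\}$. The key auxiliary fact I will use throughout is the elementary estimate $D(\psi_1 + \psi_2) \le \max\{D(\psi_1), D(\psi_2)\}$ for tamed DTIs sharing the same underlying pair, which follows directly from the definition of $D$ in terms of absolute integrability and the triangle inequality $|\psi_1 + \psi_2| \le |\psi_1| + |\psi_2|$.

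For part $(a)$, applied to $f = g + h$ this estimate yields $D(f) \le \max\{D(g), D(h)\} = D(g)$, using $D(h) < D(g)$. For the reverse inequality, I argue by contradiction: if $D(f) < D(g)$, then writing $g = f + (-h)$ and applying the same estimate would give $D(g) \le \max\{D(f), D(h)\} < D(g)$, which is absurd. For part $(b)$, reflexivity is immediate since $f - f \equiv 0$ has abscissa $-\ty < D(f)$ (implicitly, $D(f) \in \eR$), while symmetry follows from $g - f = -(f-g)$ having the same abscissa as $f - g$, which by $(a)$ equals $D(f-g) < D(g) = D(f)$.

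For part $(c)$, since $U \stq \{\re s > D(h)\}$ and tamed DTIs are holomorphic on their half-plane of (absolute) convergence, $h$ is holomorphic on all of $U$. Combined with the assumed meromorphic extension of $g$ to $U$, this shows at once that $f = g + h$ extends meromorphically to $U$. Furthermore, adding the locally holomorphic function $h$ to $g$ preserves the principal part of $g$ at every point of $U$; in particular, the poles of $f$ and $g$ on the common critical line $\{\re s = D(f)\} = \{\re s = D(g)\}$ coincide with matching multiplicities. Thus $\po_c(f) = \po_c(g)$ as multisets, and together with $(a)$ this yields $f \sim g$ in the sense of Definition \ref{equ}.

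The only minor obstacle is notational bookkeeping: one should check that the sum or difference of two tamed DTIs based on the same pair $(E, \f)$ is again a tamed DTI (by forming the corresponding signed combination of the underlying local measures), so that $D(\cdot)$ and the additivity estimate make sense for such combinations. Once this is acknowledged, the rest of the argument is essentially mechanical, with no serious analytic difficulty; the real content of the lemma is that ``perturbation by a DTI of strictly smaller abscissa'' is invisible both to $D$ and to the set of principal poles.
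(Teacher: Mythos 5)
Your proof is correct and follows essentially the same route as the paper's: decompose $f = g + h$ with $D(h) < D(g)$, use subadditivity of the abscissa of convergence together with the paper's Remark following Definition~\ref{simeq1} (that linear combinations of tamed DTIs over the same pair $(E,\f)$ remain tamed DTIs) to get $D(f)\le D(g)$, run the contradiction argument for the reverse inequality, and then observe that the holomorphic perturbation $h$ is invisible to both the abscissa and the principal part at every point of $U$. The paper dispatches parts $(b)$ and $(c)$ in a single sentence each; you have simply fleshed out those steps, and your added caveat about $D(f)\in\eR$ for reflexivity is a fair point consistent with the paper's standing hypotheses.
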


\begin{proof}
$(a)$ Since, by Definition \ref{simeq1}, $f(s)=g(s)+h(s)$ and $D(h)<D(g)$,  we conclude that $D(f)\le D(g)$.
If we had $D(f)<D(g)$, then we would have 
\begin{equation}\label{contr}
\max\{D(f),D(h)\}<D(g).
\end{equation}
On the other hand,
the function (i.e., the DTI) $g(s)=f(s)-h(s)$ is absolutely convergent on $\{\re s>\max\{D(f),D(h)\}\}$,
which is impossible due to \eqref{contr}. This contradiction proves that $D(f)=D(g)$.

 Property $(b)$ follows at once from $(a)$ and Definition \ref{simeq1}.  Finally, property $(c)$ follows easily from the relation $f(s)=g(s)+h(s)$.
\end{proof}

\begin{lemma}\label{cartesian}
Let $A$ be a bounded subset of $\eR^N$. Then 
\begin{equation}\label{1}
\zeta_{A\times[0,1]}^{[N+1]}(s)=\zeta_A^{[N]}(s-1)+\zeta_A^{[N+1]}(s)
\end{equation}
and
\begin{equation}\label{2}
\tilde\zeta_{A\times[0,1]}^{[N+1]}(s)=\tilde\zeta_A^{[N]}(s-1)+\tilde\zeta_A^{[N+1]}(s)
\end{equation}
for all $s\in\Ce$ with $\re s>\ov\dim_BA+1$.
In particular, if $A$ is such that $\zeta_A$ or $($equivalently, provided $\ov\dim_BA<N$$)$ $\tilde\zeta_A$ admits a $($necessarily unique$)$ meromorphic continuation to a
connected open neighborhood
of the critical line of Lebesgue $($absolute$)$ convergence $\{\re s=D(\zeta_A)\}$ $($recall from Theorem \ref{an} that $D(\zeta_A)=\ov\dim_BA$$)$, then
\begin{equation}
\zeta_{A\times[0,1]}^{[N+1]}(s)\simeq\zeta_A^{[N]}(s-1)\q\mbox{\rm and}\q
\tilde\zeta_{A\times[0,1]}^{[N+1]}(s)\simeq\tilde\zeta_A^{[N]}(s-1).
\end{equation}
Hence, if $\zeta_A$ can be meromorphically continued to a connected, open set $U$ containing the critical line
$\{\re s=D(\zeta_A)\}$, then $\mathcal{P}_c(\zeta_{A\times[0,1]}^{[N+1]})=\mathcal{P}_c(\zeta_A^{[N]})+1$; that is,
\begin{equation}\dim_{PC}(A\times[0,1])=\dim_{PC} A+1.
\end{equation} 
In particular, if $\ov\dim_BA<N$, then
\begin{equation}
\begin{aligned}
D(\zeta_{A\times[0,1]}^{[N+1]})&=D(\zeta_A^{[N]})+1=D(\tilde\zeta_A^{[N]})+1=D(\tilde\zeta_{A\times[0,1]}^{[N+1]})\\
&=\ov\dim_B(A\times[0,1])=\ov\dim_BA+1.
\end{aligned}
\end{equation}
\end{lemma}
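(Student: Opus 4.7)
The plan is to prove the two basic identities \eqref{1} and \eqref{2} by a direct geometric decomposition of the $\d$-neighborhood (respectively, tube) of the cylinder $A\times[0,1]$ in $\eR^{N+1}$, and then to deduce the remaining claims by combining these identities with the earlier results of the paper (notably Theorem~\ref{an}, Corollary~\ref{equr_c}, and Lemma~\ref{simeq2}).

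First I would decompose
\[
(A\times[0,1])_\d \;=\; E_0 \cup E_- \cup E_+,
\]
according to whether the second coordinate $y$ of $(x,y)\in\eR^N\times\eR$ satisfies $y\in[0,1]$, $y<0$, or $y>1$. For $(x,y)\in E_0$ the nearest point of $A\times[0,1]$ can be chosen with the same $y$-coordinate, so $d((x,y),A\times[0,1])=d(x,A)$. For $(x,y)\in E_-$ the closest element of $[0,1]$ to $y$ is $0$, so $d((x,y),A\times[0,1])=d((x,y),A\times\{0\})$, and symmetrically for $E_+$ with the role of $0$ played by $1$. Inserting these formulas into the definition of $\zeta_{A\times[0,1]}^{[N+1]}$ and applying Fubini, the contribution of $E_0$ equals
\[
\int_0^1\D y\int_{A_\d} d(x,A)^{s-(N+1)}\,\D x \;=\; \zeta_A^{[N]}(s-1).
\]
The contribution of $E_-$ is the integral of $d(\,\cdot\,,A\times\{0\})^{s-N-1}$ over $(A\times\{0\})_\d\cap\{y<0\}$, and after the translation $y\mapsto y-1$ the contribution of $E_+$ becomes the analogous integral over $(A\times\{0\})_\d\cap\{y>0\}$. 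These two half-tubes together fill the full $(N+1)$-dimensional tube $(A\times\{0\})_\d$ up to a set of measure zero, so their sum equals $\zeta_A^{[N+1]}(s)$, which proves \eqref{1}. An identical partition at the level of tube functions yields $|(A\times[0,1])_t|_{N+1} = |A_t|_N + |(A\times\{0\})_t|_{N+1}$, from which \eqref{2} follows by integration against $t^{s-N-2}$.

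Next, I would derive the weak equivalences using Definition~\ref{simeq1}. Since viewing $A\times\{0\}$ as a subset of $\eR^{N+1}$ does not change its upper box dimension, Theorem~\ref{an}(b) gives $D(\zeta_A^{[N+1]})=D(\tilde\zeta_A^{[N+1]})=\ov\dim_B A$, while the translation $s\mapsto s-1$ yields $D(\zeta_A^{[N]}(\,\cdot\,-1))=D(\tilde\zeta_A^{[N]}(\,\cdot\,-1))=\ov\dim_B A+1$. Since all four of these functions are tamed DTIs, the strict inequality of abscissas, combined with \eqref{1} and \eqref{2}, is exactly the statement required by Definition~\ref{simeq1} and yields both weak equivalences. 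Finally, if $\zeta_A^{[N]}$ (equivalently, under $\ov\dim_B A<N$, $\tilde\zeta_A^{[N]}$, by Corollary~\ref{equr_c}) admits a meromorphic continuation to a connected open set containing the critical line $\{\re s=\ov\dim_B A\}$, then the translation $s\mapsto s-1$ gives such a continuation of $\zeta_A^{[N]}(\,\cdot\,-1)$ about $\{\re s=\ov\dim_B A+1\}$; intersecting with the half-plane $\{\re s > \ov\dim_B A\}=\{\re s > D(\zeta_{A\times[0,1]}^{[N+1]}-\zeta_A^{[N]}(\,\cdot\,-1))\}$ places us in the setting of Lemma~\ref{simeq2}(c), which transfers the continuation to $\zeta_{A\times[0,1]}^{[N+1]}$ and provides the identification
\[
\po_c(\zeta_{A\times[0,1]}^{[N+1]}) \;=\; \po_c(\zeta_A^{[N]}(\,\cdot\,-1)) \;=\; \po_c(\zeta_A^{[N]})+1,
\]
which is precisely $\dim_{PC}(A\times[0,1])=\dim_{PC}A+1$. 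The final chain of abscissa equalities follows at once from Theorem~\ref{an}(b), Corollary~\ref{equr_c}, and part~(a) of Lemma~\ref{simeq2} applied to the weak equivalences just established.

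The main obstacle, mild as it is, lies in the geometric bookkeeping of the decomposition: one must recognize that the two end-cap pieces $E_-$ and $E_+$ assemble (after a unit translation in $y$) into exactly the full $(N+1)$-dimensional tube around $A\times\{0\}$, rather than only half of it, thereby producing the coefficient $1$ in front of $\zeta_A^{[N+1]}(s)$ in \eqref{1} rather than $1/2$.
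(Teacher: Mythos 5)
Your argument is correct, and it differs from the paper's in one interesting respect. The paper first establishes the tube-function identity $|(A\times[0,1])_t|_{N+1}=|A_t|_N+|(A\times\{0\})_t|_{N+1}$ (citing \cite[Remark 1]{maja}), derives \eqref{2} directly from it, and then obtains \eqref{1} from \eqref{2} by invoking the functional equation \eqref{equ_tilde} in the rearranged form $\tilde\zeta_A^{[N]}(s)=(\zeta_A^{[N]}(s)-\d^{s-N}|A_\d|_N)/(N-s)$, followed by algebraic cancellation and an appeal to analytic continuation across the excluded point $s=N+1$. You instead prove \eqref{1} directly by decomposing the $\d$-neighborhood $(A\times[0,1])_\d$ in $\eR^{N+1}$ into the slab $E_0=A_\d\times[0,1]$ (where $d((x,y),A\times[0,1])=d(x,A)$) and the two end caps $E_\pm$ (where the distance is to $A\times\{0\}$ resp.\ $A\times\{1\}$), observing that after translating $E_+$ the two caps reassemble into the full tube $(A\times\{0\})_\d$ up to measure zero — the same decomposition, at the tube-volume level, that the paper uses only for \eqref{2}. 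Your route buys a uniform and elementary treatment of both identities, dispenses with the functional-equation algebra and the $s=N+1$ special case, and makes the coefficient $1$ in front of $\zeta_A^{[N+1]}(s)$ geometrically transparent. The remainder of your argument — identifying the abscissas via Theorem~\ref{an}(b) and its tube-zeta analogue, extracting the weak equivalences from Definition~\ref{simeq1}, and transporting the meromorphic continuation and the principal complex dimensions through Lemma~\ref{simeq2}(c) and (a) — follows the paper's proof closely and is all in order.
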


\begin{proof} Let us first prove Equation \eqref{2}.
It is easy to see (cf.\ \cite[Remark 1]{maja}) that:
\begin{equation}\label{Aid}
|(A\times[0,1])_t|_{N+1}=|A_t|_N\cdot 1+|(A\times\{0\})_t|_{N+1}.
\end{equation}
Substituting into the second equality of \eqref{[N]}, we conclude that
\begin{equation}\label{m=1}
\begin{aligned}
\tilde\zeta_{A\times[0,1]}^{[N+1]}(s)&=\int_0^\d t^{s-N-2}(|A_t|_N+|(A\times\{0\})_t|_{N+1})\,\D t\\
&=\int_0^\d t^{(s-1)-N-1}|A_t|_N\D t+
\int_0^\d t^{s-(N+1)-1}|(A\times\{0\})_t|_{N+1}\D t\\
&=\tilde\zeta_A^{[N]}(s-1)+\tilde\zeta_A^{[N+1]}(s)
\end{aligned}
\end{equation}
\vskip1mm
\noindent for all $s\in\Ce$ with $\re s>\ov\dim_BA+1$.  (Here, we also use the fact that $\ov\dim_BA$ is the same in the case of $A\times\{0\}\st\eR^{N+1}$, as in the case of $A\st\eR^N$; that is, the upper box dimension of a bounded set, as well as the lower box dimension, does not depend on $N$; see [Kne, Satz~7] or \cite[Proposition~1]{maja}.)

Let us next establish Equation \eqref{1}. To this end, we use \eqref{equ_tilde}, which we write in the following form:
\begin{equation}
\tilde\zeta_A^{[N]}(s)=\frac{\zeta_A^{[N]}(s)-\d^{s-N}|A_\d|_N}{N-s},
\end{equation}
for all $s\in\Ce$ with $\re s>\ov\dim_BA$ and $s\ne N$.
\medskip
 Making use of Equation \eqref{m=1}, we deduce that
\begin{equation}
\begin{aligned}\label{Aida}
\frac{\zeta_{A\times[0,1]}^{[N+1]}(s)-\d^{s-N-1}|(A\times[0,1])_\d|_{N+1}}{(N+1)-s}&=\frac{\zeta_A^{[N]}(s-1)-\d^{(s-1)-N}|A_\d|_N}{N-(s-1)}\\
&\phantom{=}+\frac{\zeta_A^{[N+1]}(s)-\d^{s-(N+1)}|(A\times\{0\})_\d|_{N+1}}{(N+1)-s},
\end{aligned}
\end{equation}
for all $s\in\Ce$ with $\re s>\ov\dim_BA$ and $s\ne N+1$. Since, in light of \eqref{Aid}, we have $|(A\times[0,1])_\d|_{N+1}=|A_\d|_N+|(A\times\{0\})_\d|_{N+1}$,
 we conclude from \eqref{Aida} after a short computation that
\begin{equation}
\zeta_{A\times[0,1]}^{[N+1]}(s)=\zeta_A^{[N]}(s-1)+\zeta_A^{[N+1]}(s),
\end{equation}
 for all $s\in\Ce$ with $\re s>\ov\dim_BA+1$, where we have also used the principle of analytic continuation.
Note that, according to Theorem \ref{an}, both $\zeta_A^{[N]}(s-1)$ and $\zeta_{A\times[0,1]}^{[N+1]}(s)$ are holomorphic on $\{\re s>\ov\dim_BA+1\}$
(recall that $\ov\dim_B(A\times[0,1])=\ov\dim_BA+1$, see \cite{falc}), while, according to the same theorem, the function $\zeta_{A\times[0,1]}^{[N+1]}(s)-\zeta_A^{[N]}(s-1)=\zeta_A^{[N+1]}(s)$ is holomorphic on $\{\re s>\ov\dim_BA\}$. Therefore, since 
$D(\zeta_A^{[N+1]})=\ov\dim_BA<\ov\dim_BA+1=D(\zeta_A^{[N]}(\,\cdot\,-1))$, it follows from Definition \ref{simeq1} that
$\zeta_{A\times[0,1]}^{[N+1]}(s)\simeq\zeta_A^{[N]}(s-1)$.

The remaining part of Lemma \ref{cartesian} can be deduced from part $(c)$ of Lemma \ref{simeq2} by noting that since $\zeta_A(s)$ can be 
meromorphically continued to the set $U$, then $\zeta_A(s-1)$ can be meromorphically continued to the set $U+1$. Hence, by Lemma \ref{simeq2}$(c)$, we have 
$\zeta_{A\times[0,1]}^{[N+1]}(s)\sim\zeta_A^{[N]}(s-1)$ in the sense of Definition \ref{equ}, and therefore, 
$$
\po_c\big(\zeta_{A\times[0,1]}^{[N+1]}\big)=\po_c\big(\zeta_A^{[N]}(\,\cdot\,-1)\big)=\po_c\big(\zeta_A^{[N]}\big)+1,
$$ 
or, equivalently, $\dim_{PC}(A\times[0,1])=\dim_{PC}A+1$.
This completes the proof of the lemma.
\end{proof}

\begin{theorem}\label{Axm}
Let $A$ be a bounded subset of $\eR^N$ and let $d$ be a positive integer. Then the following properties hold: 
\medskip

$($a$)$ The distance and tube zeta functions of $A\times[0,1]^d\st\eR^{N+d}$ are given, respectively, by
\begin{equation}\label{md}
\zeta_{A\times[0,1]^d}^{[N+d]}(s)=\sum_{k=0}^d \binom dk \zeta_A^{[N+k]}(s-d+k)
\end{equation}
and
\begin{equation}\label{m}
\tilde\zeta_{A\times[0,1]^d}^{[N+d]}(s)=\sum_{k=0}^d \binom dk \tilde\zeta_A^{[N+k]}(s-d+k),
\end{equation}
for all $s\in\Ce$ with $\re s>\ov\dim_BA+d$. 

\medskip

$($b$)$ If the distance zeta function $\zeta_A$ $($or, equivalently, the tube zeta function $\tilde\zeta_A$$)$ can be meromophically extended to a connected open set containing the
critical line $\{\re s=\ov\dim_BA\}$, then
\begin{equation}\label{Axmsim}
\zeta_{A\times[0,1]^d}^{[N+d]}(s)\sim\zeta_A^{[N]}(s-d),\q
\tilde\zeta_{A\times[0,1]^d}^{[N+d]}(s)\sim\tilde\zeta_A^{[N]}(s-d)
\end{equation}
and  $\mathcal{P}_c(\zeta_{A\times[0,1]^d})=\mathcal{P}_c(\zeta_A)+d$; that is,
\begin{equation}\label{dimCm}
\dim_{PC}(A\times[0,1]^d)=\dim_{PC} A+d.
\end{equation}
In particular, if $\ov\dim_BA<N$, then
\begin{equation}
\begin{aligned}
D(\zeta_{A\times[0,1]^d}^{[N+d]})&=D(\zeta_A^{[N]})+d=D(\tilde\zeta_A^{[N]})+d=D(\tilde\zeta_{A\times[0,1]^d}^{[N+d]})\\
&=\ov\dim_B(A\times[0,1]^d)=\ov\dim_BA+d.
\end{aligned}
\end{equation}
\end{theorem}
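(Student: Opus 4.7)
The plan is to prove both identities in part (a) by induction on $d$, using Lemma~\ref{cartesian} as the base case ($d=1$) and as the inductive step with the base set $A$ replaced by $A\times[0,1]^{d-1}$. Then part (b) will follow from part (a) by isolating the leading term and using the equivalence relations $\simeq$ and $\sim$ from Definition~\ref{simeq1} and Definition~\ref{equ}, together with Lemma~\ref{simeq2}(c).

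For the inductive step in part (a), I would write $A\times[0,1]^d = (A\times[0,1]^{d-1})\times[0,1]$ and apply Lemma~\ref{cartesian} with $A$ replaced by $B:=A\times[0,1]^{d-1}$ (and $N$ replaced by $N+d-1$), giving
\begin{equation}\label{PLAN1}
\tilde\zeta_{A\times[0,1]^d}^{[N+d]}(s) = \tilde\zeta_{A\times[0,1]^{d-1}}^{[N+d-1]}(s-1) + \tilde\zeta_{A\times[0,1]^{d-1}}^{[N+d]}(s).
\end{equation}
To the first term, I apply the inductive hypothesis directly. For the second term, I observe that $A\times[0,1]^{d-1}\subset\mathbb{R}^{N+d}$ (viewed with a trailing zero coordinate) is isometric to $(A\times\{0\})\times[0,1]^{d-1}$ with $A\times\{0\}\subset\mathbb{R}^{N+1}$, so the inductive hypothesis applied to $A\times\{0\}$ yields a sum of terms of the form $\tilde\zeta_A^{[N+1+k]}(s-(d-1)+k)$ for $k=0,\dots,d-1$. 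Re-indexing and combining the two sums via Pascal's identity $\binom{d-1}{k}+\binom{d-1}{k-1}=\binom{d}{k}$ produces the desired formula \eqref{m}. The identity \eqref{md} for the distance zeta function follows by exactly the same induction, now based on the first formula of Lemma~\ref{cartesian}. Alternatively, \eqref{md} follows from \eqref{m} via the functional equation \eqref{equ_tilde}, though the combinatorics is cleanest done directly.

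For part (b), I extract the $k=0$ summand from \eqref{md} and \eqref{m}, writing, for instance,
\begin{equation}
\tilde\zeta_{A\times[0,1]^d}^{[N+d]}(s) - \tilde\zeta_A^{[N]}(s-d) = \sum_{k=1}^d\binom{d}{k}\tilde\zeta_A^{[N+k]}(s-d+k).
\end{equation}
By Theorem~\ref{an}(b) applied to $A$ viewed inside $\mathbb{R}^{N+k}$ (using the dimension-independence of $\overline{\dim}_B A$ noted in the proof of Lemma~\ref{cartesian}), each summand on the right has abscissa of convergence $\overline{\dim}_B A + d - k < \overline{\dim}_B A + d = D(\tilde\zeta_A^{[N]}(\,\cdot\, - d))$. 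Hence the difference is a tamed DTI whose abscissa of convergence is strictly smaller, which is precisely Definition~\ref{simeq1}: $\tilde\zeta_{A\times[0,1]^d}^{[N+d]}\simeq\tilde\zeta_A^{[N]}(\,\cdot\,-d)$. The same reasoning applies to the distance zeta function. Assuming $\zeta_A$ (equivalently $\tilde\zeta_A$, since $\overline{\dim}_B A < N$) extends meromorphically to a neighborhood of its critical line, Lemma~\ref{simeq2}(c) upgrades $\simeq$ to $\sim$, yielding \eqref{Axmsim}, and thus $\mathcal{P}_c(\zeta_{A\times[0,1]^d})=\mathcal{P}_c(\zeta_A)+d$, i.e., \eqref{dimCm}. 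The final equalities for the abscissae follow from Corollary~\ref{equr_c} and the standard fact that $\overline{\dim}_B(A\times[0,1]^d)=\overline{\dim}_B A + d$.

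The main technical obstacle is the bookkeeping in the inductive step of part (a): making sure the two sums from the two terms on the right-hand side of \eqref{PLAN1} line up in the argument $s-d+k$ after the appropriate index shift, so that Pascal's identity applies cleanly. The other delicate point is justifying that the ``lower-order'' summands on the right-hand side of \eqref{md}--\eqref{m} really are lower order — this uses the fact that the upper box dimension of $A$ does not change when $A$ is embedded into higher-dimensional Euclidean space, which is exactly the point already invoked in Lemma~\ref{cartesian}.
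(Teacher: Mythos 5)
Your proposal is correct and follows essentially the same route as the paper: induction on $d$ with Lemma~\ref{cartesian} as both the base case and the inductive step (applied to $A\times[0,1]^{d-1}$), re-interpreting the second resulting term as the inductive hypothesis for $A$ embedded in $\eR^{N+1}$ (i.e., $A\times\{0\}$), and combining via Pascal's identity; then isolating the $k=0$ term and invoking Lemma~\ref{simeq2}(c) for part (b). The only cosmetic difference is that the paper phrases the induction as passing from $d$ to $d+1$ rather than from $d-1$ to $d$.
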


\begin{proof}
$(a)$ Let us first prove Equation \eqref{md}. We do so by using mathematical induction on $d$. The case when $d=1$
has already been established in Lemma \ref{cartesian}.

Now, let us assume that the claim holds for some fixed positive integer $d\ge1$. From \eqref{1}
we see that
$$
\zeta_{A\times[0,1]^{d+1}}^{[N+d+1]}(s)=\zeta_{A\times[0,1]^{d}}^{[N+d]}(s-1)+\zeta_{A\times[0,1]^{d}}^{[(N+1)+d]}(s).
$$
Therefore,
\begin{equation}\nonumber
\begin{aligned}
\zeta_{A\times[0,1]^{d+1}}^{[N+d+1]}(s)&=\sum_{k=0}^d \binom dk \zeta_A^{[N+k]}(s-1-d+k)+
\sum_{k=0}^d \binom dk \tilde\zeta_A^{[N+1+k]}(s-d+k)\\
&=\zeta_A^{[N]}(s-d-1)+\sum_{k=0}^{d-1}\binom d{k+1}\zeta_A^{[N+k+1]}(s-d+k)\\
&\phantom{=}+\sum_{k=0}^{d-1} \binom dk \zeta_A^{[N+1+k]}(s-d+k)+\zeta_A^{[N+1+d]}(s)\\
&=\sum_{k=0}^{d+1} \binom {d+1}k \zeta_A^{[N+k]}(s-(d+1)+k),
\end{aligned}
\end{equation}
where in the last equality we have used the fact that $\binom dk+\binom d{k+1}=\binom{d+1}{k+1}$.
This completes the proof of Equation \eqref{md}.

Equation \eqref{m} can be proved by mathematical induction in much the same way as in the case of the distance zeta function. This completes the proof
 of part $(a)$ of the theorem.

\medskip

$(b)$ To prove that $\zeta_{A\times[0,1]^d}^{[N+d]}(s)\sim\zeta_A^{[N]}(s-d)$, it suffices to note that, by Equation \eqref{md}, the function
\begin{equation}
h(s):=\zeta_{A\times[0,1]^d}^{[N+d]}(s)-\zeta_A^{[N]}(s-d)=\sum_{k=1}^d \binom dk \zeta_A^{[N+k]}(s-d+k)
\end{equation}
has for abscissa of convergence $D(h)=\ov\dim_BA+(d-1)\}<\ov\dim_BA+d=D(\zeta_A^{[N]}(\,\cdot\,-d))$, so that 
$\zeta_{A\times[0,1]^d}^{[N+d]}(s)\simeq \zeta_A^{[N]}(s-d)$. Using part $(c)$ of Lemma \ref{simeq2}, we deduce that 
$\zeta_{A\times[0,1]^d}^{[N+d]}(s)\sim \zeta_A^{[N]}(s-d)$ in the sense of Definition~\ref{equ}, 
which proves the first relation in~\eqref{Axmsim}.
The second relation in \eqref{Axmsim} can be proved along the same lines.
This completes the proof of claim $(b)$, as well as of the entire theorem.
\end{proof}

\begin{remark}
The relations appearing in \eqref{Axmsim} can be written in a less precise form as follows:
\begin{equation}\label{Axmsim1}
\zeta_{A\times[0,1]^d}(s)\sim\zeta_A(s-d)\q\mbox{and}\q
\tilde\zeta_{A\times[0,1]^d}(s)\sim\tilde\zeta_A(s-d).
\end{equation}
We propose to call these two properties the {\em shift properties} of the distance and tube zeta functions, respectively.
\end{remark}

\begin{example}\label{Cmae}
Let $C^{(m,a)}$ be the two-parameter generalized Cantor set introduced in Definition \ref{Cma} below and let $d$ be a positive integer. Then, using \eqref{Axmsim} and \eqref{zetasim} below, we obtain that
$$
\zeta_{C^{(m,a)}\times[0,1]^d}(s)\sim\frac1{1-ma^{s-d}}.
$$
Furthermore, we conclude from \eqref{dimCm} that
\begin{equation}\label{CmaPC}
\dim_{PC}(C^{(m,a)}\times[0,1]^d)=(\log_{1/a}m+d)+\frac{2\pi}{\log (1/a)}\,\I\Ze.
\end{equation}
Moreover, by noticing that $\zeta_{C^{(m,a)}\times[0,1]^d}$ can be meromorphically extended to the whole complex plane, we conclude from Equation \eqref{md} above and from the first part of Equation \eqref{2.1.6} below
that the set of all complex dimensions of $C^{(m,a)}\times[0,1]^d\st\eR^{1+d}$ is well defined in $\Ce$ and given by
\begin{equation}
\po(\zeta_{C^{(m,a)}\times[0,1]^d})=\{0,1,\dots,d\}\cup\bigcup_{k=0}^d\Big((\log_{1/a}m+k)+\frac{2\pi}{\log (1/a)}\,\I\Ze\Big).
\end{equation}
The sets of the form $C^{(m,a)}\times[0,1]^d$ (with $m:=2$, $a:=1/3$, $d:=1$) appear, for example, in the study of the
Smale horseshoe map; see, e.g., \cite{smale}. They also arise in the study of the
 singularities of Sobolev functions and of weak solutions of elliptic equations; see, e.g., \cite{lana}, where they are called the `Cantor grills'.
\end{example}

\begin{example}\label{combs}
Similarly as in Example \ref{Cmae}, sets of the form $\pa\O\times[0,1]^{N-1}$, where $\O=\O_a$ is a geometric realization of a fractal string (for example, the so-called $a$-string, $\O=\cup_{j=1}^\ty((j+1)^{-a},j^{-a})$), where $a>0$ and for which $\pa\O=\{j^{-a}:j\ge1\}\cup\{0\}$ satisfies 
$\ov\dim_B\pa\O=1/(a+1)$, are used in the study of fractal drums to extend certain results from one to higher dimensions $N\ge2$; see \cite[Examples 5.1 and 5.1']{Lap1}.
The boundary of the open set $\O\times(0,1)^{N-1}$ is given by
\begin{equation} 
(\pa\O\times[0,1]^{N-1})\cup\big([0,1]\times\pa((0,1)^{N-1})\big),
\end{equation}
where $\pa\big(([0,1]^{N-1}\big)$ is taken in the space $\eR^{N-1}$.
The subset $\pa\big((0,1)^{N-1}\big)$ of $\eR^{N-1}$ is an $(N-2)$-dimensional Lipschitz surface (which for $N=2$ degenerates to a pair of points), so that the box dimension of $[0,1]\times\pa((0,1)^{N-1})$ is equal to $N-1$. Therefore, by the property of `finite stability' of the upper box dimension (see \cite{falc}), we have
$\ov\dim_B(\O\times(0,1)^{N-1})=\max\{\ov\dim_B(\pa\O\times[0,1]^{N-1}),N-1\}=\ov\dim_B(\pa\O\times[0,1]^{N-1})=\ov\dim_B\pa\O+N-1$.

Since, according to \cite[Theorem 6.21]{lapidusfrank12} (along with Example~\ref{L} and Remark~\ref{entirely}), 
\begin{equation}\label{-rho}
\po(\zeta_{\pa(\O_a)})=\{\rho,-\rho,-2\rho,-3\rho,\dots\},
\end{equation}
where $\rho:=1/(a+1)$, we deduce from Theorem \ref{Axm} that
\begin{equation}\label{OaPC}
\begin{gathered}
\po(\zeta_{\pa(\O_a\times(0,1)^{N-1})})=\po(\zeta_{\pa(\O_a)\times[0,1]^{N-1}})\\
=\{N-1+\rho,N-1-\rho,N-1-2\rho,N-1-3\rho,\dots\},
\end{gathered}
\end{equation}
still with $\rho=1/(a+1)$. Furthermore, all of these complex dimensions are simple.

\begin{remark}
More precisely, it could be that beside $\rho$, which is always a (simple) pole of $\zeta_{\pa\O}$, some of the numbers $-n\rho$ ($n\ge1$) appearing in \eqref{-rho} are not poles of $\zeta_{\pa\O}$ (because the corresponding residue of $\zeta_{\pa\O}$ happens to vanish, for some arithmetic reason connected with the value of $a$). And, hence, similarly, in \eqref{OaPC}.
\end{remark}

Note that if, in Example \ref{combs} just above, $\O=\O_{CS}$ is the Cantor string (i.e., the complement of the classic ternary Cantor set in $[0,1]$), then according to \cite[Equation (1.30)]{lapidusfrank12} and Equation \eqref{OaPC}, we have
\begin{equation}
\dim_{PC}\pa(\O\times(0,1)^{N-1})=\big((N-1)+\log_32\big)+\frac{2\pi}{\log3}\I\Ze,
\end{equation}
which is the special case of \eqref{CmaPC} corresponding to $m:=2$, $a:=1/3$ and $d:=N-1$.
\end{example}

\section[Transcendentally $n$-quasiperiodic sets]{Transcendentally $n$-quasiperiodic sets and their distance zeta functions}\label{quasi0}

The goal of this section is to
describe a construction of some of the simplest
classes of quasiperiodic sets, a notion which we introduce in Definition~\ref{quasiperiodic} below. The main result is obtained in Theorem~\ref{quasi1}.
The construction will be carried out by using a class of generalized Cantor sets depending on two auxiliary parameters. 
We note that, as will be briefly discussed in Subsection \ref{hyperfractal} below, this construction and its natural generalizations will play a key role in future developments of the present higher-dimensional theory of complex dimensions of fractals; see the corresponding discussion in Remark \ref{remr4} and Subsection \ref{hyperfractal} below.


\subsection{Generalized Cantor sets defined by two parameters}\label{cantor_ma}
Let us introduce a class of generalized Cantor sets $C^{(m,a)}$, depending on two parameters. As a special case, we obtain the Cantor sets of the form $C^{(a)}:=C^{(2,a)}$
discussed in Example~\ref{res-cantor}. The classical ternary Cantor set $C^{(1/3)}$ corresponds to the case when $m:=2$ and $a:=1/3$.

\begin{defn}\label{Cma}
The generalized Cantor sets $C^{(m,a)}$ are determined by an integer $m\ge2$ and a positive real number $a$ such that $ma<1$.
In the first step of the analog of Cantor's construction, we start with $m$ equidistant, closed intervals in $[0,1]$ of length $a$, with $m-1$ holes, each of length $(1-ma)/(m-1)$. In the second step, we continue by scaling by the factor $a$ each of the $m$ intervals of length $a$; and so on, ad infinitum.
The  $($two-parameter$)$ {\em generalized Cantor set} $C^{(m,a)}$ is defined as the intersection of the decreasing sequence of compact sets constructed in this way.
\end{defn}

It can be shown that the generalized Cantor sets $C^{(m,a)}$ have the following properties, which extend the ones established for
the sets $C^{(a)}$. 
Apart from the proof of \eqref{zetaCma}, which is easily obtained, the proof of the proposition
is similar to that for the standard Cantor set (see \cite[Equation (1.11)]{lapidusfrank12}), and therefore, we omit it.

\begin{prop}\label{Cmap}
 If $C^{(m,a)}\st\eR$ is the generalized Cantor set introduced in Definition~\ref{Cma}, then
\begin{equation}\label{2.1.1}
D:=\dim_B C^{(m,a)}=D(\zeta_A)=\log_{1/a}m.
\end{equation}
Furthermore, the tube formula associated with $C^{(m,a)}$ is given by
\begin{equation}\label{Cmat}
|C^{(m,a)}_t|=t^{1-D}G(\log t^{-1})
\end{equation}
for all $t\in(0,\frac{1-ma}{2(m-1)})$, where $G=G(\tau)$ is the following nonconstant, positive and bounded periodic function,
with minimal period equal to $T=\log (1/a)$, and defined by 
\begin{equation}\label{Gtau}
G(\tau)=c^{D-1}(ma)^{g\left(\frac{\tau-c}{T}\right)}+2\,c^Dm^{g\left(\frac{\tau-c}{T}\right)}.
\end{equation}
Here, $c=\frac{1-ma}{2(m-1)}$, and $g:\eR\to\eR$ is the $1$-periodic function defined by $g(x)=1-x$ for $x\in(0,1]$.

Moreover, the lower and upper Minkowski contents of $C^{(m,a)}$ are respectively given by
\begin{equation}\label{CmaM}
\begin{aligned}
\M_*^D(C^{(m,a)})&=\min G=\frac1D\left(\frac{2D}{1-D}\right)^{1-D},\\
\M^{*D}(C^{(m,a)})&=\max G=\left(\frac{1-ma}{2(m-1)}\right)^{D-1}\frac{m(1-a)}{m-1}.
\end{aligned}
\end{equation}
Therefore, $C^{(m,a)}$ is Minkowski nondegenerate but is not Minkowski measurable.

Finally, if we assume that $\delta\ge\frac{1-ma}{2(m-1)}$, then, the distance zeta function of $A:=C^{(m,a)}$ is given by
\begin{equation}\label{zetaCma}
\zeta_A(s):=\int_{-\delta}^{1+\delta}d(x,A)^{s-1}\D x=\left(\frac{1-ma}{2(m-1)}\right)^{s-1}\frac{1-ma}{s(1-ma^s)}+\frac{2\delta^s}s.
\end{equation}
As a result, $\zeta_A(s)$ admits a meromorphic continuation to all of $\Ce$, given by the last expression in $(\ref{zetaCma})$. In particular,
\begin{equation}\label{zetasim}
\zeta_A(s)\sim\frac1{1-ma^s},
\end{equation}
and the set of poles of $\zeta_A$ $($in $\Ce)$ and the residue of $\zeta_A$ at $s=D$ are respectively given by
\begin{equation}\label{2.1.6}
\begin{aligned}
\po(\zeta_A)&=(D+\mathbf p{\I}\Ze)\cup\{0\},\\ 
\res(\zeta_A,D)&=\frac{1-ma}{DT}\left(\frac{1-ma}{2(m-1)}\right)^{D-1},
\end{aligned}
\end{equation}
where $\mathbf p:=2\pi/T=2\pi/\log(1/a)$ is the oscillatory period of $C^{(m,a)}$.
Finally, each pole in $\po(\zeta_A)$ is simple.
\end{prop}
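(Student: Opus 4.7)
The plan is to exploit the IFS structure of $A=C^{(m,a)}$, namely $A=\bigsqcup_{i=1}^{m}(aA+b_i)$ with consecutive gaps of length $2c=(1-ma)/(m-1)$ (so the open set condition holds), to derive the tube and distance zeta functions in closed form, and to read off the remaining assertions from those. The Moran equation $ma^D=1$ gives $D=\log_{1/a}m$, and the equality $D=D(\zeta_A)$ is Theorem~\ref{an}(b).

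For the tube formula \eqref{Cmat}: when $t\in(0,c)$ the $t$-neighborhoods of the $m$ scaled copies of $A$ are pairwise disjoint, so $|A_t|=m\,|(aA)_t|=ma\,|A_{t/a}|$; when $t\ge c$ every point of $[0,1]$ lies within $c\le t$ of $A$ (the extremum being the midpoint of a first-step hole), hence $A_t=[-t,1+t]$ and $|A_t|=1+2t$. Iterating the functional equation, for $t\in[a^nc,a^{n-1}c)$ one gets $|A_t|=(ma)^n+2m^n t$. Writing $t=a^n c\sigma$ with $\sigma\in[1,1/a)$ and using $a^D=1/m$, one factors out $t^{1-D}$ to obtain $|A_t|=t^{1-D}\bigl(c^{D-1}\sigma^{D-1}+2c^D\sigma^D\bigr)$; a change of variable $u=(\tau+\log c)/T$ then identifies $\sigma^{D-1}=(ma)^{g(u)}$ and $\sigma^D=m^{g(u)}$, which yields the periodic form~\eqref{Gtau} (up to the offset in the argument of $g$). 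Continuity and $T$-periodicity of $G$ are built in, so \eqref{CmaM} follows by locating $\min G$ and $\max G$ on one period by elementary calculus (using $a^D=1/m$), and these two values being distinct simultaneously yields Minkowski nondegeneracy and non-measurability.

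For the distance zeta function, I would integrate directly over $A_\delta$. The outer intervals $[-\delta,0]$ and $[1,1+\delta]$ have $d(\,\cdot\,,A)=|{\cdot}|$ and $|{\cdot}-1|$ respectively, contributing $\int_{-\delta}^{0}|x|^{s-1}\D x+\int_{1}^{1+\delta}(x-1)^{s-1}\D x=2\delta^s/s$. At step $k\ge 1$ of the construction, $(m-1)m^{k-1}$ new holes of length $2a^{k-1}c$ appear; each such open interval $I$ has its two endpoints in $A$, so a symmetric calculation gives $\int_{I}d(x,A)^{s-1}\D x=2(a^{k-1}c)^s/s$. Summing over $k\ge 1$ is a geometric series in $ma^s$; using $2(m-1)c=1-ma$ to repackage the prefactor gives the first term of \eqref{zetaCma}. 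Together with the boundary contribution this proves \eqref{zetaCma} for $\re s>D$, and the right-hand side of \eqref{zetaCma} is manifestly meromorphic on all of $\Ce$, hence provides the meromorphic continuation.

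The poles of that extension are contained in the zero set of $s(1-ma^s)$, namely $\{0\}\cup(D+\mathbf p\I\Ze)$, and they are all simple since $\frac{\D}{\D s}(1-ma^s)=-ma^s\log a$ does not vanish there. The residue at $s=D$ comes entirely from the first term of \eqref{zetaCma} (since $2\delta^s/s$ is holomorphic there, as $D>0$): computing $\res\bigl(1/(1-ma^s),D\bigr)=1/T$ and multiplying by the other factors evaluated at $s=D$, namely $c^{D-1}(1-ma)/D$, gives exactly the formula in \eqref{2.1.6}. Finally, \eqref{zetasim} follows because $2\delta^s/s$ and the prefactor $c^{s-1}(1-ma)/s$ are holomorphic and nonvanishing on a neighborhood of $\{\re s=D\}$, so they do not affect the set of principal poles. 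The main technical obstacle is the tube-formula step: correctly tracking how $|A_t|$ changes as $t$ crosses the critical scales $a^nc$, and repackaging the resulting piecewise expression $(ma)^n+2m^n t$ as the single $T$-periodic function \eqref{Gtau} via the sawtooth $g$; every other step is either an immediate consequence of the closed forms or a routine computation.
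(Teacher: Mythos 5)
Your approach is sound and in fact fills in a proof that the paper explicitly omits (it merely refers the reader to the ternary Cantor case for the tube computation and states that \eqref{zetaCma} is ``easily obtained''). The self-similarity relation $|A_t|=ma\,|A_{t/a}|$ for $t<c$, its iteration to $(ma)^n+2m^n t$ on $[a^n c,a^{n-1}c)$, the hole-by-hole summation yielding \eqref{zetaCma}, the identification of the principal poles, and the residue at $D$ are all correct. You are also right to flag the offset in the argument of $g$: checking, say, at $t=ac$ (which should give $g(\cdot)=0$) shows that the argument must be $(\tau+\log c)/T$ rather than $(\tau-c)/T$, so \eqref{Gtau} as printed contains a typo.

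There is, however, one genuine gap in your last paragraph. You only establish that the poles of the continuation lie in $\{0\}\cup(D+\mathbf{p}\I\Ze)$, whereas \eqref{2.1.6} asserts equality. At the points $D+k\mathbf{p}\I$ ($k\in\Ze$) this is fine, because the numerator $(1-ma)c^{s-1}$ never vanishes and $2\delta^s/s$ is holomorphic there; but at $s=0$ the two $1/s$ singularities in \eqref{zetaCma} cancel. Indeed,
\[
\res\Big(\frac{(1-ma)c^{s-1}}{s(1-ma^s)},0\Big)=\frac{(1-ma)c^{-1}}{1-m}=\frac{2(m-1)}{1-m}=-2,
\qquad
\res\Big(\frac{2\delta^s}{s},0\Big)=2,
\]
so $s=0$ is a removable singularity of $\zeta_A$ (one finds $\zeta_A(0)=-2\log c+2\log\delta+\tfrac{2m\log a}{m-1}$, which is finite) and the correct statement is $\po(\zeta_A)=D+\mathbf{p}\I\Ze$, without the extra point $0$. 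This is actually an error in the paper's own \eqref{2.1.6} (it propagates, e.g., to Example~\ref{Cmae}); your derivation would have caught it had you pushed past ``contained in'' and computed the residue at the origin.
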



\begin{defn}
According to the terminology introduced in \cite{lapidusfrank12}, the value of $\mathbf p=2\pi/\log(1/a)$, appearing in Proposition~\ref{Cmap}, is called
the {\em oscillatory period}\label{osc_period} of the generalized Cantor set $A=C^{(m,a)}$. 

As we see from Equation \eqref{zetaCma} and from the equivalence in \eqref{zetasim}, the set of all complex dimensions of the generalized Cantor set $A=C^{(m,a)}$ and the set of principal complex dimensions of $A$ are given, respectively, by
$$
\po(\zeta_A)=(D+\mathbf p{\I}\Ze)\cup\{0\}
\q\hbox{\rm and}\q\po_c(\zeta_A)=D+\mathbf p{\I}\Ze.
$$
\end{defn}

\subsection{Construction of transcendentally $2$-quasiperiodic sets}\label{qp_sets}
In Example~\ref{transcendent} below, we provide some basic ideas for further definitions and constructions.
The main result of this subsection is obtained in Theorem~\ref{trans}.

\begin{example}\label{transcendent}
Let us define two generalized Cantor sets $A=C^{(a)}:=C^{(2,a)}\st[0,1]$, $a\in(0,1/2)$,
and $B=C^{(3,b)}\st[2,3]$, where $b\in(0,1/3)$. We choose $b$ so that $D:=\log_{1/a} 2=\log_{1/b}3$. We may take, for example, $a=1/3$ and $b=3^{-\log_23}$. Note that we then have $3b=3^{1-\log_23}<1$.
Also, we have
$$
|A_t|=t^{1-D}G_1(\log t^{-1}),\q
|B_t|=t^{1-D}G_2(\log t^{-1}).
$$
The functions $G_1$ and $G_2$ corresponding to $A$ and $B$ are
$T$ and $S$-periodic, respectively, with $T=\log(1/a)=\log3$ and $S=\log(1/b)$. 
Furthermore, the quotient $T/S=\log3/\log(1/b)=\log_32$ is  transcendental,
which is a well-known result going back to F.\ von Lindemann\label{lindemann} and K.\ Weierstrass;\label{weierstrass} see~\cite[p.\ 4]{baker}.
\end{example}

For our later needs, it will be convenient to introduce the following definition, which partly follows \cite{enc}.

\begin{defn}\label{quasip}
We say that a function $G=G(\tau):\eR\to\eR$ is {\em transcendentally $n$-quasiperiodic} if it is of the form $G(\tau)=H(\tau,\dots,\tau)$,
where $H:\eR^n\to\eR$  is a function which is nonconstant and $T_k$-periodic in its $k$-th component, for each $k=1,\dots,n$, and the periods $T_1,\dots, T_n$ are {\em algebraically} independent (that is, linearly independent over the field of algebraic real numbers). The values of $T_i$ are called the {\em quasiperiods of G}. The least positive integer $n$ for which this definition is valid is called the {\em order of quasiperiodicity} of $G$.
\end{defn}

\begin{remark}\label{quasir}
It is possible to define analogously a class of {\em algebraically $n$-quasiperiodic functions}, but we do not study them here; see \cite{fzf}. 
\end{remark}

\begin{example}
If $G(\tau)=G_1(\tau)+G_2(\tau)$, where the functions $G_i$ are nonconstant and $T_i$-periodic (for $i=1,2$), such that $T_1/T_2$ is transcendental, then $G$
is transcendentally $2$-quasiperiodic (in the sense of Definition~\ref{quasip}). In this case and in the notation of Definition~\ref{quasip}, we have $H(\tau_1,\tau_2):=G_1(\tau_1)+G_2(\tau_2)$.
\end{example}

In the sequel, we shall need a classic result due to Gel'fond and Schneider (see \cite{gelfond}), proved independently by these two authors in 1934.
We state it in a form that will be convenient for our purposes.

\begin{theorem}[Gel'fond--Schneider, \cite{gelfond}]\label{gs}
Let $\rho$ be a positive algebraic number different from one, and let $x$ be an irrational algebraic number. Then $\rho^x$
is transcendental.
\end{theorem}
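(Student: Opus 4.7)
The plan is to argue by contradiction. Assume the hypotheses of the theorem (so $\rho$ is positive algebraic with $\rho\ne 1$, and $x$ is irrational algebraic), and suppose, for contradiction, that $\gamma:=\rho^x$ is also algebraic. Let $K$ be a number field of degree $d$ over $\mathbb{Q}$ containing $\rho$, $x$, and $\gamma$, and write $\lambda:=\log\rho\ne 0$ (the real logarithm, since $\rho>0$).

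The heart of the argument is an auxiliary entire function. For a large parameter $L$, I would seek integers $c_{j,k}$ with $0\le j,k<L$, not all zero and with $\log\max|c_{j,k}|$ of order $L^2\log L$, such that
\[
F(z):=\sum_{j=0}^{L-1}\sum_{k=0}^{L-1}c_{j,k}\,\rho^{jz}\gamma^{kz}=\sum_{j,k}c_{j,k}\,e^{(j+kx)\lambda z}
\]
vanishes at $z=1,2,\dots,N$ for an integer $N$ of order $L^2/d$. Since each value $F(n)$ lies in $K$ and depends polynomially on the fixed numbers $\rho,x,\gamma$, the vanishing requirements $F(n)=0$ amount to $dN$ linear constraints over $\mathbb{Q}$ on the $L^2$ integer unknowns $c_{j,k}$. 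Siegel's lemma (a pigeonhole-type result) then supplies such a nontrivial integer solution of controlled size, provided $L^2>dN$ with a small margin.

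Because $x$ is irrational, the $L^2$ exponents $(j+kx)\lambda$ are pairwise distinct, so the corresponding exponentials $e^{(j+kx)\lambda z}$ are linearly independent over $\mathbb{C}$ and hence $F\not\equiv 0$. Let $M$ be the smallest positive integer with $F(M)\ne 0$; the construction then forces $M>N$. The value $F(M)$ is a nonzero element of $K$ whose denominators and Galois conjugates are controlled in terms of $L$ and $M$, so Liouville's inequality (applied to this nonzero algebraic number) supplies a lower bound on $|F(M)|$ that is essentially only polynomial in $L$ and $M$.

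The main obstacle, and the genuine crux of the Gel'fond--Schneider proof, is to produce a matching upper bound on $|F(M)|$ that beats this Liouville bound. For this I would use that $F$ is entire of order $1$, so that $\max_{|z|=R}|F(z)|\le e^{CLR}$ on any disk of radius $R\gg M$; combining this with the $N$ prescribed zeros of $F$ on $\{1,\dots,N\}$ and the maximum modulus principle yields the Schwarz-type estimate
\[
|F(M)|\le\Bigl(\max_{|z|=R}|F(z)|\Bigr)\prod_{n=1}^{N}\frac{|M-n|}{|R-n|}.
\]
A careful joint optimization of the parameters $L,N,M,R$ then makes the resulting upper bound decay faster than the Liouville lower bound, yielding the desired contradiction. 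The delicate balance of parameters---simultaneously controlling the size of the auxiliary coefficients, the number and location of prescribed zeros, and the radius on which the Schwarz estimate is applied---is precisely the innovation of Gel'fond and Schneider over older Liouville-type methods, and is what makes the theorem genuinely deep rather than elementary.
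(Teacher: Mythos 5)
The paper does not prove this theorem: Theorem \ref{gs} is stated and attributed to Gel'fond and Schneider, with a citation to Gel'fond's book \cite{gelfond}, and the text explicitly presents it as ``a classic result \dots\ proved independently by these two authors in 1934.'' It is used as a black box in the proof of Theorem \ref{trans}, exactly as Baker's theorem (Theorem \ref{baker0}) is a black box in the proof of Theorem \ref{quasi1}. So there is no paper proof against which to compare your attempt; the intended ``proof'' here is simply the citation.

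As for the content of your sketch: the skeleton you describe (an auxiliary exponential sum built via Siegel's lemma, its nonvanishing because the exponents $(j+kx)\lambda$ are pairwise distinct when $x$ is irrational, a Liouville-type lower bound at the first nonvanishing integer $M$, and an upper bound from a Schwarz-lemma interpolation estimate using the prescribed zeros at $1,\dots,N$) is the correct outline of the classical transcendence proof. But it is emphatically a skeleton, and you say so yourself. The entire weight of the argument lies in the quantitative bookkeeping you defer: the denominator and conjugate-size estimates entering the Liouville bound grow like $\exp(cLM)$, the entire-function growth is $\exp(cLR)$, and making $|F(M)|\le (\max_{|z|=R}|F|)\prod_{n\le N}|M-n|/|R-n|$ beat the Liouville bound requires either a very careful one-shot choice of $L,N,M,R$ or, more commonly, an iterated extrapolation in which one repeatedly concludes $F(M)=0$ from the mismatch and feeds the new zeros back in. Your sketch asserts a single Schwarz-lemma application suffices; whether it does depends on the parameter optimization you have left unspecified, and this is precisely where the ``genuinely deep'' content that you flag resides. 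Two small inaccuracies worth noting: $F(n)=\sum c_{j,k}\rho^{jn}\gamma^{kn}$ depends on $\rho$ and $\gamma$ but not on $x$ (despite your phrasing), and the existence of a \emph{smallest} $M>N$ with $F(M)\ne 0$ deserves a word, since $F$ is an exponential sum rather than a polynomial (although this is easily handled). None of this is wrong, but it confirms that what you have is an outline of Gel'fond--Schneider, not a self-contained proof---which is fine, because the paper treats the result as known.
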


\begin{defn}\label{quasiperiodic}
Given a bounded subset $A\st\eR^N$, we say that a function $G:\eR\to\eR$ {\em is associated with the set A} (or {\em corresponds to $A$}) if $A$ has the following tube formula:
\begin{equation}\label{quasiperiodictf}
|A_t|=t^{N-D}(G(\log t^{-1})+o(1))\textrm{ as }t\to0^+,
\end{equation}
where $0<\liminf_{\tau\to\infty}G(\tau)\le \limsup_{\tau\to\infty}G(\tau)<\ty$. Note that it then follows that $\dim_BA$ exists and is equal to $D$.

In addition, we say that $A$ is a {\em transcendentally $n$-quasiperiodic set} if the corresponding function $G=G(\tau)$ is transcendentally $n$-quasiperiodic.
\end{defn}

Generalizing the idea of Example \ref{transcendent} above, we obtain the following result.

\begin{theorem}\label{trans}
Let $A_1=C^{(m_1,a_1)}\st[0,1]$ and $A_2=C^{(m_2,a_2)}\st[2,3]$ be two generalized Cantor sets $($see Definition~\ref{Cma}$\,)$ such that their box dimensions coincide, with the common value $D\in(0,1)$. 
Let $\{p_1,p_2,\dots,p_k\}$ be the set of all distinct prime factors of $m_1$ and $m_2$, and write
\begin{equation}
m_1=p_1^{\alpha_1}p_2^{\alpha_2}\dots p_k^{\alpha_k},\quad m_2=p_1^{\beta_1}p_2^{\beta_2}\dots p_k^{\beta_k},
\end{equation}
where $\alpha_i,\beta_i\in\eN\cup\{0\}$\label{n_0} for $i=1,\ldots,k$. If the exponent vectors 
\begin{equation}
(\alpha_1,\alpha_2,\dots,\alpha_k)\q\mathrm{and}\q(\beta_1,\beta_2,\dots,\beta_k),
\end{equation}
corresponding to $m_1$ and $m_2$,
are linearly independent over the rationals, then the function $G=G_1+G_2$, associated with $A=A_1\cup A_2$,
is transcendentally $2$-quasiperiodic; that is, the quotient $T_1/T_2$ of the quasiperiods of $G$ $($i.e., of the periods of $G_1$ and $G_2$$)$ is transcendental.

Moreover, we have that
$$
\zeta_{A}(s)\sim \frac1{1-m_1a_1^s}+\frac1{1-m_2a_2^s},\q D(\zeta_{A})=D,\q D_{\rm mer}(\zeta_{A})=-\ty,
$$
and hence, the set $\dim_{PC}A=\po_c(\zeta_A)$ of principal complex dimensions of $A$ coincides with the following nonarithmetic set$:$
$$
\dim_{PC} A=D+\Big(\frac{2\pi}{T_1}\,\Ze\,\cup\,\frac{2\pi}{T_2}\,\Ze\Big){\I}.
$$
Besides $(\dim_{PC}A)\cup\{0\}$, there are no other poles of the distance zeta function $\zeta_{A}$.
In other words, $\po(\zeta_A)=\po_c(\zeta_A)\cup\{0\}$. Furthermore, all of the complex dimensions are simple. 

Finally, exactly the same results hold for the tube zeta function $\tilde\zeta_A$ $($instead of~$\zeta_A$$)$.
\end{theorem}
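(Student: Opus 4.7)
Since $A_1\st[0,1]$ and $A_2\st[2,3]$ are separated by unit distance, for any $\d<1/2$ the neighborhoods $A_{1,\d}$ and $A_{2,\d}$ are disjoint, and $d(x,A)=d(x,A_i)$ for every $x\in A_{i,\d}$. Consequently,
\[
|A_t|=|A_{1,t}|+|A_{2,t}|,\q \zeta_A(s)=\zeta_{A_1}(s)+\zeta_{A_2}(s),\q \tilde\zeta_A(s)=\tilde\zeta_{A_1}(s)+\tilde\zeta_{A_2}(s).
\]
Proposition~\ref{Cmap} provides an explicit meromorphic continuation of each $\zeta_{A_i}$ to all of $\Ce$, together with the equivalence $\zeta_{A_i}(s)\sim 1/(1-m_i a_i^s)$; summing yields meromorphic extensions of $\zeta_A$ and $\tilde\zeta_A$ to $\Ce$, the claimed equivalence, $D(\zeta_A)=D$, and $D_{\rm mer}(\zeta_A)=-\ty$.

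The transcendence of $T_1/T_2$, where $T_i:=\log(1/a_i)$, is the technical core. Since $D=\log_{1/a_i}m_i$, one has $T_i=D^{-1}\log m_i$ and thus $T_1/T_2=\log m_1/\log m_2$. If this ratio were rational, say equal to $p/q$ in lowest terms, then $m_1^q=m_2^p$; unique prime factorization then forces $q(\a_1,\dots,\a_k)=p(\b_1,\dots,\b_k)$, contradicting the $\Qu$-linear independence hypothesis. If it were irrational algebraic, Theorem~\ref{gs} (Gel'fond--Schneider) applied to $\rho:=m_2>1$ and $x:=\log m_1/\log m_2$ would yield that $m_2^x=m_1$ is transcendental, contradicting $m_1\in\eN$. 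Hence $T_1/T_2$ is transcendental, i.e., $T_1$ and $T_2$ are linearly independent over the algebraic reals.

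With $G_i$ the nonconstant $T_i$-periodic function from Proposition~\ref{Cmap}, the choice $H(\tau_1,\tau_2):=G_1(\tau_1)+G_2(\tau_2)$ exhibits $G=G_1+G_2$ as transcendentally $2$-quasiperiodic in the sense of Definition~\ref{quasip}, so $A$ is a transcendentally $2$-quasiperiodic set. To identify the principal complex dimensions, observe that a coincidence $D+(2\pi/T_1)k_1\I=D+(2\pi/T_2)k_2\I$ with nonzero $k_1,k_2\in\Ze$ would force $T_1/T_2=k_1/k_2\in\Qu$, which has just been ruled out. Thus the only shared principal pole is $s=D$ itself, where the residues of $\zeta_{A_1}$ and $\zeta_{A_2}$ are both strictly positive (by Proposition~\ref{Cmap}, in accordance with~\eqref{res}) and therefore add without cancellation. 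This yields $\dim_{PC}A=D+\big((2\pi/T_1)\Ze\cup(2\pi/T_2)\Ze\big)\I$, with every pole simple.

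The inclusion $\po(\zeta_A)\stq\po(\zeta_{A_1})\cup\po(\zeta_{A_2})\stq\po_c(\zeta_A)\cup\{0\}$ follows from~\eqref{2.1.6}, and the remaining verification at $s=0$ is a direct residue computation from the closed form~\eqref{zetaCma}. The corresponding assertions for $\tilde\zeta_A$ follow either by repeating the argument with \eqref{Cmat} in place of~\eqref{zetaCma}, or, since $\ov\dim_BA=D<1=N$, via the functional equation~\eqref{equ_tilde} and Corollary~\ref{equr_c}, which transfer poles, multiplicities, and quasiperiodic structure from $\zeta_A$ to $\tilde\zeta_A$. The main obstacle is the transcendence step: both the $\Qu$-linear independence of the exponent vectors (to exclude rational values of $T_1/T_2$) and Gel'fond--Schneider (to exclude irrational algebraic values) are essential, and neither hypothesis suffices alone.
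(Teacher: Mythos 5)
Your argument follows the paper's proof exactly in its overall architecture: separate $A_1\st[0,1]$ from $A_2\st[2,3]$ so that $\zeta_A=\zeta_{A_1}+\zeta_{A_2}$ and $|A_t|=|A_{1,t}|+|A_{2,t}|$, invoke Proposition~\ref{Cmap} for the closed form of each $\zeta_{A_i}$ and the equivalence $\zeta_{A_i}\sim(1-m_ia_i^s)^{-1}$, and establish the transcendence of $T_1/T_2=\log m_1/\log m_2$ by first excluding rationality via unique prime factorization and the $\Qu$-independence of the exponent vectors, then excluding irrational algebraic values via Gel'fond--Schneider. You go slightly further than the paper's own proof at two points worth keeping: you observe that a coincidence $D+(2\pi/T_1)k_1\I=D+(2\pi/T_2)k_2\I$ with $k_1,k_2\ne0$ would force $T_1/T_2\in\Qu$, which has been excluded, so the two arithmetic progressions of principal poles meet only at $s=D$; and you note that at that shared pole the residues are strictly positive by \eqref{2.1.6}, so they cannot cancel. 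These checks make the identification $\dim_{PC}A=D+\big(\frac{2\pi}{T_1}\Ze\cup\frac{2\pi}{T_2}\Ze\big)\I$ (with simple poles) fully rigorous, whereas the paper's proof leaves them implicit.

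One caveat, which applies equally to Proposition~\ref{Cmap} as stated: you defer the claim $0\in\po(\zeta_A)$ to ``a direct residue computation from the closed form~\eqref{zetaCma},'' but that computation in fact gives the opposite. Writing $c:=\frac{1-ma}{2(m-1)}$, the residue of $c^{s-1}\frac{1-ma}{s(1-ma^s)}$ at $s=0$ is $c^{-1}\cdot\frac{1-ma}{1-m}=\frac{2(m-1)}{1-m}=-2$, while the residue of $\frac{2\delta^s}{s}$ at $s=0$ is $+2$; these cancel, so $s=0$ is a removable singularity of each $\zeta_{A_i}$, hence of $\zeta_A$. (Equivalently, via Example~\ref{L}: $\res(u\zeta_{\mathcal L},0)=2\,\zeta_{\mathcal L}(0)=-2$ since $\zeta_{\mathcal L}(0)=\frac{m-1}{1-m}=-1$, and $\res(v,0)=+2$.) So the computation you invoke would yield $\po(\zeta_A)=\po_c(\zeta_A)$, not $\po_c(\zeta_A)\cup\{0\}$. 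This does not affect the transcendence argument, the quasiperiodicity, or the identification of $\dim_{PC}A$; but the specific assertion $\po(\zeta_A)=\po_c(\zeta_A)\cup\{0\}$, inherited from~\eqref{2.1.6}, should be verified rather than taken on faith.
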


\medskip

\begin{proof}
First of all, using (\ref{Cmat}), applied to both $A_1$ and $A_2$, we conclude that for all $t\in(0,1/2)$,
$$
|(A_1\cup A_2)_t|=t^{1-D}\left(G_1(\log t^{-1})+G_2(\log t^{-1})\right).
$$
It thus suffices to show that the quotient $T_1/T_2$ of the quasiperiods $T_1$ and $T_2$ of the function $G(\tau):=G_1(\tau)+G_2(\tau)$ is transcendental.

From $D=\log_{1/a_1}m_1=\log_{1/a_2}m_2$ and $T_i=\log m_i$, $i=1,2$, we deduce that $x:=T_1/T_2$ satisfies the equation $(m_2)^x=m_1$. The exponent $x$ cannot be an irrational algebraic number,
since otherwise, by the Gel'fond-Schneider theorem (Theorem \ref{gs}), $(m_2)^x$ would be transcendental. If $x$ were rational, say, $x=b/a$, with $a,b\in\eN$ (note that $x>0$, since
$m_1\ge2$), this would then imply that 
$(m_1)^a=(m_2)^b$; that is, 
$$
p_1^{a\alpha_1}p_2^{a\alpha_2}\dots p_k^{a\alpha_k}=p_1^{b\beta_1}p_2^{b\beta_2}\dots p_k^{b\beta_k}.
$$
Therefore, using the fundamental theorem of arithmetic, we would have 
$$
a(\alpha_1,\alpha_2,\dots,\alpha_k)=b(\beta_1,\beta_2,\dots,\beta_k).
$$ 
However, this is impossible due to the assumption of linear independence over the rationals of the above exponent vectors. 
Consequently, $x$ is transcendental.

The claims about the zeta function $\zeta_{A_1\cup A_2}$ follow from Proposition~\ref{Cmap} applied to both $A_1$ and $A_2$.
Indeed, since $A_1$ and $A_2$ are subsets of two disjoint compact intervals, then $\zeta_A(s)\sim\zeta_{A_1}(s)+\zeta_{A_2}(s)$, and
on the other hand, $\zeta_{A_1}(s)+\zeta_{A_2}(s)\sim (1-m_1a_1^s)^{-1}+(1-m_2a_2^s)^{-1}$, in light of \eqref{zetasim} applied separately to $A_1$ and~$A_2$.
This completes the proof of the theorem.
\end{proof}

\begin{remark}
Theorem~\ref{trans} provides a construction of the set $A=A_1\cup A_2$, such that 
the set $\dim_{PC} A:=\po_c(\zeta_A)$ of principal complex dimensions of $A$ is equal to the union of two 
(discrete) sets of complex dimensions, each of them composed of poles in infinite vertical arithmetic progressions, but with algebraically incommensurable {\em oscillatory quasiperiods}\label{oscqp} $\mathbf p_1=2\pi/T_1$ and $\mathbf p_2=2\pi/T_2$ of $A_1$ and $A_2$,
respectively; that is, such that $\mathbf p_1/\mathbf p_2$ is transcendental.
These oscillatory quasiperiods of $A$ are equal to the oscillatory periods of $A_1$
and $A_2$, respectively.  
\end{remark}


\subsection{Transcendentally $n$-quasiperiodic sets and Baker's theorem}\label{qp_sets_baker}
 The main result of this subsection is stated in Theorem~\ref{quasi1} below, which extends Theorem~\ref{trans} to any integer $n\ge2$
and also provides further helpful information.
In the sequel, we shall need the following important theorem from transcendental number theory, due to Baker \cite[Theorem~2.1]{baker}. 
It represents a nontrivial extension of Theorem~\ref{gs}, due to Gel'fond and Schneider \cite{gelfond}.
Recall that an algebraic number is a complex root of a polynomial with integer coefficients and that the field of algebraic numbers is isomorphic to the algebraic closure of $\Qu$, the field of rational numbers.

\begin{theorem}[{Baker, \cite[Theorem~2.1]{baker}}]
\label{baker0}
Let $n\in\eN$ with $n\geq 2$.
If $m_1,\dots, m_n$ are positive algebraic numbers such that $\log m_1,\dots,\log m_n$ are linearly independent over the rationals,
then 
$$
1,\log m_1,\dots,\log m_n
$$ 
are linearly independent over the field of all algebraic numbers.
\end{theorem}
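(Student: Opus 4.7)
The statement is Baker's celebrated 1966 generalization of the Gel'fond--Schneider theorem (Theorem~\ref{gs}), and a complete proof requires substantial machinery from transcendence theory. The plan would be to follow Baker's original approach via construction of an auxiliary function, extrapolation by the Schwarz lemma, and a Liouville-type contradiction; only a sketch of the architecture is realistic here.

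I would argue by contradiction. Suppose there exist algebraic numbers $\beta_0, \beta_1, \dots, \beta_n$, not all zero, with
\[
\beta_0 + \beta_1 \log m_1 + \cdots + \beta_n \log m_n = 0.
\]
The first reduction is to the case $\beta_0 \ne 0$; when $\beta_0 = 0$, the remaining homogeneous relation $\sum_{j=1}^n \beta_j \log m_j = 0$ is the harder half of Baker's theorem and must be handled by its own induction on $n$, the base case $n=2$ being exactly Theorem~\ref{gs}. After normalizing (take $\beta_n = -1$), we may write $\log m_n = \beta_0 + \sum_{j=1}^{n-1} \beta_j \log m_j$, and enlarge the ground field to the number field $K := \mathbb{Q}(\beta_0, \dots, \beta_{n-1})$, keeping track of heights and degrees of elements of $K$.

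The heart of the method is to construct, by Siegel's lemma applied to a suitable system of linear equations over $\mathcal{O}_K$, an auxiliary entire function of the shape
\[
\Phi(z_1, \dots, z_{n-1}) = \sum_{\lambda_0=0}^{L_0} \cdots \sum_{\lambda_{n-1}=0}^{L_{n-1}} p(\lambda_0, \dots, \lambda_{n-1}) \, z_1^{\lambda_0} \, m_1^{\lambda_1 z_1} \cdots m_{n-1}^{\lambda_{n-1} z_{n-1}},
\]
with algebraic-integer coefficients $p(\lambda)$ of controlled height, such that $\Phi$ and many of its partial derivatives up to order $T$ vanish at all integer points $(k,k,\dots,k)$ with $0 \le k \le h$. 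The assumed linear relation is the essential ingredient that lets derivatives of $\Phi$ at these points be re-expressed as algebraic quantities of bounded height, so that Siegel's lemma produces a nontrivial solution $p(\lambda)$ provided the parameters $L_i, T, h$ are balanced correctly.

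The extrapolation step then applies the Schwarz--Jensen lemma to the restriction of $\Phi$ to a complex line through the diagonal: the many zeros force $|\Phi|$ to be extremely small on a disk, which allows one to enlarge the set of integer points $k$ where $\Phi$ and its derivatives vanish, and iterate. Eventually one derives an analytic upper bound on some nonzero algebraic number contradicting its Liouville-type lower bound, unless all $p(\lambda)$ vanish, which contradicts the nontriviality from Siegel's lemma. The main obstacle, as in every transcendence proof of this type, is the combinatorial balancing of $L_i$, $T$, $h$, and the heights so that the extrapolation survives enough iterations for the final contradiction to appear; Baker's innovation was precisely the device allowing simultaneous extrapolation in several variables using the linear dependence hypothesis. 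Since, for the purposes of the present paper, the theorem is used only as a black box (via Theorem~\ref{trans} and Theorem~\ref{quasi1}), I would simply cite \cite[Theorem~2.1]{baker} rather than reproduce this entire apparatus.
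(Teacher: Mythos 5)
You have correctly identified that the paper does not prove this theorem: it simply states Baker's theorem and cites \cite[Theorem~2.1]{baker}, using it as a black box in the proofs of Theorems~\ref{trans} and~\ref{quasi1}. Your decision to do likewise is therefore exactly aligned with the paper's approach, and the accompanying sketch of Baker's method (auxiliary function via Siegel's lemma, extrapolation via the Schwarz lemma, Liouville-type lower bound contradiction) is a reasonable high-level description of the standard transcendence-theoretic proof, though naturally far from a verifiable argument at this level of detail. One small quibble: your remark that the homogeneous relation is ``the harder half'' is arguably backwards — historically, the homogeneous case is what Baker proved first, and passing to the inhomogeneous statement (nonzero $\beta_0$, i.e.\ including $1$ in the list of linear forms) required an additional idea — but this does not affect the correctness of your plan to cite the result.
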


We now state the main result of this subsection, which can be considered as a fractal set-theoretic interpretation of Baker's theorem. It extends Theorem~\ref{trans} even in the case when $n:=2$.

\begin{theorem}\label{quasi1}
Let $n\in\eN$ with $n\geq 2$.
Assume that $A_i=C^{(m_i,a_i)}$, $i=1,\dots,n$, are generalized Cantor sets $($in the sense of Definition~\ref{Cma}$)$ such that their box dimensions are all equal to a fixed number 
$D\in(0,1)$. Assume that there is a disjoint family of closed unit intervals~$I_1,\dots,I_n$ on the real line,
 such that $A_i\st I_i$ for each $j=1,\dots,n$.
Let $T_i:=\log(1/a_i)$ be the associated periods, and $G_i$ be the corresponding $($nonconstant$)$ $T_i$-periodic functions, for $i=1,\dots, n$.
Let $\{p_j:j=1,\dots,k\}$ be the union of all distinct prime factors  which appear in the integers $m_i$, for $i=1,\dots, n$; that is, $m_i=p_1^{\a_{i1}}\dots p_k^{\a_{ik}}$, where $\a_{ij}\in\eN\cup\{0\}$.

If the exponent vectors $e_i$ of the numbers $m_i$,
\begin{equation}\label{2.1.111/2}
e_i:=(\a_{i1},\dots,\a_{ik}),\quad i=1,\dots,n,
\end{equation}
are linearly independent over the rationals, then the numbers
\begin{equation}\label{Ts}
\frac 1D,T_1,\dots, T_n
\end{equation}
are linearly independent over the field of all algebraic numbers.
It follows that the set $A:=A_1\cup\dots\cup A_n\st\eR$ is transcendentally $n$-quasiperiodic; see Definition~\ref{quasiperiodic}.
Furthermore, in the notation of Definition~\ref{quasiperiodic}, an associated transcendentally $n$-quasiperiodic function $G$ is given by $G:=G_1+\cdots+G_n$.

Moreover, we have that
$$
\zeta_{A}(s)\sim \sum_{i=1}^n\frac1{1-m_ia_i^s},\q D(\zeta_A)=D,\q D_{\rm mer}(\zeta_A)=-\ty,
$$
and hence, the set $\dim_{PC}A=\po_c(\zeta_A)$ of principal complex dimensions of $A$ consists of simple poles and coincides with the following nonarithmetic set$:$
$$
\dim_{PC} A=D+\Big(\bigcup_{i=1}^n\frac{2\pi}{T_i}\,\Ze\Big){\I}.
$$
Besides $(\dim_{PC}A)\cup\{0\}$, there are no other poles of the distance zeta function $\zeta_A$.
That is, $\po(A)=\po_c(A)\cup\{0\}$. Furthermore, all of these complex dimensions are simple.

Finally, exactly the same results hold for the tube zeta function $\tilde\zeta_A$ $($instead of $\zeta_A$$)$.
\end{theorem}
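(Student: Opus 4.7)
The proof hinges on a single application of Baker's theorem (Theorem~\ref{baker0}) for the algebraic input, followed by a geometric additivity argument made possible by the fact that the $A_i$ lie in pairwise disjoint closed unit intervals. The first task is to verify that $\log m_1,\dots,\log m_n$ are linearly independent over~$\Qu$. Writing $m_i=\prod_j p_j^{\alpha_{ij}}$, any rational relation $\sum_i q_i\log m_i=0$ expands to $\sum_j(\sum_i q_i\alpha_{ij})\log p_j=0$; since the logarithms of distinct primes are $\Qu$-linearly independent (by unique factorization), this forces $\sum_i q_ie_i=0$, and the hypothesis on the exponent vectors~\eqref{2.1.111/2} then yields $q_i=0$ for all~$i$. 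Theorem~\ref{baker0} now says $1,\log m_1,\dots,\log m_n$ are linearly independent over the algebraic numbers. Using the identity $D\,T_i=\log m_i$ (a restatement of $D=\log_{1/a_i}m_i$), any algebraic relation $c_0/D+\sum_i c_iT_i=0$ becomes $c_0+\sum_i c_i\log m_i=0$ after multiplying by~$D$, whence $c_0=c_1=\dots=c_n=0$. This proves the algebraic linear independence of~\eqref{Ts}.

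Since the $A_i$ lie in pairwise disjoint closed intervals, there exists $d_0>0$ such that the $t$-neighborhoods $A_{i,t}$ are pairwise disjoint for every $t\in(0,d_0/2)$. Summing the tube formula~\eqref{Cmat} over~$i$ then gives $|A_t|=t^{1-D}G(\log t^{-1})$ for such $t$, where $G:=G_1+\dots+G_n$ is bounded and bounded away from zero (each $G_i$~is). Setting $H(\tau_1,\dots,\tau_n):=G_1(\tau_1)+\dots+G_n(\tau_n)$ gives $G(\tau)=H(\tau,\dots,\tau)$ with $H$ nonconstant and $T_k$-periodic in its $k$-th coordinate. In view of the algebraic independence of $T_1,\dots,T_n$ established in the first step, this verifies that $G$ is transcendentally $n$-quasiperiodic in the sense of Definition~\ref{quasip}, so $A=\bigcup_{i=1}^nA_i$ is transcendentally $n$-quasiperiodic.

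For the zeta function statements, choosing $\delta\in(0,d_0/2)$ renders the sets $A_{i,\delta}$ pairwise disjoint, so $\zeta_A(s)=\sum_i\zeta_{A_i}(s)$. Each $\zeta_{A_i}$ has a meromorphic continuation to all of~$\Ce$ given by~\eqref{zetaCma}, hence so does~$\zeta_A$, yielding $D_{\rm mer}(\zeta_A)=-\infty$. Combining the equivalences $\zeta_{A_i}(s)\sim 1/(1-m_ia_i^s)$ from~\eqref{zetasim} produces the stated equivalence for~$\zeta_A$, provided one checks that no cancellation spoils the principal poles. If a pole $D+(2\pi k/T_i)\I$ of $\zeta_{A_i}$ coincided with a pole $D+(2\pi l/T_j)\I$ of $\zeta_{A_j}$ for $i\ne j$ and $kl\ne 0$, then $T_j/T_i=k/l\in\Qu$, contradicting the transcendence of $T_j/T_i$ that follows from the first step (applied to the pair $\{i,j\}$). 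Thus the pole sets of the $\zeta_{A_i}$ on the critical line are pairwise disjoint except at $s=D$ itself, and every such pole of $\zeta_A$ is simple; at $s=D$ the residues of the individual $\zeta_{A_i}$ given by~\eqref{2.1.6} are strictly positive, hence add to a nonzero (simple) residue of the sum. This identifies $\dim_{PC}A=D+\bigcup_{i=1}^n(2\pi/T_i)\I\Ze$ and, via~\eqref{2.1.6} again, shows that the remaining poles of $\zeta_A$ lie in $\{0\}$. The corresponding assertions for $\tilde\zeta_A$ follow either by summing tube formulas directly (using $|A_t|=\sum_i|A_{i,t}|$ for small~$t$) or from the functional equation~\eqref{equ_tilde}.

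The principal difficulty is not the transcendence input itself (a single invocation of Baker's theorem) but the careful bookkeeping in the third paragraph: one must ensure that the equivalence relation $\sim$ behaves additively when summing DTIs whose principal pole sets are pairwise disjoint, and that no accidental residue cancellation occurs at the common pole $s=D$. Transcendence of each ratio $T_j/T_i$ rules out the former, and the positivity of the Cantor-set residues at~$D$ rules out the latter.
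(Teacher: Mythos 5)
Your proof is correct and follows essentially the same route as the paper's: first establish $\Qu$-linear independence of $\log m_1,\dots,\log m_n$ via unique factorization and the hypothesis on exponent vectors, then invoke Baker's theorem and rescale by $D$ to get algebraic independence of $1/D, T_1,\dots,T_n$, then sum the tube formulas over the disjointly situated $A_i$ to produce the $n$-quasiperiodic function $G=G_1+\dots+G_n$, and finally sum the zeta functions. Your third paragraph is more explicit than the paper's one-line appeal to Proposition~\ref{Cmap}: you carefully rule out coincidences among principal poles of distinct $\zeta_{A_i}$ (via irrationality of $T_j/T_i$, which indeed follows from the algebraic-independence step) and note that the positive residues at $s=D$ cannot cancel, which is precisely the bookkeeping needed to justify that $\dim_{PC}A$ is the full union and that all principal poles are simple.
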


\begin{proof}
As in the proof of Theorem~\ref{trans}, 
using (\ref{Cmat}), applied to each $A_i$, for $i=1,\dots,n$, we see that for all $t>0$ small enough,
$$
|A_t|=t^{1-D}\sum_{i=1}^n G_i(\log t^{-1}),
$$
and for each $i=1,\ldots,n$, $G_i=G_i(\tau)$ is $T_i$-periodic, where $T_i:=\log a_i^{-1}$. We next proceed in three steps:

\bigskip

{\em Step} 1: It is easy to check that the numbers $\log p_j$ (for $j=1,\dots,n$) are rationally independent. Indeed, if we had $\sum_{j=1}^k\g_j\log p_j=0$
for some integers $\g_j$, then $\prod_{j=1}^k p_j^{\g_j}=1$. This implies that $\g_j=0$ for all $j$, since otherwise it would contradict the
fundamental theorem of arithmetic.\footnote{A moment's reflection shows that this argument is valid even if the $\g_j$'s are not a priori all of the same sign.}

\bigskip

{\em Step} 2:
Let us show that $\log m_1,\dots,\log m_n$ are linearly independent over the rationals. Indeed, assume that for $i=1,\ldots,n$, $\mu_i\in\Qu$
are such that $\sum_{i=1}^n\mu_i\log m_i=0$. Then
\begin{equation}
\sum_{i=1}^n\mu_i\sum_{j=1}^k\a_{ij}\log p_j=0.
\end{equation}
Changing the order of summation, we have
\begin{equation}
\sum_{j=1}^k\left(\sum_{i=1}^n  \mu_i\a_{ij}\right)\log p_j=0.
\end{equation}
Since, by Step 1, the numbers $\log p_j$ are rationally independent, we have that for all $j=1,\dots,k$,
$$
\sum_{i=1}^n  \mu_i\a_{ij}=0;
$$
that is, $\sum_{i=1}^n\mu_i e_i=0$, where the $e_i$'s are the exponent vectors given by~\eqref{2.1.111/2}. According to the hypotheses of the theorem, the exponent vectors $e_i$ are rationally independent, and we therefore conclude that $\mu_i=0$ for all $i=1,\dots,n$, as desired.

\bigskip

{\em Step} 3:
Using \cite[Theorem~2.1]{baker}, that is, Theorem~\ref{baker0} above, we conclude that $1,\log m_1,\dots,\log m_n$ are linearly independent over the field of algebraic numbers.
Since $T_i=\frac1D\log m_i$, for $i=1,\dots,n$, it then follows that the numbers listed in (\ref{Ts}) are also linearly independent over the field of algebraic numbers.
Therefore, the function 
$$
G:=G_1+\dots+G_n,\quad G(\tau)=G_1(\tau)+\dots +G_n(\tau),
$$ 
associated with $A$, is transcendentally $n$-quasiperiodic; that is, the set $A$ is transcendentally $n$-quasiperiodic. Note that here, $H(\tau_1,\dots,\tau_n):=G_1(\tau_1)+\dots+ G_n(\tau_n)$, in the notation of Definition~\ref{quasip}.

The last claim, about the distance zeta function $\zeta_A$ and its complex dimensions, now follows from Proposition~\ref{Cmap} applied to each of the bounded sets $A_i$ ($i=1,\ldots,n$).
This concludes the proof of the theorem.
\end{proof}

\begin{remark}\label{remr4}
In Theorem~\ref{quasi1}, we have constructed a class of bounded subsets of the real line possessing an arbitrary prescribed finite number of algebraically incommensurable quasiperiods.
As will be further discussed in Subsection \ref{hyperfractal},
this result is extended in \cite{memoir}, where we construct a bounded subset $A_0$ of the real line which is {\em transcendentally $\ty$-quasiperiodic set}; that is, $A_0$ contains infinitely many algebraically incommensurable quasiperiods.
\end{remark}

In the following proposition, by a {\em quasiperiodic set} we mean a set which has one of the following {\em types of quasiperiodicity}: it is either $n$-transcendentally quasiperiodic (see Definition \ref{quasiperiodic}), or $n$-algebraically quasiperiodic (see Remark \ref{quasir}), for some $n\in \{2,3,\dots\}\cup\{\ty\}$ (the case when $n=\ty$ is treated in [LapRa\v Zu1,3]). We adopt a similar convention for the quasiperiodic functions $G=G(\tau)$ appearing in Definition~\ref{quasip}.

\begin{prop}\label{0L}
 Assume that $A$ is a quasiperiodic set in $\eR^N$ of a given type, with an associated quasiperiodic function $G=G(\tau)$. If $d$ is a positive integer and $L>0$, then the subset $A\times[0,L]^d$ of $\eR^{N+d}$
is also quasiperiodic of the same type, with the associated quasiperiodic function equal to $L^d\cdot G$. 
In particular, if $n\ge2$ is an integer and $A$ is the $n$-quasiperiodic subset of $\eR$ constructed in Theorem \ref{quasi1}, then the subset $A\times[0,L]^d$ of $\eR^{1+d}$ is also $n$-quasiperiodic.
\end{prop}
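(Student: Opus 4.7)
The plan is to derive the tube formula for $A\times[0,L]^d$ directly from that of $A$. By Definition~\ref{quasiperiodic}, it suffices to establish the asymptotics
\begin{equation*}
|(A \times [0,L]^d)_t|_{N+d} = t^{N-D}\bigl(L^d\,G(\log t^{-1}) + o(1)\bigr)\q\text{as } t\to 0^+,
\end{equation*}
since $\dim_B(A\times[0,L]^d)=D+d$ and hence $(N+d)-(D+d)=N-D$. Once this is in hand, the fact that $L^d G$ is nonconstant and has exactly the same periods $T_1,\dots,T_n$ as $G$ (because $L^d>0$ is merely a positive scalar multiplier) will immediately imply that $A\times[0,L]^d$ has the same type of quasiperiodicity as $A$ (algebraic or transcendental), and of the same order $n$, since multiplication by a positive constant preserves both the periods and their rational/algebraic/transcendental independence structure in Definition~\ref{quasip}. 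The ``in particular'' clause will then follow by applying this conclusion to the $n$-quasiperiodic set produced by Theorem~\ref{quasi1}.

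The key observation is that the nearest-point projection onto a Cartesian product decouples, so
\begin{equation*}
d\bigl((x,y),\,A \times [0,L]^d\bigr)^2 = d(x,A)^2 + d(y,[0,L]^d)^2
\end{equation*}
for every $(x,y)\in\eR^N\times\eR^d$. Integrating in the $y$-slice by Fubini's theorem then gives
\begin{equation*}
\bigl|(A \times [0,L]^d)_t\bigr|_{N+d} = \int_{([0,L]^d)_t} \bigl|A_{\sqrt{t^2 - d(y,[0,L]^d)^2}}\bigr|_N\,\D y.
\end{equation*}
I would split this integral into two pieces: the contribution from $y\in[0,L]^d$, where $d(y,[0,L]^d)=0$ and the integrand is exactly $|A_t|_N$, yielding the main term $L^d\cdot|A_t|_N=L^d t^{N-D}(G(\log t^{-1})+o(1))$; and a boundary contribution coming from $y\in([0,L]^d)_t\setminus[0,L]^d$.

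To control the boundary term, I would use the trivial monotone bound $|A_{\sqrt{t^2-s^2}}|_N \le |A_t|_N$ together with the elementary Steiner-type estimate $|([0,L]^d)_t|_d - L^d = O(t)$ as $t\to 0^+$, which is immediate for the convex body $[0,L]^d$ via the inclusion $([0,L]^d)_t\subseteq[-t,L+t]^d$ and the expansion $(L+2t)^d-L^d=2dL^{d-1}t+O(t^2)$. Combined with the boundedness of $G$, these estimates give the boundary bound $O(t)\cdot|A_t|_N=O(t^{N-D+1})$, which is $o(t^{N-D})$ and hence contributes only $o(1)$ after dividing by $t^{N-D}$. The main piece then supplies exactly $L^d G(\log t^{-1})$, thereby establishing the tube formula displayed above. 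The only delicate point is making the $O(t)$ shell estimate transparent and uniform in $t$; beyond this, everything is structural, using only the decoupling of distances on products, Fubini's theorem, and the boundedness of the periodic function $G$.
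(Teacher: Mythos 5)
Your proof is correct and follows essentially the same strategy as the paper's: decompose the tube of the product into the cylinder over $[0,L]^d$ (contributing $L^d\,|A_t|_N$) plus a boundary shell of thickness $O(t)$, then use the boundedness of $G$ to absorb the shell into the $o(1)$ error. The only cosmetic difference is that you treat general $d$ in one pass via the Fubini slice formula, whereas the paper establishes the case $d=1$ (citing the identity $|(A\times[0,L])_t|_{N+1}=L\,|A_t|_N+|(A\times\{0\})_t|_{N+1}$ from [Res] and the crude bound $|(A\times\{0\})_t|_{N+1}\le 2t\,|A_t|_N$) and then inducts on $d$.
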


\begin{proof} Let us first prove the claim for $d=1$.
By assumption, we have that 
\begin{equation}
|A_t|_N=t^{N-D}(G(\log t^{-1})+o(1))\q\mbox{as $t\to0^+$,}
\end{equation}
where $G=G(\tau)$ is a quasiperiodic function; see Equation \eqref{quasiperiodictf}. Much as in Equation \eqref{Aid}, we can write
\begin{equation}
\begin{aligned}
|(A\times[0,L])_t|_{N+1}&=|A_t|_N\cdot L+|(A\times\{0\})_t|_{N+1}\\
&=t^{(N+1)-(D+1)}(L\cdot G(\log t^{-1})+o(1))+|(A\times\{0\})_t|_{N+1}
\end{aligned}
\end{equation}
as $t\to0^+$, where $A\times\{0\}\stq\eR^{N+1}$ (so that $(A\times\{0\})_t$ is the $t$-neighborhood of $A\times\{0\}$ taken in $\eR^{N+1}$).
Since, obviously, $|(A\times\{0\})_t|_{N+1}\le|A_t|_N\cdot 2t$, we have that 
\begin{equation}
\begin{aligned}
|A_t|_{N+1}&\le t^{N+1-D}(G(\log t^{-1})+o(1))=t^{(N+1)-(D+1)}\cdot t(G(\log t^{-1})+o(1))\\
&=t^{(N+1)-(D+1)}\cdot O(t)\q\mbox{ as $t\to0^+$.}
\end{aligned}
\end{equation}
 Therefore, 
\begin{equation}
\begin{aligned}
|(A\times[0,L])_t|_{N+1}&=t^{(N+1)-(D+1)}(L\cdot G(\log t^{-1})+o(1)+O(t))\\
&=t^{(N+1)-(D+1)}(L\cdot G(\log t^{-1})+o(1))\q\mbox{as $t\to0^+$.}
\end{aligned}
\end{equation}
Hence, by Definition \ref{quasiperiodic}, the set $A\times[0,L]$ is quasiperiodic, with the associated quasiperiodic function $L\cdot G$.
This completes the proof of the proposition for $d=1$. The general case is easily obtained by induction on~$d$.
\end{proof}

\subsection{Future applications and extensions: $\ty$-quasiperiodic sets, hyperfractals, and the notion of fractality}\label{hyperfractal}

The results of Section \ref{quasi0} and their various generalizations (and, especially, the construction of $n$-quasiperiodic sets carried out in Subsection \ref{qp_sets_baker} above, once it has been extended to the case when $n=\ty$, as described in Remark \ref{remr4} above) will play a key role in the applications of the higher-dimensional theory of complex dimensions developed in the present paper and in our later work.
This will be so, in particular, in relation to the construction of (transcendentally) $\ty$-quasiperiodic, {\em maximally hyperfractal} sets for which, by definition, the associated fractal zeta functions have a natural boundary 
along the critical line $\{\re s=\ov\dim_BA\}$ and, in fact, have a singularity at every point of that line.
Such sets are as ``fractal'' as possible since, in some sense, they have a continuum of nonreal ``complex dimensions'' (interpreted here as singularities of the fractal zeta functions attached to $A$), in striking contrast with the more usual case where the fractal zeta functions can be meromorphically extended to an open connected neighborhood of the critical line $\{\re s=\ov\dim_BA\}$ and therefore have at most countably nonreal complex dimensions. 

Recall that following \cite[Sections 12.1 and 13.4]{lapidusfrank12} (naturally extended to higher dimensions within the framework of our new theory), 
a bounded subset $A$ of $\eR^N$ is said to be ``fractal'' if its associated fractal zeta function (here, $\zeta_A$ or $\tilde\zeta_A$) has a {\em nonreal} complex dimension or else, if it has a natural boundary along a suitable curve (a screen $\bm S$, in the sense of Subsection \ref{eqzf} above); that is, the tube zeta function $\tilde\zeta_A$ (or, equivalently, the distance zeta function $\zeta_A$ if $\ov\dim_BA<N$) cannot be meromorphically extended beyond~$\bm S$.

We close these comments by noting that throughout Section \ref{quasi0} (with the exception of Proposition \ref{0L}), we have worked with bounded subsets of the real line, $\eR$. However, by using the results of Subsection~\ref{dtx}
(especially, Theorem \ref{Axm}), one can easily obtain corresponding constructions of transcendentally $\ty$-quasiperiodic compact sets $A$ in $\eR^N$ (for any $N\ge1$), with $\ov\dim_BA\in(N-1, N)$. (See also Proposition \ref{0L} at the end of Subsection \ref{qp_sets_baker}.)
Likewise, using Theorem \ref{Axm}, one can construct $\ty$-quasiperiodic maximally hyperfractal compact subsets $A$ of $\eR^N$ (for any $N\ge1$)
such that $\ov\dim_BA\in(N-1,N)$. (Actually, by considering the Cartesian product of the original subset of $\eR$ by $[0,1]^d$, with $0\le d\le N-1$,  one may assume that $\ov\dim_BA\in(d,N)$; the same comment can be made about all of the results obtained in Section \ref{quasi0}.)

Finally, these results can also be applied in a key manner in order to establish the optimality of certain inequalities associated with the meromorphic continuations of the spectral zeta functions of (relative) fractal drums (see \cite[Section 4.3]{fzf} and \cite[Section 6]{brezish}).
More specifically, as is pointed out in \cite{Lap3}, the sharp error estimates obtained in \cite{Lap1} for the eigenvalue counting functions of Dirichlet (or, under appropriate assumptions, Neumann) Laplacians (and more general elliptic operators of order $2m$ with possibly variable coefficients) imply that the corresponding spectral zeta functions admit a meromorphic continuation to a suitable open right half-plane $\{\re s>\d_{\pa\O}\}$, where $\d_{\pa\O}$ 
is the inner (upper) Minkowski dimension of the boundary $\pa\O$.
Our construction of $n$-quasiperiodic sets, as given in Section~\ref{quasi0} of this paper, and extended to $n=\ty$ (as suggested above), enables us to deduce that this inequality is sharp, in general. That is, we construct a bounded open set $\O$ in $\eR^N$ with boundary $A:=\pa\O$, such that the compact set $A\st\eR^N$ is $\ty$-quasiperiodic. It follows that $\s_{\rm mer}$, the abscissa of meromorphic continuation of the corresponding spectral zeta function, satisfies $\s_{\rm mer}\ge\d_{\pa\O}$. For example, for the Dirichlet Laplacian on $\O$, we have $\s_{\rm mer}=\d_{\pa\O}$; i.e., $\{\re s>\d_{\pa\O}\}$ is the largest open right half-plane to which the associated spectral zeta function can be meromorphically continued. In fact, a much stronger statement is true in this case. Namely, the spectral zeta function $\zeta_{\nu}(s)$ has a nonisolated singularity at every point of the vertical line $\{\re s=\d_{\pa\O}\}$.

\bigskip

\noindent{\bf Acknowledgement.} We express our gratitude to the referee for several useful remarks and suggestions.
\bigskip

\section*{Appendix A: Equivalence relation and extended Dirichlet-type integrals}\label{appendix}


One problem with the notion of ``equivalence'' provided in Definition \ref{equ} of Subsection \ref{eqzf} is that, strictly speaking,
it is not an equivalence relation or is not even well defined if (as can be very useful) one wishes to allow $g$ to be a meromorphic function (rather than a DTI) because, a priori, $f$ and $g$ no longer belong to the same class of functions. (Indeed, $f$ is then a Dirichlet-type integral, abbreviated DTI in the sequel, while $g$ is merely assumed to be meromorphic; in particular, the abscissa of convergence of $g$ need not be well defined.) The situation is very analogous, in spirit, to the evaluation of the ``leading part'' ($g=g(s)$, in the present case) of a function ($f=f(s)$, here) in the theory of asymptotic expansions. In that situation, the ``leading part'' $g$ belongs to a scale of typical functions (describing the possible asymptotic behaviors of the function $f$ in the given asymptotic limit).

In our present situation, just as in the theory of asymptotic expansions, formally, the relation $\sim$ is both reflexive and (when it makes sense) transitive. Of course, it is also symmetric when it acts on the same class of functions (for example, DTIs).

However, it is also possible to modify both the definition of $\sim$ and the class of functions on which it acts so that it becomes a true equivalence relation on a single space of functions, namely, the class of extended DTIs. The latter class of (tamed) extended DTIs contains the class of (tamed) DTIs
(hence, all of the functions $f$ we wanted to work with in Definition \ref{equ}) and it also contains (essentially) all of the functions $g$ occurring in practice (when applying Definition \ref{equ}).

By definition, given $r\in(0,1)$, a DTI {\em of base} $r$ is a function of the form 
\begin{equation}
g(s)=\zeta_{E,\f,\mu}(r^{-s}),\tag{A.1}
\end{equation}
where $f(s):=\zeta_{E,\f,\mu}(s)$ is a (standard) DTI defined by
\begin{equation}\tag{A.2}
\zeta_{E,\f,\mu}(s):=\int_E\f(x)^s\D\mu(x).
\end{equation}
(See also Definition \ref{abscissa_f}.) It is then easy to check (using the analogous result for ordinary DTIs) that if $g$ is tamed (i.e., if $f$ is tamed), then
the abscissa of convergence $D(g)$ of $g$ and the half-plane of convergence $\Pi(g):=\{\re s>D(g)\}$ are not well defined. Indeed, note that 
$$
\f(x)^{r^s}=\f(x)^{r^{\re s}(\cos((\log r)\im s)+\I\sin((\log r)\im s)},
$$
so that the open set $V$ of complex numbers $s$ for which $\f(x)^{r^s}$ is Lebesgue integrable on $E$ (typically) consists of countably many connected components, and, hence,
does not have the form of a half-plane. The indicated open set $V$ is analyzed in \cite[Appendix A, Section A.4]{fzf}.
\medskip


\medskip

\noindent{\bf Definition A.1.}\,
An {\em extended Dirichlet-type integral} (an {\em extended} DTI or EDTI, in short) $h=h(s)$ is either of the form
\begin{equation}
h(s):=\rho(s)\zeta_{E,\f,\mu}(s)\tag{A.3}
\end{equation}
or of the form
\begin{equation}
h(s):=\rho(s)\zeta_{E,\f,\mu}(r^{-s}),\q\mbox{for some $r\in(0,1)$,}\tag{A.4}
\end{equation}
where $\rho=\rho(s)$ is a nowhere vanishing entire function and $\zeta_{E,\f,\mu}=\zeta_{E,\f,\mu}(s)$ is a DTI. More generally, $\rho$\label{rho}
can be a holomorphic function which does not have any zeros in the given domain $U\stq\Ce$ under consideration, where $U$ contains the closed half-plane $\{\re s>D(\zeta_{E,\f\mu})\}$.

\medskip

If the extended DTI is of the form (A.3), it is said to be of {\em type} I, and if it is of the form (A.4), it is said to be of {\em type} II (or of type II$_r$ if one wants to keep track of the underlying base $r$). Note that EDTIs of type I include all ordinary DTIs as a special case (by taking $\rho\equiv1$).%

Let us denote by $f(s):=\zeta_{E,\f,\mu}(s)$ the (standard) DTI and by $g(s):=\zeta_{E,\f,\mu}(r^{-s})$ the DTI of base $r$ occurring in (A.4). Then, by definition (and in accordance with Definition A.1), if $h$ is of the form (A.3), its {\em abscissa of convergence} $D(h)$ is given by $D(h):=D(f)$, while if $h$ is of the form (A.4), then $D(h)=+\ty$, that is, $\Pi(h)=\emptyset$.

If the DTI $f(s):=\zeta_{E,\f,\mu}$ is tamed, then the extended DTI $h$ from Definition A.1 (either in (A.3) or in (A.4)) is said to be {\em tamed}.

Finally, given any tamed extended DTI of type I, $h=h(s)$ (as in the first part of Definition A.1), we call 
\begin{equation}\tag{A.5}
\Pi(h):=\{\re s>D(h)\}
\end{equation} 
the {\em half-plane of convergence} of $h$ (which is maximal, in an obvious sense), and (assuming that $D(h)\in\eR$) we call $\{\re s=D(h)\}$ the {\em critical line} of $h$. 
(The tameness condition enables us to show that this half-plane exists and is indeed, maximal.)
Using a classic theorem about the holomorphicity of integrals depending on a parameter, one can show that $h$ is holomorphic on $\Pi(h)$. Hence, $D_{\rm hol}(h)\le D(h)$.

Here, much as in Definition \ref{abscissa_f}, $D(h)$ and $D_{\rm hol}(h)$ denote, respectively, the {\em abscissa of $($absolute$)$ convergence}  and the {\em abscissa of holomorphic continuation} of $h$. Furthermore, if $h$ is given by (A.3) above, we set $D(h)=D(\zeta_{E,\f,\mu})$ and $D_{\rm hol}(h)=D_{\rm hol}(\zeta_{E,\f,\mu})$, where $D(\zeta_{E,\f,\mu})$ and $D_{\rm hol}(\zeta_{E,\f,\mu})$ are defined in Definition \ref{abscissa_f}.

Moreover, if $h=h(s)$ admits a meromorphic continuation to an open connected set $U$ containing the closed half-plane $\{\re s\ge D(h)\}$, we denote (much as was done in Definition \ref{dimc} for the special case of DTIs) by $\po_c(h)$ the set of {\em principal complex dimensions} of $h$; that is, the set of poles of $h$ (in $U$) located on the critical line $\{\re s=D(h)\}$ of $h$:
\begin{equation}
\po_c(h):=\{\o\in U: \mbox{$\o$ is a pole of $h$ and $\re \o=D(h)$}\}.\tag{A.6}
\end{equation}
Clearly, $\po_c(h)$ does not depend on the choice of the domain $U$ satisfying the above condition.

We define similarly $\po(h)=\po(h,U)$, the {\em set of} (visible) {\em complex dimensions} of $h$, relative to $U$:
\begin{equation}
\po(h):=\{\o\in U:\mbox{$\o$ is a pole of $h$}\}.\tag{A.7}
\end{equation}
Clearly, since $h$ is of type I (i.e., is given as in (A.3)), then $\po_c(h)=\po_c(f)$ and $\po(h)=\po(f)$, where $f(s):=\zeta_{E,\f,\mu}(s)$.
\medskip

We can now modify as follows the definition of the ``equivalence relation'' provided in Definition \ref{equ} of Subsection \ref{eqzf}.
\medskip

\noindent{\bf Definition A.2.}\,
Let $h_1$ and $h_2$ be arbitrary tamed, extended DTIs of type I (as in Definition A.1) such that $D(h_1)=D(h_2)=:D$, with $D\in\eR$. Assume that each of $h_1$ and $h_2$ admits a (necessarily unique) meromorphic continuation to an open connected neighborhood $U$ of the closed half-plane $\{\re s\ge D\}$. Then the functions $h_1$ and $h_2$ are said to be {\em equivalent}, and we write $h_1\sim h_2$, if the sets of poles of $h_1$ and $h_2$ on their common vertical line $\{\re s=D\}$ (and the corresponding poles have the same multiplicities): $\po_c(h_1)=\po_c(h_2)$ (where the equality holds between multisets).
\medskip

We conclude this appendix by providing a class of tamed extended DTIs which can be used to determine the ``leading behavior'' of most of the fractal zeta functions used in the present theory.
\bigskip

\noindent{\bf Theorem A.3.}\,
{\em
Let $P\in\Ce[x]$ be a polynomial with complex coefficients. Then $f(s):=1/P(s)$ is a tamed DTI of type I. 

More specifically, if $\deg P=:n\ge1$,
then
\begin{equation}
f(s):=\frac1{P(s)}=\zeta_{E,\f,\mu}(s),\tag{A.8}
\end{equation}
where $E:=[1,+\ty)^n$, $\f(x):=(x_1\cdots x_n)^{-1}$ for all $x\in E$, and
\begin{equation}\tag{A.9}
\mu(\D x_1,\dots,\D x_n):=c\,x_1^{a_1}\frac{\D x_1}{x_1}\dots x_n^{a_n}\frac{\D x_n}{x_n},
\end{equation}
so that its total variation measure $|\mu|$ $($in the sense of local measures$)$ is given by
\begin{equation}\nonumber
|\mu|(\D x_1,\dots,\D x_n):=c\,x_1^{\re a_1}\frac{\D x_1}{x_1}\dots x_n^{\re a_n}\frac{\D x_n}{x_n},
\end{equation}
where $c:=\frac{1}{n!}P^{(n)}(0)$ and $a_1,\dots,a_n$ are the zeros of $P=P(s)$ $($counted according to their multiplicities, so that $P(s)=c\,\Pi_{m=1}^n(s-a_m)$$)$.

Moreover, 
$D(f)=D(\zeta_{E,\f,\mu})\le\max\{\re a_1,\dots,\re a_n\}$.
}

\bigskip

If, in Theorem A.3, we assume that $\deg P=0$, i.e., if $P$ is constant, say $P\equiv1$, then clearly, 
$f(s)=1/P(s)=1=\zeta_{E,\f,\mu}(s)$, where $E:=[1,+\ty)$, $\f(t):=1$ for all $x\in E$, and $\mu:=\d_1$ (the Dirac measure concentrated at $1$). In particular, $f$ is also tamed in this case.

\bigskip

Theorem A.3 is a consequence of the following two facts: 
\medskip

$(i)$ If $f_a(s):=1/(s-a)$, where $a\in\Ce$ is arbitrary, then $f$ is a tamed DTI of type I, given by
\begin{equation}
f(s):=\frac1{s-a}=\zeta_{E,\f_a,\mu_a}(s),\tag{A.10}
\end{equation}
where $E:=[1,+\ty)$, $\f_a(x):=x^{-1}$ for all $x\in E$, and
\begin{equation}
\mu_a(\D x):=x^a\frac{\D x}x;\tag{A.11}
\end{equation}
so that $|\mu_a|(\D x)=x^{\re a}\D x/x$.
Furthermore, $D(f_a)=\re a$. Note that $f_a:=\zeta_{E,\f_a,\mu_a}$ is obviously tamed because $\f_a(x)\le 1$ for all $x\ge1$. An entirely analogous comment can be made about $f=\zeta_{E,\f,\mu}$ in the theorem.
\medskip

$(ii)$ The tensor product of two tamed DTIs is tamed. More specifically, if the DTIs $\zeta_{E,\f,\mu}$ and $\zeta_{F,\psi,\eta}$ are tamed, then their {\em tensor product} is given by the following tamed DTI:
\begin{equation}\tag{A.12}
h(s):=(\zeta_{E,\f,\mu}\otimes\zeta_{F,\psi,\eta})(s)=\zeta_{E\times F,\f\otimes\psi,\mu\otimes\eta}(s),
\end{equation}
where the tensor product $\f\otimes\psi$ is defined by $(\f\otimes\psi)(x,y):=\f(x)\,\psi(y)$ for $(x,y)\in E\times F$ and the tensor product $\mu\otimes\eta$ is the product measure of $\mu$ and $\eta$ (see, e.g., \cite{cohn}). It is easy to check that the DTI $h$ is tamed because (since $\zeta_{E,\f,\mu}$ and $\zeta_{F,\psi,\eta}$ are tamed), we have $0\le\f(x)\le C(\f)$ $|\mu|$-a.e.\ on $E$ and $0\le\psi(x)\le C(\psi)$ $|\eta|$-a.e.\ on $F$, so that $0\le(\f\otimes\psi)(x,y)\le C(\f)\,C(\psi)$ $|(\mu\otimes\eta)|$-a.e.\ on $E\times F$.

Furthermore, $D(h)\le\max\{D(\zeta_{E,\f,\mu}),D(\zeta_{F,\psi,\eta})\}$.

Statement $(i)$ above follows from a direct computation, while statement $(ii)$
is proved by an application of the Fubini--Tonelli theorem (for iterated integrals with respect to positive measures) combined with the inequality (between local positive measures) $|\mu\otimes\eta|\le|\mu|\otimes|\eta|$, followed by an application of the classic Fubini theorem (for iterated integrals with respect to possibly signed or complex measures).

\bigskip

\noindent{\bf Corollary A.4.}\,
{\em The meromorphic function on all of $\Ce$ given by
\begin{equation}\tag{A.13}
h_2(s):=\frac{\rho(s)}{P(r^{-s})},
\end{equation}
where $r\in(0,1)$, $P\in\Ce[x]$ is an arbitrary polynomial with complex coefficients and $\rho$ is a nowhere vanishing entire function,
is a tamed extended DTI of type II.
More specifically, $h_2(s)=\rho(s)\zeta_{E,\f,\mu}(r^{-s})$, where $E$, $\f$ and $\mu$ are given in Theorem A.3 above.}

\bigskip

As was alluded to earlier, in practice, when we apply the (modified) definition of the equivalence relation (see Definition A.2 above),
\begin{equation}
h_1\sim h_2\tag{A.14}
\end{equation}
the meromorphic function $h_1$ is a fractal zeta function (an ordinary DTI of type~I), as well as the function $h_2$ (which gives the ``leading behavior'' of $h_1$, to mimick the terminology of the theory of asymptotic expansions). Hence, the importance of Theorem A.3 in the theory developed in the present paper as well as in its future developments. (See, however, Definition A.6 below and the comments surrounding it.)

We refer the interested reader to \cite[Appendix A]{fzf} for more details about the topics discussed in the present appendix, along with detailed proofs of the main results. 
\medskip

\noindent{\em Remark} A.5.\, The two definitions of the notion of equivalence $\sim$ provided in Definition \ref{equ} and Definition A.2 are compatible in the sense that if, in Definition \ref{equ}, we assume that $f$ (denoted by $h_1$ in Definition A.2) is a DTI (as is the case in Definition \ref{equ}), the meromorphic function $g$ is an extended DTI, then $f\sim g$ in the sense of Definition A.2. Note that the functions $f$ and $g$ of Definition \ref{equ} are denoted by $h_1$ and $h_2$ in Definition A.2. (In particular, $D(g)$ and $\po_c(g)$ are well defined, $D(f)=D(g)$ and $\po_c(f)=\po_c(g)$.) The converse statement clearly holds as well.
\medskip


Finally, it is possible, even likely, that in future applications of the current theory of fractal zeta functions developed in this paper and in our later work, we will need to deal with functions $g$ which are no longer extended DTIs (of  type I), but are meromorphic functions of a suitable kind. In that case, we propose to use the following definition, which is a suitable modification of Definition \ref{equ} and seems well suited to various applications. Strictly speaking, it no longer gives rise to an equivalence relation (since $f$ and $g$ belong to different classes of functions) but in this new sense, the statement $f\overset{\rm asym}{\sim} g$ captures appropriately the idea that ``$f$ is asymptotic to $g$''.
\medskip

\noindent{\bf Definition A.6.}\,
Let $f$ be a tamed EDTI and let $g$ be a meromorphic function, both defined and meromorphic on an open and connected subset $U$ of $\Ce$ containing the closed right half-plane $\{\re s\ge D(f)\}$. Then, the function $f$ is said to be {\em asymptotically equivalent} to $g$, and we write $f\overset{\rm asym}{\sim} g$, if $D(f)=D_{\rm hol}(g)$ (and this common value is a real number),
and the poles of $f$ and $g$ located on the convergence critical line $\{\re s>D(f)\}$ of $f$ (which, by assumption, is also the holomorphy critical line of $g$) coincide and have the same multiplicities. 

More succinctly, and with the obvious notation (compare with Equation \eqref{equ2} in Definition \ref{equ} above), we have
\begin{equation}\tag{A.15}
f\overset{\rm asym}{\sim} g\q\overset{\mbox{\tiny def.}}\Longleftrightarrow\q D(f)=D_{\rm hol}(g)\,\,(\in\eR)\,\,\mbox{and}\,\,\po_c(f)=\po_{c,\rm hol}(g).
\end{equation}
More specifically, we let
\begin{equation}\nonumber
\po_{c,\rm hol}(g):=\{\o\in U:\mbox{$\o$ is a pole of $g$ and $\re\o=D_{\rm hol}(g)$}\}.
\end{equation}%
Furthermore, much as in Definition \ref{equ}, $D(f)$ and $D_{\rm hol}(g)$ are viewed as multisets in Equation~(A.15).
\medskip

\noindent{\em Remark} A.7.\,
Observe that even if $g$ is assumed to be a tamed EDTI,
Definition A.6 may differ from its counterpart used in the rest of this appendix (Definition A.2) or, in particular, in Definition \ref{equ}. Indeed, there are examples of tamed DTIs  $g$ for which $D_{\rm hol}(g)<D(g)$. Therefore, strictly speaking, Definition A.6 does not extend Definition \ref{equ} (or Definition A.2). However, it is stated in the same spirit and seems to often be what is needed, in practice.




\section*{References}



\begin{thebibliography}{ElLaMaRo}\label{refs}


\let\small=\rm
\let\normalsize=\rm



\bibitem[Ba]{baker} A.\ Baker, {\em Transcendental Number Theory}, Cambridge University Press, Cambridge, 1975.




\bibitem[BesTay]{BesTay} A.\ S.\ Besicovitch and S.\ J.\ Taylor, On the complementary
intervals of a linear closed set of zero Lebesgue measure, {\it J.\
London Math.\ Soc.} {\bf 29} (1954), 449--459.

\bibitem[Bou]{bouligand} G.\ Bouligand, Ensembles impropres et nombre dimensionnel,
{\em Bull.\ Sci.\ Math.} (2) {\bf 52} (1928), 320--344 and 361--376.

\bibitem[BroCar]{BroCar} J.\ Brossard and R.\ Carmona, Can one hear the dimension of
a fractal?, {\it Commun.\ Math.\ Phys.} {\bf 104} (1986), 103--122.











\bibitem[Coh]{cohn} D.\ L.\ Cohn, {\em Measure Theory}, Birkh\"auser, Boston, 1980.

\bibitem[Con]{conway} J.\ B.\ Conway,  {\em Functions of One Complex Variable}, Springer-Verlag, New York, 1973.


\bibitem[DolFr]{dolfr}
J.\ D.\ Dollard and C.\ N.\ Friedman, {\em Product Integration, with
Application to Differential Equations}, {Encyclopedia of Mathematics
and Its Applications}, vol.\ 10, Addison-Wesley, Reading,
1979.


\bibitem[DubSep]{DubSep} E.\ Dubon and J.\ M.\ Sepulcre, On the complex dimensions of nonlattice fractal strings in connection with Dirichlet polynomials, {\em J.\ Experimental Math.}\ No.\ 1, {\bf 23} (2014), 13--24.



\bibitem[Es1]{es1} D.\ Essouabri, 
Singularit\'es des s\'eries de Dirichlet associ\'ees \`a des polyn\^omes de plusieurs variables et applications en th\'eorie analytique des nombres,
{\em Ann.\ Inst.\ Fourier} ({\em Grenoble}) {\bf 47} (1996), 429--484.

\bibitem[Es2]{es2} D.\ Essouabri, Zeta functions associated to Pascal's triangle mod $p$, 
{\em Japan J.\ Math.} {\bf 31} (2005), 157--174.

\bibitem[EsLi1]{esli1} D.\ Essouabri and B.\ Lichtin, Zeta functions of discrete self-similar sets, {\em Adv.\ in Math.} {\bf 232} (2013), 142--187.

\bibitem[EsLi2]{esli2} D.\ Essouabri and B.\ Lichtin, $k$-point configurations of discrete self-similar sets, in: {\em Fractal Geometry and Dynamical Systems in Pure and Applied Mathematics~I$:$ Fractals in Pure Mathematics} (D.~Carf\`i, M.\ L.\ Lapidus, E.\ P.\ J.\ Pearse and M.\ van Frankenhuijsen, eds.), Contemporary Mathematics, vol.\ 600, Amer.\ Math.\ Soc., Providence, R.\ I., 2013, pp.\ 21--50.\ (dx.doi.org/10.1090/conm/600/11947.)












\bibitem[Fal1]{falc} K.\ J.\ Falconer, {\em Fractal Geometry$:$ Mathematical Foundations and Applications}, third edition, John Wiley and Sons, Chichester,
2014.\ (First and second editions: 1990 and 2003.)

\bibitem[Fal2]{fal2}
K.\ J.\ Falconer, On the Minkowski measurability of fractals, {\it Proc. Amer. Math. Soc.} {\bf 123} (1995), 1115--1124.



\bibitem[Fed]{federer} H.\ Federer, {\em Geometric Measure Theory}, Springer-Verlag, New York, 1969.

\bibitem[Fol]{folland} G.\ B.\ Folland, {\em Real Analysis$:$ Modern Techniques and Their Applications}, second edition, John Wiley and Sons, New York, 1999.








\bibitem[Gel]{gelfond} A.\ O.\ Gel'fond, {\em Transcendental and Algebraic Numbers}, Dover Phoenix editions, Dover Publications, New York, 1960.






\bibitem[Had]{hadwiger} H.\ Hadwiger, Zur Minkowskischen Dimensions- und Ma{\ss}bestimmung beschr\"ankter Punktmengen des euklidischen Raumes, {\em Math.\ Nachr.} {\bf4} (1951), 202--212.

\bibitem[HamLap]{hamlap} B.\ M.\ Hambly and M.\ L.\ Lapidus, Random fractal strings: their zeta functions, complex dimensions and spectral asymptotics, {\it Trans.\ Amer.\ Math.\ Soc.} No. 1, {\bf 358} (2006), 285--314.




\bibitem[HarvPo]{acta}
R.\ Harvey and J.\ Polking, Removable singularities of solutions of linear partial differential equations, {\it Acta Math.} {\bf 125} (1970), 39--56.


\bibitem[HeLap]{lapidushe}
C.\ Q.\ He and M.\ L.\ Lapidus,
   {\em Generalized Minkowski Content, Spectrum of Fractal Drums, Fractal
   Strings and the Riemann Zeta-Function},
   {\it Memoirs Amer.\ Math.\ Soc.}\ No. 608, {\bf 127} (1997), 1--97.
	
	
\bibitem[HerLap]{HerLa1}
H.\ Herichi and M.\ L.\ Lapidus, 
{\em Quantized Number Theory, Fractal Strings and the Riemann Hypothesis$:$ From Spectral Operators to
 Phase Transitions and Universality}, research monograph, World Scientific, Singapore, 2016, to appear, approx.\ 330 pp.
	








\bibitem[Hor\v Zu]{lana} L.\ Horvat and D.\ \v Zubrini\'c,
Maximally singular Sobolev functions,
{\it J.\ Math.\ Anal.\ Appl.}\ No.\ 2, {\bf304} (2005), 531--541.


\bibitem[JohLap]{johlap}
G.\ W.\ Johnson and M.\ L.\ Lapidus, {\em The Feynman Integral
and Feynman's Operational Calculus}, Oxford Mathematical
Monographs, Oxford Univ.\ Press, Oxford, 2000.\ (Corrected
reprinting and paperback edition, 2002.)











\bibitem[Kne]{kneser} M.\ Kneser, Einige Bemerkungen \"uber das Minkowskische Fl\"achenma{\ss}, {\em Arch.\ Math.} (Basel) {\bf 6} (1955), 382--390.

\bibitem[Kom]{Kom} S.\ Kombrink, A survey on Minkowski measurability of self-similar sets and self-conformal fractals in $\eR^d$, survey article, in: {\em Fractal Geometry and Dynamical Systems in Pure and Applied Mathematics I$:$ Fractals in Pure Mathematics} (D.\ Carfi, M.\ L.\ Lapidus, E.\ P.\ J.\ Pearse and M.\ van Frankenhuijsen, eds.), Contemporary Mathematics, vol.\ {600}, Amer.\ Math.\ Soc., Providence, R.\ I., 2013, pp.\ 135--159.\ (dx.doi.org/10.1090/conm/600/11931.)

\bibitem[KomPeWi]{KomPeWi} S.\ Kombrink, E.\ P.\ J.\ Pearse and S.\ Winter, Lattice-type self-similar sets with pluriphase generators fail to be Minkowski measurable, {\em Math.\ Z.}\ No.\ 3, {\bf 283} (2016), 1049-1070.\ (Also: e-print, {\tt arXiv:1501.03764v1 [math.PR]}, 2015.)




%



\bibitem[Lap1]{Lap1} M.\ L.\ Lapidus, Fractal drum, inverse spectral problems for elliptic operators
and a partial resolution of the Weyl--Berry conjecture, {\it Trans.\ Amer.\ Math.\ Soc.} {\bf 325}
(1991), 465--529.

\bibitem[Lap2]{Lap2} M.\ L.\ Lapidus, Spectral and fractal geometry: From the Weyl-Berry conjecture for the vibrations of fractal drums to the
Riemann zeta-function, in: {\em Differential Equations and Mathematical
Physics} (C. Bennewitz, ed.), Proc. Fourth UAB Internat.\
Conf.\ (Birmingham, March 1990), Academic Press, New
York, 1992, pp.\ 151--182.

\bibitem[Lap3]{Lap3} M.\ L.\ Lapidus, Vibrations of fractal drums, the Riemann hypothesis,
waves in fractal media, and the Weyl--Berry conjecture,
in: {\em Ordinary and Partial Differential Equations} (B.
D.\ Sleeman and R.\ J.\ Jarvis, eds.), vol.\ IV, Proc.\ Twelfth
Internat.\ Conf.\ (Dundee, Scotland, UK, June 1992), Pitman
Research Notes in Mathematics Series, vol.\ {289}, Longman Scientific
and Technical, London, 1993, pp.\ 126--209.



\bibitem[Lap4]{lapz} M.\ L.\ Lapidus, {\em In Search of the Riemann Zeros$:$ Strings, Fractal Membranes and Noncommutative Spacetimes}, Amer.\ Math.\ Soc., Providence, R.\ I., 2008.



%




\bibitem[LapL{\' e}Ro]{LaLeRo} M.\ L.\ Lapidus, J. L{\'e}vy-V{\'e}hel and J. A. Rock, Fractal
strings and multifractal zeta functions, {\it Lett.\ Math.\
Phys.} No.\ 1, {\bf 88} (2009), 101--129 (special issue dedicated
to the memory of Moshe Flato). (Springer Open
Access: DOI 10.1007/s1105-009-0302-y.) (Also: e-print,
{\tt arXiv:math-ph/0610015v3, 2009.})

\bibitem[LapLu]{LaLu} M.\ L.\ Lapidus and H. Lu, The geometry of $p$-adic fractal
strings: A comparative survey, in: {\it Advances in Non-Archimedean Analysis}, Proc.\ 11th Internat.\ Conf.\ on
{\it ``p-Adic Functional Analysis''} (Clermont-Ferrand, France,
July 2010), (J.\ Araujo, B.\ Diarra and A.\ Escassut, eds.), Contemporary
Mathematics, vol.\ {551}, Amer.\ Math.\ Soc., Providence,
R.~I., 2011, pp.\ 163--206. (Also: e-print, {\tt arXiv:1105.2966v1
[math.MG], 2011.})







\bibitem[LapMa]{LapMa2} M.\ L.\ Lapidus and H.\ Maier, The Riemann hypothesis and inverse spectral problems for fractal strings,
{\it J.\ London Math.\ Soc.} (2) {\bf 52} (1995), 15--34.




\bibitem[LapPe]{lappe2}
M.\ L.\ Lapidus and E.\ P.\ J.\ Pearse, Tube formulas and
complex dimensions of self-similar tilings, {\it Acta Applicandae
Mathematicae} No. 1, {\bf 112} (2010), 91--137. (Springer Open
Access: DOI 10.1007/S10440-010-9562-x.) (Also: e-print,
{\tt arXiv:math.DS/0605527v5}, 2010; IHES preprint, {\tt IHES/M/08/27}, 2008.)




\bibitem[LapPeWi]{lappewi1}
M.\ L.\ Lapidus, E.\ P.\ J.\ Pearse and  S.\ Winter, Pointwise tube
formulas for fractal sprays and self-similar tilings with arbitrary
generators, {\it Adv.\ in Math.} {\bf 227} (2011), 1349--1398. (Also:
e-print, {\tt arXiv:1006.3807v3} [math.MG], 2011.)



\bibitem[LapPo1]{lapiduspom}
M.\ L.\ Lapidus and C.\ Pomerance, The Riemann zeta-function and the one-dimensional Weyl--Berry conjecture for fractal drums,
{\it Proc.\ London Math.\ Soc.} (3) {\bf 66} (1993), No.\ 1, 41--69.


\bibitem[LapPo2]{LapPo2} M.\ L.\ Lapidus and C.\ Pomerance, Counterexamples to the modified Weyl-Berry conjecture on fractal drums, {\it Math.\ Proc.\ Cambridge Philos.\ Soc.} {\bf 119} (1996), 167--178.




\bibitem[LapRa\v{Z}u1]{fzf} M.\ L.\ Lapidus, G.\ Radunovi\'c and D.\ \v Zubrini\'c, {\em Fractal Zeta Functions and Fractal Drums$:$ Higher-Dimensional Theory of Complex Dimensions}, Springer Monographs in Mathematics, Springer, New York, 2016, in press, 689 pp.\ (ISBN 978-3-319-44706-3.)



\bibitem[LapRa\v Zu2]{mezf} M.\ L.\ Lapidus, G.\ Radunovi\'c and D.\ \v Zubrini\'c, 
Complex dimensions of fractals and meromorphic extensions of fractal zeta functions, preprint, 2016.\ 
(Also: e-print {\tt arXiv: 1508.04784v2 [math-ph]}, 2016	; IHES preprint, IHES/M/15/15, 2015.)



\bibitem[LapRa\v{Z}u3]{memoir} M.\ L.\ Lapidus, G.\ Radunovi\'c and D.\ \v{Z}ubrini\'c, {Zeta functions and complex dimensions of relative fractal drums$:$ Theory, examples and applications}, preprint, 2016.\ 
(Also: e-print, {\tt arXiv:1603.00946v3 [math-ph]}, 2016.)

\bibitem[LapRa\v Zu4]{cras1} M.\ L.\ Lapidus, G.\ Radunovi\'c and D.\ \v{Z}ubrini\'c,  Fractal tube formulas and a Minkowski measurability criterion for compact subsets of Euclidean spaces.\ (Invited paper for potential publication in a special issue of {\em Discrete and Continuous Dynamical systems, Ser.\ S}, 2017.)\ (Also: e-print, {\tt arXiv:1411.5733v3 [math-ph]}, 2016; IHES preprint, {\tt IHES M/15/17}, 2015.)

\bibitem[LapRa\v Zu5]{cras2} M.\ L.\ Lapidus, G.\ Radunovi\'c and D.\ \v{Z}ubrini\'c,  Fractal tube formulas for compact sets and relative fractal drums: Oscillations, complex dimensions and fractality, {\em Journal of Fractal Geometry}, in press, 2016.\ (Also: e-print, {\tt arXiv:1603.00946v2 [math-ph]}, 2016.)


\bibitem[LapRa\v{Z}u6]{brezish} M.\ L.\ Lapidus, G.\ Radunovi\'c and D.\ \v Zubrini\'c, 
Fractal zeta functions and complex dimensions of relative fractal drums, {\em J.\ Fixed Point Theory and Appl.} No.\ 2, {\bf15} (2014), 321-378. Festschrift issue in honor of Haim Brezis' 70th birthday.  (DOI 10.1007/s11784-014-0207-y.) (Also: e-print, {\tt arXiv:1407.8094v3 [math-ph]}, 2014; IHES preprint, 
{\tt IHES/M/15/14}, 2015.)


\bibitem[LapRa\v{Z}u7]{tabarz} M.\ L.\ Lapidus, G.\ Radunovi\'c and D.\ \v{Z}ubrini\'c, Fractal zeta functions and complex dimensions: A general higher-dimensional theory, survey article, in: {\em 
Fractal Geometry and Stochastics V} (C.\ Bandt, K.\ Falconer and M.\ Z\"ahle, eds.), Proc.\ Fifth Internat.\ Conf.\ (Tabarz, Germany, March 2014), {\em Progress in Probability}, vol.\ {70}, Birkh\"auser/Springer Internat., Basel, Boston and Berlin, 2015, pp.\ 229--257; doi:10.1007/978-3-319-18660-3${}_-$\kern-1pt13.\ (Based on a plenary lecture given by the first author at that conference.) (Also: e-print, {\tt arXiv:1502.00878v3 [math.CV]}, 2015; IHES preprint, 
{\tt IHES/M/15/16}, 2015.)

\bibitem[LapRa\v{Z}u8]{mm} M.\ L.\ Lapidus, G.\ Radunovi\'c and D.\ \v{Z}ubrini\'c, Minkowski measurability criteria for compact sets and relative fractal drums in Euclidean spaces, preprint, 2016.\ (Also: e-print, {\tt arXiv:1609.04498v1 [math-ph]}, 2016.)








\bibitem[Lap-vFr1]{lapidusfrank}
M.\ L.\ Lapidus and M. van Frankenhuijsen, {\em Fractal Geometry and Number Theory}: {\em Complex Dimensions of
Fractal Strings and Zeros of Zeta Functions}, Birkh\"auser, Boston, 2000.



\bibitem[Lap-vFr2]{lapidusfrank12} M.\ L.\ Lapidus and M.\ van Frankenhuijsen,
{\em Fractal Geometry, Complex Dimensions
and Zeta Functions$:$
Geometry and Spectra of Fractal Strings}, second revised and enlarged edition (of the 2006 edn.), Springer Monographs in Mathematics, Springer, New York, 2013.




\bibitem[L\'eMen]{lemen}
J.\ L\'evy-V\'ehel and F.\ Mendivil, Multifractal and higher-dimensional
zeta functions, {\em Nonlinearity} No.\ 1, {\bf24} (2011),
259--276.




\bibitem[Mat]{mattila} P.\ Mattila, {\em Geometry of Sets and Measures in Euclidean
Spaces$:$ Fractals and Rectifiability}, Cambridge Univ.\ Press, Cambridge, 1995.



\bibitem[MorSep]{MorSep} G.\ Mora and J.\ M.\ Sepulcre, Privileged regions in critical strips of non-lattice Dirichlet polynomials, {\em Complex Anal.\ Oper.\ Theory} No.\ 4, {\bf 7} (2013), 1417--1426.


\bibitem[MorSepVi1]{MorSepVi} G.\ Mora, J.\ M.\ Sepulcre and T.\ Vidal, On the existence of exponential polynomials with prefixed gaps, {\em Bull.\ London Math.\ Soc.}\ 
No.\ 6, {\bf 45} (2013), 1148--1162.




\bibitem[MorSepVi2]{MorSepVi2} G.\ Mora, J.\ M.\ Sepulcre and T.\ Vidal, On the existence of fractal strings whose set of dimensions of fractality is not perfect, {\em Rev.\ R.\ Acad.\ Cienc.\ Exactas Fis.\ Nat.\ Ser.\ A Math.\ RACSAM} No. 1, {\bf 109} (2015), 11-14.






\bibitem[Ol1]{Ol1} L.\ Olsen, Multifractal tubes: Multifractal zeta functions, multifractal Steiner tube formulas and explicit formulas, in: {\em Fractal Geometry and Dynamical Systems in Pure and Applied Mathematics I$:$ Fractals in Pure Mathematics} (D.\ Carf\`i, M.\ L.\ Lapidus and M.\ van Frankenhuijsen, eds.), Contemporary Mathematics, vol.\ {600}, Amer.\ Math.\ Soc., Providence, R.\ I., 2013, 
pp.\ 291--326.\ (dx.doi.org/10.1090/conm/600/11920.)

\bibitem[Ol2]{Ol2} L.\ Olsen, Multifractal tubes, in: {\em Further Developments in Fractals and Related Fields}, Trends in Mathematics, Birkh\"auser/Springer, New York, 2013, pp.\ 161--191.

\bibitem[ParSh1]{parsh1} A.\ N.\ Parshin and I.\ R.\ Shafarevich (eds.), {\em Number Theory},
vol.\ I, {\em Introduction to Number Theory}, Encyclopedia of Mathematical
Sciences, vol.\ {49}, Springer-Verlag, Berlin, 1995.\
(Written by Yu.\ I.\ Manin and A.\ A.\ Panchishkin.)

\bibitem[ParSh2]{parsh2} A.\ N.\ Parshin and I.\ R.\ Shafarevich (eds.), {\em Number Theory},
vol.\ II, {\em Algebraic Number Fields}, Encyclopedia of Mathematical
Sciences, vol.\ 62, Springer-Verlag, Berlin, 1992.\ (Written
by H.\ Koch.)


\bibitem[Pe]{pe2}  E.\ P.\ J.\ Pearse, Canonical self-affine tilings by iterated function
systems, {\it Indiana Univ.\ Math.\ J.} No. 6, {\bf 56} (2007), 3151--3169. (Also: e-print, {\tt arXiv:math.MG/0606111}, 2006.)

\bibitem[PeWi]{pewi}
E.\ P.\ J.\ Pearse and S.\ Winter, Geometry of canonical self-similar
tilings, {\it Rocky Mountain J.\ Math.} {\bf 42} (2012), 1327--1357. (Also:
e-print, {\tt arXiv:0811.2187}, 2009.)


\bibitem[Pos]{Pos} A.\ G.\ Postnikov, {\it Tauberian Theory and its Applications}, Proc. Steklov Institute of Mathematics, vol. 144, 1979 (English transl., issue 2, 1980), Amer. Math. Soc., Providence, R.I., 1980.





\bibitem[RatWi]{winter} J.\ Rataj and S.\ Winter, Characterization of Minkowski measurability in terms of surface area, {\it J.\ Math.\ Anal.\ Appl.} {\bf 400} (2013), 120--132.\ (Also:
e-print, {\tt arXiv: 1111.1825v2 [math.CA]}, 2012.)


\bibitem[Res]{maja} M.\ Resman, Invariance of the normalized Minkowski content with respect to the ambient space, {\em Chaos, Solitons \&\ Fractals} {\bf 57} (2013), 123--128.\ (Also: e-print, {\tt arXiv:1207.3279v1}, 2012.)




\bibitem[Ru]{rudin} W.\ Rudin, {\em Real and Complex Analysis}, third edition, McGraw-Hill, New York, 1987.


\bibitem[Ser]{serre} J.-P.\ Serre, {\em A Course in Arithmetic}, English translation,
Springer-Verlag, Berlin, 1973.

\bibitem[Shi]{shishikura} M.\ Shishikura, The Hausdorff dimension of the boundary of the Mandelbrot set and Julia sets, {\it Ann.\ of Math.} {\bf 147} (1998), 225--267.

\bibitem[Sma]{smale} S.\ Smale (1967), Differentiable dynamical systems, {\em Bull.\ Amer.\ Math.\ Soc.} {\bf 73} (1967), 747--817. 

\bibitem[Sta]{stacho} L.\ L. Stach\'o, On the volume function of parallel sets, {\it Acta Sci.\ Math.} {\bf 38} (1976), 365--374.

\bibitem[Tep1]{tep1} A.\ Teplyaev, Spectral zeta functions of symmetric fractals,
in: {\em Fractal Geometry and Stochastics III}, Progress in Probability, vol.\ {57}, Birkh\"auser-Verlag, Basel,
2004, pp.\ 245--262.


\bibitem[Tep2]{tep2} A.\ Teplyaev, Spectral zeta functions of fractals and the complex
dynamics of polynomials, {\em Trans.\ Amer.\ Math.\ Soc.}\ {\bf 359}
(2007), 4339--4358. (Also: e-print, {\tt arXiv:math.SP/0505546},
2005.)



\bibitem[Tit]{tit}
E.\ C. Titchmarsh, {\it The Theory of the Riemann Zeta-Function}, second edition (revised by D.\ R.\ Heath-Brown), Oxford Univ.\ Press, Oxford, 1986.

\bibitem[Tri]{tricot} C.\ Tricot, {\em Curves and Fractal Dimension}, Springer-Verlag, Berlin, 1995.



\bibitem[Vin]{enc}
I.\ M.\ Vinogradov (ed.), {\em Matematicheskaya entsiklopediya}, Sov.\  Entsiklopediya, Moscow, 1979, vol.\ {2}, pp.\ 818--819.\  
(Also: online, English translation, Quasi-periodic function. Yu.\ V.\ Komlenko and E.\ L. Tonkov (authors), {\em Encyclopedia of Mathematics}, public wiki monitored by an editorial board under the management of the European Mathematical Society, URL: {\tt www.encyclopediaofmath.org}\,.)



%
\bibitem[\v Zu1]{mink}
D.\ \v Zubrini\'c,
Minkowski content and singular integrals, {\it Chaos,
Solitons \&\ Fractals} No.\ 1, {\bf 17} (2003), 169--177.


\bibitem[\v Zu2]{rae} D.\ \v Zubrini\'c, Analysis of Minkowski contents of fractal sets and applications, 
{\it Real Anal.\ Exchange} No.\ 2, {\bf 31} (2005/2006), 315--354.






\end{thebibliography}
\end{document}